\theoremstyle{plain}
\newtheorem{theorem}{Theorem}[section]
\newtheorem{lemma}[theorem]{Lemma}
\newtheorem{remark}{Remark}
\newtheorem{proposition}{Proposition}
\newtheorem{corollary}{Corollary}
\newtheorem{assumption}{Assumption}
\newtheorem{notation}{Notation}
\theoremstyle{definition}
\newtheorem{definition}[theorem]{Definition}
\theoremstyle{remark}
\DeclareMathOperator{\hol}{hol}
\DeclareMathOperator{\nodes}{nodes}
\DeclareMathOperator{\Tr}{Tr}
\DeclareMathOperator{\rmi}{\mathrm{i}}
\DeclareMathOperator{\res}{\mathrm{res}}
\DeclareMathOperator{\bad}{\mathrm{bad}}
\DeclareMathOperator{\mult}{\mathrm{mult}}
\DeclareMathOperator{\randomsplit}{\mathrm{randomsplit}}
\DeclareMathOperator{\forgetbase}{\mathrm{forgetbase}}
\DeclareMathOperator{\forgetorder}{\mathrm{forgetorder}}
\DeclareMathOperator{\scp}{\mathrm{scp}}
\DeclareMathOperator{\spt}{\mathrm{split}}
\renewcommand{\tau}{N}
\def\bfm{\mathbf{m}}
\newcommand{\calD}{\mathcal{D}}
\newcommand{\calN}{\mathcal{N}}
\newcommand{\calA}{\mathcal{A}}
\newcommand{\calX}{\mathcal{X}}
\newcommand{\calF}{\mathcal{F}}
\newcommand{\calJ}{\mathcal{J}}
\newcommand{\var}{\mathrm{var}}
\newcommand{\rmE}{\mathrm{E}}
\newcommand{\calH}{\mathcal{H}}
\newcommand{\bv}{\mathbf{v}}
\newcommand{\be}{\mathbf{e}}
\newcommand{\bE}{\mathbf{E}}
\newcommand{\bu}{\mathbf{u}}
\newcommand{\bU}{\mathbf{U}}
\newcommand{\bb}{\mathbf{b}}
\newcommand{\bff}{\mathbf{f}}
\def\cyclepopping{\textsc{CyclePopping}}
\def\verbosecyclepopping{\textsc{VerboseCyclePopping}}
\newcommand{\E}{\mathbb{E}}
\newcommand{\I}{\mathbb{I}}
\newcommand{\calC}{\mathcal{C}}
\newcommand{\calS}{\mathcal{S}}
\newcommand{\calV}{\mathcal{V}}
\newcommand{\calE}{\mathcal{E}}
\newcommand{\calT}{\mathcal{T}}
\newcommand{\calU}{\mathcal{U}}
\newcommand{\BLoops}{\mathrm{BL}}
\newcommand{\Loops}{\mathrm{L}}
\newcommand{\cycles}{\mathrm{cycles}}
\newcommand{\Uone}{{\rm U}(1)}
\renewcommand{\Re}{\operatorname{Re}}
\title{Cycling in the forest\\ with Wilson's algorithm}
\author{M. Fanuel and R. Bardenet\\
Université de Lille, CNRS, Centrale Lille\\
UMR 9189 - CRIStAL, F-59000 Lille, France\\
\texttt{\{michael.fanuel, remi.bardenet\}@univ-lille.fr} 
\\}
\date{\today}
\begin{document}
\maketitle
\begin{abstract}
    We consider a probability measure on cycle-rooted spanning forests (CRSFs) introduced by Kenyon.
    CRSFs are spanning subgraphs, each connected component of which has a unique cycle; they generalize spanning trees.
    A generalization of Wilson's celebrated \cyclepopping{} algorithm for uniform spanning trees has been proposed for CRSFs, and several concise proofs have been given that the algorithm samples from Kenyon's distribution.
    In this survey, we flesh out all the details of such a proof of correctness, progressively generalizing a proof by Marchal for spanning trees. 
    This detailed proof has several interests. 
    First, it serves as a modern tutorial on Wilson's algorithm, suitable for applied probability and computer science audiences. 
    Compared to uniform spanning trees, the more sophisticated motivating application to CRSFs brings forth connections to recent research topics such as loop measures, partial rejection sampling, and heaps of cycles. 
    Second, the detailed proof \emph{à la} Marchal yields the law of the time complexity of the sampling algorithm, shedding light on practical situations where the algorithm is expected to run fast.
    
    
\end{abstract}
\tableofcontents

\section{Introduction \label{sec:intro}}

Random trees and forests in a connected graph have multiple applications. 
In statistical physics, spanning trees (STs) as in Figure~\ref{fig:toy_ST}
have an intimate connection with the large-scale properties of systems exihibiting critical phenomena, such as sandpile models \citep{cori2003sand,ruelle2021sandpile}.
In data science, uniform STs provide preconditioners for Laplacian systems \citep{Fung11,KyngSong,KKS22}, whereas determinantal spanning forests help estimate the regularised inverse graph Laplacian \citep{PABT2020}.
While applications historically started with uniform spanning trees, more sophisticated probability distributions have been investigated by probabilists, which are being transferred to applications. 
For instance, a probability distribution over cycle-rooted spanning forests parametrized by a $\Uone$-connection has been introduced by \cite{kenyon2011}, generalizing uniform STs.
\begin{figure}[b]
    \centering
    \hfill
      \begin{subfigure}[b]{0.25\linewidth}
          \begin{tikzpicture}
              \node (R11) at (0,0) {\textbullet};
  
              \node (R12) at (1,0) {\textbullet};
              \node (R13) at (2,0) {\textbullet};
              \node (R14) at (3,0) {\textbullet};
  
              \node (R21) at (0,-1) {\textbullet};
              \node (R22) at (1,-1) {\textbullet};
              \node (R23) at (2,-1) {\textbullet};
              \node (R24) at (3,-1) {\textbullet};
  
              \node (R31) at (0,-2) {\textbullet};
              \node (R32) at (1,-2) {\textbullet};
              \node (R33) at (2,-2) {\textbullet};
              \node (R34) at (3,-2) {\textbullet};
  
              \draw [thick, draw=gray, opacity=0.2] (R12.center)--(R13.center);
              \draw [thick, draw=gray, opacity=0.2] (R22.center)--(R23.center);
              \draw [thick, draw=gray, opacity=0.2] (R22.center)--(R32.center);
              \draw [thick, draw=gray, opacity=0.2] (R23.center)--(R33.center);
              \draw [thick, draw=gray, opacity=0.2] (R33.center)--(R34.center);
  
            \path [-, ultra thick, draw=gray] (R12.center) edge  (R13.center);
            \path [-, ultra thick, draw=gray] (R32.center) edge  (R33.center);

              \draw [thick, draw=gray, opacity=0.2] (R11.center)--(R12.center);
              \path [-, ultra thick, draw=gray] (R12.center) edge  (R22.center);
              \draw [thick, draw=gray, opacity=0.2] (R12.center)--(R22.center);
              \path [-, ultra thick, draw=gray] (R22.center) edge  (R21.center);
              \draw [thick, draw=gray, opacity=0.2] (R22.center)--(R21.center);
              \path [-, ultra thick, draw=gray] (R21.center) edge  (R11.center);
              \draw [thick, draw=gray, opacity=0.2] (R21.center)--(R11.center);
  
              \path [-,ultra thick, draw=gray] (R31.center) edge  (R21.center);
              \draw [thick, draw=gray, opacity=0.2] (R31.center)--(R21.center);
              \path [-,ultra thick, draw=gray] (R32.center) edge  (R31.center);
              \draw [thick, draw=gray, opacity=0.2] (R32.center)--(R31.center);
  
              \path [-,ultra thick, draw=gray] (R13.center) edge  (R14.center);            
              \draw [thick, draw=gray, opacity=0.2] (R13.center)--(R14.center);
              \draw [thick, draw=gray, opacity=0.2] (R14.center)--(R24.center);
              \path [-,ultra thick, draw=gray] (R24.center) edge  (R23.center);
              \draw [thick, draw=gray, opacity=0.2] (R24.center)--(R23.center);
              \path [-,ultra thick, draw=gray] (R23.center) edge  (R13.center);
              \draw [thick, draw=gray, opacity=0.2] (R23.center)--(R13.center);
  
              \path [-, ultra thick, draw=gray] (R34.center) edge  (R24.center);
              \draw [thick, draw=gray, opacity=0.2] (R34.center)--(R24.center);
  
              \draw [thick, draw=gray, opacity=0.2] (R33.center)--(R32.center);

              \node at (0,0) {\textbullet};
  
              \node at (1,0) {\textbullet};
              \node at (2,0) {\textbullet};
              \node at (3,0) {\textbullet};
  
              \node at (0,-1) {\textbullet};
              \node at (1,-1) {\textbullet};
              \node at (2,-1) {\textbullet};
              \node at (3,-1) {\textbullet};
  
              \node at (0,-2) {\textbullet};
              \node at (1,-2) {\textbullet};
              \node at (2,-2) {\textbullet};
              \node at (3,-2) {\textbullet};
          \end{tikzpicture}
          \caption{\label{fig:toy_ST}}
      \end{subfigure}             
      \hfill
      \begin{subfigure}[b]{0.25\linewidth}
          \begin{tikzpicture}
              \node (R11) at (0,0) {\textbullet};
              \node (R11) at (0,0) {\textbullet};
  
              \node (R12) at (1,0) {\textbullet};
              \node (R13) at (2,0) {\textbullet};
              \node (R14) at (3,0) {\textbullet};
  
              \node (R21) at (0,-1) {\textbullet};
              \node (R22) at (1,-1) {\textbullet};
              \node (R23) at (2,-1) {\textbullet};
              \node (R24) at (3,-1) {\textbullet};
  
              \node (R31) at (0,-2) {\textbullet};
              \node (R32) at (1,-2) {\textbullet};
              \node (R33) at (2,-2) {\textbullet};
              \node (R34) at (3,-2) {\textbullet};
  
              \draw [thick, draw=gray, opacity=0.2] (R12.center)--(R13.center);
              \draw [thick, draw=gray, opacity=0.2] (R22.center)--(R23.center);
              \draw [thick, draw=gray, opacity=0.2] (R22.center)--(R32.center);
              \draw [thick, draw=gray, opacity=0.2] (R23.center)--(R33.center);
              \draw [thick, draw=gray, opacity=0.2] (R33.center)--(R34.center);
  
              \path [-, ultra thick, draw=gray] (R11.center) edge  (R12.center);
              \draw [thick, draw=gray, opacity=0.2] (R11.center)--(R12.center);
              \path [-, ultra thick, draw=gray] (R12.center) edge  (R22.center);
              \draw [thick, draw=gray, opacity=0.2] (R12.center)--(R22.center);
              \path [-, ultra thick, draw=gray] (R22.center) edge  (R21.center);
              \draw [thick, draw=gray, opacity=0.2] (R22.center)--(R21.center);
              \path [-, ultra thick, draw=gray] (R21.center) edge  (R11.center);
              \draw [thick, draw=gray, opacity=0.2] (R21.center)--(R11.center);

              \path [-, ultra thick, draw=gray] (R33.center) edge  (R32.center);
              \draw [thick, draw=gray, opacity=0.2] (R33.center)--(R32.center);
              \path [-, ultra thick, draw=gray] (R31.center) edge  (R21.center);
              \draw [thick, draw=gray, opacity=0.2] (R31.center)--(R21.center);
              \path [-, ultra thick, draw=gray] (R32.center) edge  (R31.center);
              \draw [thick, draw=gray, opacity=0.2] (R32.center)--(R31.center);

              \path [-, ultra thick, draw=gray] (R13.center) edge  (R14.center);            
              \draw [thick, draw=gray, opacity=0.2] (R13.center)--(R14.center);
              \path [-, ultra thick, draw=gray] (R14.center) edge  (R24.center);
              \draw [thick, draw=gray, opacity=0.2] (R14.center)--(R24.center);
              \path [-, ultra thick, draw=gray] (R24.center) edge  (R23.center);
              \draw [thick, draw=gray, opacity=0.2] (R24.center)--(R23.center);
              \path [-, ultra thick, draw=gray] (R23.center) edge  (R13.center);
              \draw [thick, draw=gray, opacity=0.2] (R23.center)--(R13.center);
  
              \path [-, ultra thick, draw=gray] (R34.center) edge  (R24.center);
              \draw [thick, draw=gray, opacity=0.2] (R34.center)--(R24.center);
  
              \path [-, ultra thick, draw=gray] (R33.center) edge  (R32.center);
              \draw [thick, draw=gray, opacity=0.2] (R33.center)--(R32.center);

              \node (R11) at (0,0) {\textbullet};
  
              \node (R12) at (1,0) {\textbullet};
              \node (R13) at (2,0) {\textbullet};
              \node (R14) at (3,0) {\textbullet};
  
              \node (R21) at (0,-1) {\textbullet};
              \node (R22) at (1,-1) {\textbullet};
              \node (R23) at (2,-1) {\textbullet};
              \node (R24) at (3,-1) {\textbullet};
  
              \node (R31) at (0,-2) {\textbullet};
              \node (R32) at (1,-2) {\textbullet};
              \node (R33) at (2,-2) {\textbullet};
              \node (R34) at (3,-2) {\textbullet};
          \end{tikzpicture}
          \caption{\label{fig:toy_CRSF}}
      \end{subfigure}
      \hspace{1.5cm}
  
      \caption{A spanning tree (a) and a cycle-rooted spanning forest (b). \label{fig:toy_examples}}
  \end{figure}
A cycle-rooted spanning forest (CRSF) is a spanning subgraph, in which each connected component has exactly one cycle; see Figure~\ref{fig:toy_CRSF}.
A $\Uone$-connection over a graph is a map that associates to each oriented edge $xy$ a unit-modulus complex number $\phi_{xy}$ such that $\phi_{yx} = \phi_{xy}^*$.
In statistical physics, the connection models a magnetic field, while in data science, it models a relative rotation; cycle-rooted spanning forests drawn from the distribution introduced by \cite{kenyon2011} lead again to preconditioners for a class of linear systems called \emph{magnetic Laplacian}, with applications to semi-supervised machine learning or angular synchronization \citep{FanBar22}.

The popularity of random trees and forests in applications rests on the availability of efficient sampling algorithms for a few key distributions.
The seminal algorithm is \cyclepopping{} by \cite{Wilson96}, a loop-erased random walk from which one can obtain a uniform spanning tree.
Many proofs of the correctness of Wilson's algorithm have been given, from the concise proof in the original paper \citep{Wilson96} to the detailed analysis of Wilson's loop-erased random walk by \cite{Marchal99}, as well as through the so-called \emph{Diaconis-Fulton} or \emph{stacks-of-cards} representation, see e.g. \citep[Section 4.1]{lyons_peres_2017}, or more recently as an instance of partial rejection sampling \citep*{guo2019uniform,jerrum2021fundamentals}.
Among these proofs, the analysis of \cite{Marchal99} stands out as the easiest to follow step by step, and yields the law of the running time of the algorithm as an immediate corollary.
In comparison, other proofs are typically more elegant, but invariably include a high-level statement that is not so easily checked by a non-expert mechanical reader. 

In this survey, we propose to switch the classical stress put on uniform spanning trees to more general cycle-rooted spanning forests, and we contribute a detailed proof \emph{à la} \cite{Marchal99} that Wilson's algorithm samples the law on cycle-rooted spanning forests introduced by \cite{kenyon2011}. 
This law is defined as follows.
Associate to each oriented cycle $c$ of a connected graph $G$ a weight
$
    \alpha(c) \in [0,1].
$
Each edge $ij$ further comes with a positive weight $w_{ij}$, and we let $p_e = w_{ij}/\deg(i)$ for $e=ij$, with $\deg(i)$ the sum of the edge weights incident to $i$.
Consider now a CRSF $\calU$\footnote{We use the letter $\calU$ to emphasize that the CRSF is \emph{unoriented}, whereas \emph{oriented} CRSFs will be considered later on in the paper.}.
For definiteness, assign a (conventional) orientation of its cycles.
The measure of a  CRSF $\calU$ is then defined by the product
\begin{equation}
    \label{eq:proba_CRSF_non_det}
    \mu_{\mathrm{CRSF},\alpha}(\calU)
    = \frac{1}{Z_\alpha} \prod_{e \text{ edge of } \calU} p_e \times
    \prod_{c \substack{\text{ cycle of } \calU }}
    \left(\alpha(c) + \alpha(\overline{c})\right),
\end{equation}
with $Z_\alpha >0$ a normalization factor, and where $\overline{c}$ is the cycle $c$ with flipped orientation.
Note that \eqref{eq:proba_CRSF_non_det} is actually independent of the chosen orientations for edges and cycles.
When a $\Uone$-connection is available, the cycle weight in \eqref{eq:proba_CRSF_non_det} is typically defined as 
\begin{equation}
    \label{eq:det_cycle_weight}
    \alpha(c) = 1-\cos \theta(c),
\end{equation}
where $\theta(c)$ is obtained by letting $\exp \rmi \theta(c) = \prod_{e\in c}\phi_e$, using a fixed orientation of $c$.
When $\alpha(c)$ has the specific form \eqref{eq:det_cycle_weight}, \cite{kenyon2011} showed that the edges in a $\calU$ sampled from \eqref{eq:proba_CRSF_non_det} form a determinantal point process; see \cite{HKPV06}.
\cite{KK2017} have introduced a variant of Wilson's loop-erased random walk that samples \eqref{eq:proba_CRSF_non_det}, where cycles encountered by the walk now have a probability \emph{not} to be popped. 
By analogy with the cases of spanning trees and forests, the algorithm is still called \cyclepopping{}.

This survey can serve as an advanced introduction to \cyclepopping{} for a probability or computer science audience. 
The focus on a non-trivial target distribution shows the degrees of freedom one can have to generalize uniform spanning trees, while preserving Wilson's sampling algorithm. 
On a pedagogical side, CSRFs further connect \cyclepopping{} to many modern notions in probability. 
For instance, a key role in the proof is played by a last-exist decomposition, a notion also used by \cite{LaLi10}. 
Links also naturally appear along the exposition to Poissonian loop ensembles \citep{lawler2004brownian,LeJan11,Sznitman2012,kassel2021covariant}, or combinatorial structures like heaps of cycles \citep{viennot2006heaps}.
These \emph{heaps} are crucial to understand in depth the partial rejection sampling approach to \cyclepopping{} \citep*{guo2019uniform,jerrum2021fundamentals}.
Besides its pedagogical merits, and similarly to the case of uniform spanning trees \citep{Marchal99}, the proof that we present provides for free the law of the running time of the algorithm, an important quantity for computational applications.

This paper is mostly a survey, in the sense that the correctness of \cyclepopping{} for sampling from the distribution \eqref{eq:proba_CRSF_non_det} over cycle-rooted spanning forests has been known since the \emph{stacks-of-cards} proof of \cite{KK2017}.
A sketch of direct proof \emph{à la Marchal} was actually proposed in \cite{Kassel15}, for the particular case \eqref{eq:determinantal_cycle_weights} of determinantal weights.
Our merit is to flesh out all the details of the latter sketch of proof, and generalize it to the target distribution initially proposed by \cite{kenyon2011}, without restriction to determinantal measures.
As a side product, we obtain for the first time the distribution of the running time
\begin{equation}
    T \stackrel{\text{(law)}}{=} n + \sum_{\substack{[\gamma]\in \calX}}|\gamma|,
    \label{eq:law_of_T_Poisson}
\end{equation}
where $n$ is the number of vertices of $G$, $|\gamma|$ is the length of the oriented loop $[\gamma]$, $\calX$ is the Poisson point process on loops of $G$ with intensity $m_\alpha$, and $m_\alpha([\gamma])$ is proportional both to the probability of the simple random walk following $[\gamma]$ and to the weight of any cycle in $[\gamma]$. 
Finally, an additional contribution is to provide an alternative expression for $T$ using a Poisson process of \emph{pyramids} of (popped) oriented cycles.
This approach describes the loop formed by popped oriented cycles as a \emph{heap of cycles with a unique maximal element}, a \emph{pyramid} in the sense of \citet{viennot2006heaps}.

\subsection{Organization}

The paper is organized as follows. 
To give context, we review in Section~\ref{s:trees_and_forests} known results for the time complexity of \cyclepopping{} when cycles are always popped, i.e., when sampling spanning trees and spanning forests.
A formalization of \cyclepopping{} for sampling CRSFs -- including the necessary variables to describe the probability of accepting cycles -- is given in Section~\ref{s:formal_def_cycle_popping}.
Armed with these tools, Section~\ref{sec:SamplingCRSFs} gives the proof of correctness and the law of the sampling time $T$ when the CRSF measure is determinantal, i.e. when the cycle weights in \eqref{eq:proba_CRSF_non_det} are of the form \eqref{eq:det_cycle_weight} and $\theta(c)\in [-\pi/2,\pi/2]$.
Thanks to the introduction of loop measures, Section~\ref{s:Poisson} gives an alternative derivation of the law of $T$ for generic cycle weights in \eqref{eq:proba_CRSF_non_det} satisfying $\alpha(c)\in [0,1]$, so that the measure is not necessarily determinantal.
We also further discuss how the law of $T$ can be described in terms of heaps of cycles popped by the algorithm.
In Section~\ref{sec:prs}, we reinterpret \cyclepopping{} as a Markov chain that progressively builds such a heap, culminating in an intuitive formula for the law of the number of cycles popped by the algorithm.
This reinterpretation is a restatement of the partial rejection sampling algorithm of \cite{guo2019uniform} in the case of CRSFs, highlighting the role of an implicit Markov chain.
Finally, empirical evaluations of the expectation and variance of $T$ are given in Section~\ref{sec:numerics} for a few random $\Uone$-connection graphs, in order to illustrate the results.

\subsection{Notations \label{sec:notations}}

For ease of reference, we gather here our main notations.

\paragraph*{Laplacian.} Consider a connected undirected graph $G$ with vertex set $\calV$ and edge set $\calE$, with $n = |\calV| > 1$ and $m = |\calE|$. 
Often, we implicitly identify $\calV$ with $\{1, \dots, n\}$.
When two vertices $x,y\in \calV$ are connected by an edge, we write $x\sim y$.
In what follows, we assume that $G$ has no multiple edge and no self-loop.
A $\Uone$-\emph{connection} endows each oriented edge $xy$  with a complex phase $\phi_{xy} = \exp(-\rmi \vartheta(xy))$ such that $\vartheta(yx) = -\vartheta(xy)\in [0,2\pi)$.
Let $\mathsf{\Phi}$ be the $n\times n$ matrix such that $(\mathsf{\Phi})_{xy} = \phi_{xy}$ for each $x \sim y$ and $(\mathsf{\Phi})_{xy} = 0$ otherwise.
Similarly, let $\mathsf{W}$ be the symmetric $n\times n$ matrix such that $(\mathsf{W})_{xy} = w_{xy} > 0$ for each $x \sim y$ and $(\mathsf{W})_{xy} = 0$ otherwise.
For $x \in \calV$, let $\deg(x) = \sum_{y: y\sim x} w_{xy}$.
The \emph{magnetic Laplacian} is the $n\times n$ matrix defined by 
\[
    \mathsf{\Delta} = \mathsf{D} - \mathsf{W} \odot \mathsf{\Phi}
\]
where $\mathsf{D}$ is the diagonal matrix such that $\mathsf{D}_{xx} = \deg(x) >0$ for all $x\in \calV$, and where $\mathsf{W}\odot \mathsf{\Phi}$ denotes the entrywise product.
In particular, writing $\mathsf{\Lambda} = \mathsf{D} - \mathsf{W}$ for the usual (combinatorial) Laplacian, we see that $\mathsf{\Delta}$ and $\mathsf{\Lambda}$ coincide when the phase is trivial, i.e. when $\vartheta\equiv 0$.
We refer to e.g.\ \citep{LiLo93,zbMATH01219775,AFST_2011_6_20_3_599_0,Berkolaiko,Fanuel2018MagneticEigenmapsVisualization} for references on the magnetic Laplacian and a motivation for the adjective \emph{magnetic}.

\paragraph*{Transition matrix.} Furthermore, to each oriented edge $xy$, we associate a transition probability $p_{xy}\triangleq \frac{w_{xy}}{\deg(x)}$, which will be used to define Markov chains on the graph.
As we did for the weight matrix $\mathsf{W}$, we also define an $n\times n$ transition matrix $\mathsf{P} = \mathsf{D}^{-1} \mathsf{W}$, so that $(\mathsf{P})_{xy} = p_{xy}$ for $x\sim y$ and $(\mathsf{P})_{xy} = 0$ otherwise.
Finally, for $\calS \subseteq \{1,\dots,\ell\}$ and $\mathsf{M}$ an $\ell\times \ell$ matrix, $\mathsf{M}_{\overline{\calS}}$ is the matrix obtained by \emph{removing} the rows and columns of $\mathsf{M}$ which are indexed by $\calS$.

\paragraph*{Loops.} In what follows, we denote an oriented path in $G$ by $(x_0, \dots, x_k)$, where consecutive nodes are linked by an edge. 
A \emph{based loop} $\gamma$ is an oriented path $\gamma = (x_0, \dots, x_k)$ in the graph $G$, for some integer $k\geq 2$, such that $x_k = x_0$. 
We then call $x_0$ the \emph{base node} of $\gamma$.
A based loop of the form $(x_0, x_1, x_0)$ is called a \emph{backtrack}.
Let $\bar{\gamma}$ be the based loop $\gamma$ with opposite orientation.
For convenience, we also consider based loops of zero length, consisting only of one (base) point.
The number of edges in a based loop $\gamma$ is denoted by $|\gamma|$. 
Two based loops $\gamma = (x_0,x_1, \dots, x_k)$ and $\gamma^\prime = (x_0, x^\prime_1, \dots, x^\prime_{k^\prime})$ based at $x_0$ can be concatenated to yield $\gamma \circ \gamma^\prime = (x_0,x_1, \dots, x_{k-1}, x_0, x^\prime_1, \dots, x^\prime_{k^\prime})$. 
Naturally, $\gamma^m$ denotes the $m$-th power of $\gamma$ w.r.t.\ concatenation. 
Contrary to \citep{LeJan11}, all based loops considered here are discrete.

Next, we define an equivalence relation between based loops by identifying $\gamma = (x_0, \dots, x_k)$ with any based loop {obtained by a $j$-shift of the form
\begin{equation}
    (x_0, \dots, x_{k-1}, x_0) \mapsto (x_j, x_{j+1}, \dots, x_{k-1},x_0, \dots, x_{j-1}, x_j)\label{eq:shift}
\end{equation}
for $0\leq j \leq k-1$; see \citet[Section 9.1]{LaLi10}.}
The corresponding equivalence class $[\gamma]$ is simply a based loop, of which we forget the base node: we call the equivalence class an \emph{unbased loop}, or sometimes simply a \emph{loop}.
By definition,  any representative $\gamma$ of $[\gamma]$ has the same number of edges $|\gamma|$, so that we write $\vert \gamma\vert$ for $\vert[\gamma]\vert$ without ambiguity.
In particular, we also consider the trivial loop of only one point as an unbased loop with zero edge.
Finally, the number of representatives in the class $[\gamma]$ is denoted by $\mathcal{N}_{[\gamma]}$; note that it is not necessarily equal to $|\gamma|$.

\paragraph*{Cycles.} A based loop $c$ is called an oriented \emph{cycle} if it is minimal, i.e., if it does not include another loop. 
In particular, an oriented cycle of length $2$ -- called a backtrack is also considered as an oriented cycle.
Denote by $n_c$ the number of oriented cycles of the graph.
A non-oriented cycle is simply an equivalence class of oriented cycles of length larger than $2$ under orientation flip.
\section{\cyclepopping{} for spanning trees and forests \label{s:trees_and_forests}}
In this section, we outline the results of \cite{Marchal99} on the law of the number of steps of \cyclepopping{} for spanning trees.
The case of spanning forests follows immediately.

A spanning tree (ST) of $G$ is a connected spanning subgraph of $G$ without cycle.
Identifying an ST with its edges, STs can be endowed with a determinantal probability measure \citep{Pemantle91}
\begin{equation}
    \mu_{\mathrm{ST}}(\calF) = \frac{1}{\det(\mathsf{\Lambda}_{\bar{r}})} \prod_{e\in \calF} w_e,
    \label{eq:ST_measure} 
\end{equation}
where $\mathsf{\Lambda} = \mathsf{D} - \mathsf{W}$ is the combinatorial Laplacian of the graph and $r\in \calV$ is any of its nodes.\footnote{
    It turns out that the normalization factor $\det\mathsf{\Lambda}_{\bar{r}}$ does not depend on $r$. This is a consequence of the fact that the rows of $\mathsf{\Lambda}$ sum to zero.}

A spanning forest (SFs) is a spanning subgraph without cycle.
In other words, an SF is a spanning graph, of which each connected component is a tree.
Let $q>0$. 
A notable determinantal measure \citep{AvGaud2018,pilavci2020} on SFs is
\begin{equation}
    \mu_{\mathrm{SF}}(\calF) = \frac{q^{\rho(\calF)}}{\det(\mathsf{\Lambda} + q \I)}  \prod_{e\in \calF} w_e,   
    \label{eq:SF_measure} 
\end{equation}
where $\rho(\calF) = n - | \calF |$ is the number of connected components  of $\calF$.
\cyclepopping{} refers to an algorithm to sample from \eqref{eq:ST_measure} and \eqref{eq:SF_measure}.

\subsection{\cyclepopping{} for spanning trees\label{s:CyclePopping_for_STs}}
A classical approach to sampling from \eqref{eq:ST_measure} is Wilson's original algorithm \citep{Wilson96}, which sequentially grows a rooted tree.
In Wilson's algorithm, an arbitrary node $r\in \calV$, called the \emph{root}, is selected beforehand, as well as an ordering of the remaining nodes.
We initialize the tree as containing only the root node.
Starting from the lowest-order node not already in the tree, a random walker performs a loop-erased random walk (LERW) with edge weights $w_e$, until reaching the tree.
Upon reaching the tree, the walker's trajectory is added to the tree, forming a new branch.
Next, the walker restarts the LERW from the lowest-order node in the complement of the tree, and the process continues until the tree is spanning.
Forgetting the root and the order of the nodes, we obtain a tree drawn from \eqref{eq:ST_measure}.
Since the classical procedure to draw a trajectory from a LERW involves erasing (``popping'') loops as they appear in a simple random walk, the resulting algorithm -- Wilson's algorithm followed by outputting only an unrooted tree -- is called \cyclepopping{}.
The complexity of the algorithm is the total number of steps made by the involved simple random walks, i.e.\ the number of calls to the Markov kernel of the simple random walk.

We start by quoting a fundamental result on the law of the total number of steps for sampling uniform spanning trees, which we shall progressively generalize.
\begin{proposition}[Proposition 1 in \citep{Marchal99}]
    \label{prop:law_of_T}
    Let $G=(V,E)$ be a connected graph without self-loops, and $r\in V$.
    Let $T_r$ be the random number of steps to complete Wilson's algorithm for sampling a spanning tree rooted at $r$.
    Let $\mathsf{P} = \mathsf{D}^{-1} \mathsf{W}$, and $\mathsf{P}_{\bar{r}}$ be the matrix obtained by deleting the $r$-th row and the $r$-th column of $\mathsf{P}$.
    \begin{itemize}
        \item[(i)] For all $t \in (0,1)$, we have
        $
            \E[t^{T_r}] = t^{n-1}\frac{\det(\I - \mathsf{P}_{\bar{r}})}{\det(\I - t\mathsf{P}_{\bar{r}})}.
        $
        \item[(ii)] It holds that
        $
            \E[T_r] = \Tr\left((\I - \mathsf{P}_{\bar{r}})^{-1}\right) = n - 1 + \Tr\left( \mathsf{P}_{\bar{r}}(\I - \mathsf{P}_{\bar{r}})^{-1}\right) .
        $
    \end{itemize}
\end{proposition}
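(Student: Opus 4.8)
The plan is to run the proof through the stack-of-arrows (Diaconis--Fulton) picture of Wilson's algorithm, which is the representation formalized in \cref{s:formal_def_cycle_popping}, and then to identify the generating function $\E[t^{T_r}]$ with a ratio of determinants. To each vertex $v\in\calV\setminus\{r\}$ one attaches an infinite i.i.d.\ stack of arrows, the arrow at level $i$ pointing from $v$ to a neighbour $w$ with probability $(\mathsf{P})_{vw}$; cycle-popping on these stacks produces two random outputs: a \emph{heap of oriented cycles} $\mathcal{C}$ (the popped cycles, recorded together with the levels of the arrows they used), and, once no poppable cycle is left, a spanning tree $\Upsilon$ rooted at $r$. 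By Wilson's cycle-popping lemma, both outputs are independent of the order in which cycles are popped, $\Upsilon\sim\mu_{\mathrm{ST}}$, and the events $\{\mathcal{C}=\mathcal{C}_0,\ \Upsilon=\Upsilon_0\}$ partition the probability space up to the a.s.\ event that the algorithm terminates. The first point I would establish, which is the crux of the Marchal-type argument, is that $T_r$ is a \emph{deterministic} function of $(\mathcal{C},\Upsilon)$: each step of the simple random walk reads the top arrow at the current vertex and moves along it; an arrow is read at most once, because the walk can return to a vertex only by closing a loop whose erasure removes that vertex's current top arrow; and the arrows ever read are exactly the $n-1$ arrows carrying the edges of $\Upsilon$ together with the arrows carrying the edges of the cycles of $\mathcal{C}$. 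Since a cycle of length $\ell$ passes through $\ell$ distinct vertices, counting arrows vertex by vertex gives $T_r=(n-1)+\sum_{c\in\mathcal{C}}|c|=:(n-1)+|\mathcal{C}|$.

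Next I would write down the joint law of $(\mathcal{C},\Upsilon)$. In the stack model the arrows used by the cycles of $\mathcal{C}$ occupy, at each vertex $v$, the lowest $k_v$ levels of its stack, while the edge of $\Upsilon$ leaving $v$ sits at level $k_v+1$; these are disjoint families of independent arrows, so (this is the content of Wilson's lemma) $\Pr[\mathcal{C}=\mathcal{C}_0,\ \Upsilon=\Upsilon_0]=\big(\prod_{c\in\mathcal{C}_0}\prod_{e\in c}p_e\big)\big(\prod_{e\in\Upsilon_0}p_e\big)$, with $p_e$ the transition probability of the oriented edge $e$. Since $T_r$ depends on $(\mathcal{C},\Upsilon)$ only through $|\mathcal{C}|$, this factorization gives $\E[t^{T_r}]=t^{\,n-1}\big(\sum_{\mathcal{C}_0}t^{|\mathcal{C}_0|}\prod_{c\in\mathcal{C}_0}\prod_{e\in c}p_e\big)\big(\sum_{\Upsilon_0}\prod_{e\in\Upsilon_0}p_e\big)$, the first sum running over all heaps of oriented cycles in $\calV\setminus\{r\}$ and the second over all spanning trees rooted at $r$.

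It then remains to evaluate the two sums. For the tree sum, each non-root vertex of a rooted tree is the tail of exactly one edge, so $\prod_{e\in\Upsilon_0}p_e=\prod_{e\in\Upsilon_0}w_e\big/\prod_{v\neq r}\deg(v)$; summing $\prod_e w_e$ over spanning trees equals $\det\mathsf{\Lambda}_{\bar r}$ (the matrix-tree theorem, equivalently the fact that $\mu_{\mathrm{ST}}$ normalizes to $1$), and since $\det(\I-\mathsf{P}_{\bar r})=\det\mathsf{\Lambda}_{\bar r}\big/\prod_{v\neq r}\deg(v)$ the tree sum equals $\det(\I-\mathsf{P}_{\bar r})$. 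For the heap sum I would invoke the inversion lemma for heaps of cycles of \citet{viennot2006heaps}: the generating function of all heaps of cycles is the reciprocal of the signed generating function of \emph{trivial} heaps, i.e.\ families of pairwise vertex-disjoint cycles, and the latter is precisely the permutation expansion of $\det(\I-t\mathsf{P}_{\bar r})$; equivalently, one may use $-\log\det(\I-t\mathsf{P}_{\bar r})=\sum_{k\geq1}\tfrac{t^k}{k}\Tr\mathsf{P}_{\bar r}^{\,k}$ together with the exponential formula, the viewpoint developed in \cref{s:Poisson}. (Both expansions converge for $t\in(0,1]$ since the chain is irreducible on a connected graph, so the spectral radius of $\mathsf{P}_{\bar r}$ is strictly below $1$.) Plugging the two values back in yields part (i). For part (ii), the power series $t\mapsto\E[t^{T_r}]$ has nonnegative coefficients summing to $1$, so Abel's theorem gives $\E[T_r]=\lim_{t\uparrow1}\frac{\rmd}{\rmd t}\E[t^{T_r}]$; differentiating $\log\E[t^{T_r}]=(n-1)\log t-\log\det(\I-t\mathsf{P}_{\bar r})$ and using $\frac{\rmd}{\rmd t}\big(\!-\log\det(\I-t\mathsf{P}_{\bar r})\big)=\Tr\big(\mathsf{P}_{\bar r}(\I-t\mathsf{P}_{\bar r})^{-1}\big)$ yields $\E[T_r]=(n-1)+\Tr\big(\mathsf{P}_{\bar r}(\I-\mathsf{P}_{\bar r})^{-1}\big)$, which rearranges to $\Tr\big((\I-\mathsf{P}_{\bar r})^{-1}\big)$ via $\I+\mathsf{P}_{\bar r}(\I-\mathsf{P}_{\bar r})^{-1}=(\I-\mathsf{P}_{\bar r})^{-1}$.

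I expect the heap sum in the last step to be the main obstacle: recognizing that the generating function of the popped-cycle configurations equals $1/\det(\I-t\mathsf{P}_{\bar r})$ is exactly the place where the combinatorics of cycle-popping — the confluence of the popping rule and the careful enumeration of configurations through Viennot's heaps, or equivalently through a Poissonian loop ensemble — has to be handled with care. Everything else reduces to Wilson's cycle-popping lemma, the matrix-tree theorem, and elementary linear algebra; and, as the paper goes on to show, it is precisely this heap/loop mechanism that survives the passage to $\Uone$-connection graphs and to non-determinantal cycle weights.
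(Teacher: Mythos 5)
Your proof is correct, but it does not follow the route of Marchal's argument that the paper quotes and later generalizes. The paper's machinery (see \cref{sec:proof_correctness}) is a last-exit decomposition of the walker's trajectory combined with the Green generating function $G(t,x,x;\calA)=\left((\I-t\mathsf{P}_{\overline{\calA}})^{-1}\right)_{xx}$, summed over loop lengths and telescoped into a ratio of determinants; the MGF of the running time then falls out by the substitution $w_e\mapsto t w_e$ together with the matrix-tree theorem, exactly as in the proof of \cref{prop:moments}. You instead work in the Diaconis--Fulton stack representation, factor the joint law of (popped cycles, output tree) via Wilson's cycle-popping lemma, and evaluate the heap generating function by Viennot's inversion lemma, recognizing $\det(\I-t\mathsf{P}_{\bar r})$ as the signed sum over trivial heaps. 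This is precisely the complementary viewpoint the paper develops in \cref{sec:heaps} (compare your heap sum with \cref{eq:heap_generating_fct} and \cref{eq:ratio_det_heap_}, and your identity $T_r=(n-1)+\sum_{c\in\mathcal{C}}|c|$ with \cref{corol:law_of_T} and the pyramid decomposition of \cref{prop:pyramids_from_heap}). What the Marchal/Green-function route buys is that it is entirely elementary -- only the Markov property and Neumann series -- and it localizes the computation vertex by vertex, which is what makes the extension to acceptance probabilities $\alpha(c)$ and to $\Uone$-connections mechanical; what your route buys is a cleaner conceptual separation between the output tree and the popped configuration, at the price of invoking the full strength of Wilson's lemma (in particular that the achievable popped configurations are exactly the heaps of cycles avoiding $r$, with product weights) and of Viennot's inversion. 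You correctly identify that step as the crux; it is sound, but it is the one place where a mechanical reader must accept a nontrivial combinatorial lemma rather than a chain of conditional expectations.
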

As noted in \citet[Eq (8)]{Marchal99}, the expected number of steps $\E[T_r]$ to complete Wilson's algorithm is equal to the expectation of the commute time between a $\pi$-random $i$ and the root $r$, where  $\pi(i)= \deg(i)/\sum_{j}\deg(j)$; see also \citet[Thm 2]{Wilson96}. 
In particular, even if \cyclepopping{} forgets the root, the expected complexity does depend on $r$.
Actually, the diagonal element $\left((\I - \mathsf{P}_{\bar{r}})^{-1}\right)_{ii}$ is the inverse of the probability that a random walker starting from $i$ hits the root $r$ before coming back to $i$; see e.g.\ \citep[Eq (9.12) and (9.17)]{LePe2017}.
This suggests that, in order to minimize $\E[T_r]=\Tr\left((\I - \mathsf{P}_{\bar{r}})^{-1}\right)$, the root has to be chosen at a central position in the graph.

Since Proposition~\ref{prop:law_of_T} yields the cumulant generating function for $T_r$, we also note the following consequence.
\begin{corollary}\label{corol:variance_of_T}
    With the notations of Proposition~\ref{prop:law_of_T},
    $$
        \var[T_r] = \Tr\left(\mathsf{P}_{\bar{r}}(\I - \mathsf{P}_{\bar{r}})^{-1}\right) + \Tr\left(\mathsf{P}_{\bar{r}}(\I - \mathsf{P}_{\bar{r}})^{-1}\right)^2.
    $$ 
\end{corollary}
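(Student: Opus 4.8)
The plan is to read the variance off the cumulant generating function, which \cref{prop:law_of_T}(i) already hands us. Abbreviating $A \triangleq \mathsf{P}_{\bar r}$ and substituting $t = e^u$ in part~(i), I would first write the moment generating function $\E[e^{u T_r}] = e^{(n-1)u}\det(\I - A)/\det(\I - e^u A)$, noting that it is analytic in a neighbourhood of $u = 0$: since $G$ is connected, the random walk killed upon hitting $r$ is transient, so $A$ has spectral radius strictly below $1$ (this is precisely what guarantees the invertibility of $\I - A$ used in part~(ii)), and hence $\det(\I - e^u A) \neq 0$ for $u$ near $0$. Taking logarithms and using $\log\det(\cdot) = \Tr\log(\cdot)$ gives the cumulant generating function
\[
    \kappa(u) \;\triangleq\; \log \E[e^{u T_r}] \;=\; (n-1)u + \Tr\log(\I - A) - \Tr\log(\I - e^u A),
\]
so that $\E[T_r] = \kappa'(0)$, recovering part~(ii), and $\var[T_r] = \kappa''(0)$, which is what I want.

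The computational core is then to differentiate $u \mapsto \Tr\log(\I - e^u A)$ twice at $u = 0$. Using $\frac{d}{du}\Tr\log M(u) = \Tr\big[M(u)^{-1}M'(u)\big]$ with $M(u) = \I - e^u A$ gives $\frac{d}{du}\Tr\log(\I - e^u A) = -\Tr\big[(\I - e^u A)^{-1} e^u A\big]$; differentiating once more, with $\frac{d}{du}(\I - e^u A)^{-1} = (\I - e^u A)^{-1} e^u A\,(\I - e^u A)^{-1}$, and evaluating at $u = 0$ yields $-\Tr\big[A(\I-A)^{-1}\big] - \Tr\big[\big(A(\I-A)^{-1}\big)^2\big]$, where I use that $A$ commutes with $(\I-A)^{-1}$. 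Since the affine part $(n-1)u + \Tr\log(\I-A)$ of $\kappa$ has vanishing second derivative, I conclude $\var[T_r] = \kappa''(0) = \Tr\big[\mathsf{P}_{\bar r}(\I-\mathsf{P}_{\bar r})^{-1}\big] + \Tr\big[\big(\mathsf{P}_{\bar r}(\I-\mathsf{P}_{\bar r})^{-1}\big)^2\big]$, the claimed identity.

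As a sanity check and a more transparent alternative, I would also diagonalize: $\mathsf{P}_{\bar r}$ is similar to the symmetric matrix $\mathsf{D}_{\bar r}^{-1/2}\mathsf{W}_{\bar r}\mathsf{D}_{\bar r}^{-1/2}$, hence has real eigenvalues $\lambda_1,\dots,\lambda_{n-1}$ with $|\lambda_i| < 1$; then part~(i) factorizes as $\E[t^{T_r}] = t^{\,n-1}\prod_{i}\frac{1-\lambda_i}{1-t\lambda_i}$, and differentiating $u \mapsto (n-1)u + \sum_i\log(1-\lambda_i) - \sum_i\log(1-e^u\lambda_i)$ twice at $u = 0$ gives $\var[T_r] = \sum_i \frac{\lambda_i}{(1-\lambda_i)^2} = \sum_i \frac{\lambda_i}{1-\lambda_i} + \sum_i \frac{\lambda_i^2}{(1-\lambda_i)^2}$, which is the same trace expression (and, when all $\lambda_i \geq 0$, identifies $T_r$ in law with $n-1$ plus a sum of independent geometric variables, the picture underlying Marchal's analysis). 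I do not expect any genuine obstacle here; the only point worth a sentence of care is the analyticity of $\kappa$ near $u = 0$, i.e. the strict spectral radius bound, which is already implicit in \cref{prop:law_of_T}.
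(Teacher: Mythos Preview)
Your proposal is correct and takes essentially the same approach as the paper: the corollary is stated there as an immediate consequence of the cumulant generating function furnished by \cref{prop:law_of_T}(i), and the detailed computation you give (substitute $t=e^u$, take $\log\det=\Tr\log$, differentiate twice) is precisely what the paper later spells out in the proof of \cref{prop:mean_var_using_cumulants} for the magnetic case. Your eigenvalue sanity check is a nice addition but not needed.
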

The results in Proposition~\ref{prop:law_of_T} and Corollary~\ref{corol:variance_of_T} indicate that the random number of steps $T_r$ only depends on the eigenvalues of the transition matrix $\mathsf{P}_{\bar{r}}$ of the walk absorbed at $r$; see Section~\ref{sec:intro}.
Intuitively, $T_r$ is likely to be small if the largest eigenvalue of $\mathsf{P}_{\bar{r}}$ is much smaller than one.

Note that elegant expressions for the expected number of popped cycles in popping algorithms are derived in \citep{GJ2021} by using the partial rejection sampling framework.
This is a different viewpoint related to combinatorial objects called heaps of cycles.
However, the law of the number of popped cycles is not explicitly determined in \citep{GJ2021}.

\subsection{\cyclepopping{} for spanning forests \label{sec:SF_case}
}
Marchal's result in Proposition~\ref{prop:law_of_T} can be directly extended to the distribution \eqref{eq:SF_measure} on spanning forests.
\cyclepopping{} for \eqref{eq:SF_measure} also corresponds to running Wilson's algorithm, but on an auxiliary graph $G_r$, defined as $G$ where each node is additionally connected to a new node $r$ with a constant edge weight $q>0$.
Then, a random ST $\calF$ rooted at $r$ is sampled using Wilson's algorithm on $G_r$.
Forgetting $r$ in $\calF$ and all edges that connects to $r$, the $ST$ in $G_r$ becomes an SF in $G$, with distribution \eqref{eq:SF_measure} \citep{AvGaud2018}.
\begin{proposition}\label{prop:mean_var_SF}
    Let $q>0$ and let $T$ denote the number of steps to sample an SF with Wilson's algorithm according to \eqref{eq:SF_measure}.
    We have
        $$
            \E[T] = \Tr\left(\mathsf{D}(\mathsf{\Lambda} + q \I)^{-1}\right) + q \Tr\left(\mathsf{\Lambda} + q \I\right)^{-1}
        $$
        and
        $$ 
            \var[T] =  \Tr\left(\mathsf{M}_q (\mathbb{I}- \mathsf{M}_q)^{-1}\right) + \Tr\left(\mathsf{M}_q (\mathbb{I}- \mathsf{M}_q)^{-1}\right)^2,
        $$
        with $\mathsf{M}_q = (\mathsf{D}+q\mathbb{I})^{-1}\mathsf{W}$.
\end{proposition}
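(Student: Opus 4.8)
The plan is to recognize the SF sampler as Wilson's algorithm run on the auxiliary graph $G_r$ described just above the statement, and then to apply Marchal's result (\cref{prop:law_of_T}) together with \cref{corol:variance_of_T} to $G_r$, translating the resulting matrices back to $G$. First I would make $G_r$ explicit: its vertex set is $\calV\cup\{r\}$ and each $x\in\calV$ receives exactly one new incident edge $xr$ of weight $q$. Hence the degree of $x\in\calV$ in $G_r$ is $\deg(x)+q$, so that the degree matrix of $G_r$ restricted to $\calV$ is $\mathsf{D}+q\I$, while the weight matrix of $G_r$ restricted to $\calV\times\calV$ is unchanged and equal to $\mathsf{W}$. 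Consequently, the submatrix of the transition matrix of the simple random walk on $G_r$ obtained by deleting the row and column indexed by $r$ is exactly $(\mathsf{D}+q\I)^{-1}\mathsf{W}=\mathsf{M}_q$. Since $G_r$ is connected and has no self-loop, and since forgetting $r$ together with its incident edges at the end of the algorithm does not change the number of elementary Markov-kernel calls, the running time $T$ of the SF sampler equals the running time $T_r$ of Wilson's algorithm on $G_r$ rooted at $r$; note that $G_r$ has $n+1$ vertices, so that the role of $\mathsf{P}_{\bar r}$ in \cref{prop:law_of_T} is played here by $\mathsf{M}_q$.

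For the expectation, I would invoke \cref{prop:law_of_T}(ii), which gives $\E[T]=\Tr\big((\I-\mathsf{M}_q)^{-1}\big)$, the inverse existing because $\mathsf{M}_q$ is strictly substochastic (row sums $\deg(x)/(\deg(x)+q)<1$). Using $\mathsf{\Lambda}=\mathsf{D}-\mathsf{W}$, one checks $\I-\mathsf{M}_q=(\mathsf{D}+q\I)^{-1}(\mathsf{D}+q\I-\mathsf{W})=(\mathsf{D}+q\I)^{-1}(\mathsf{\Lambda}+q\I)$, so that $(\I-\mathsf{M}_q)^{-1}=(\mathsf{\Lambda}+q\I)^{-1}(\mathsf{D}+q\I)$. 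Expanding this product and using linearity and cyclicity of the trace then yields $\E[T]=\Tr\big(\mathsf{D}(\mathsf{\Lambda}+q\I)^{-1}\big)+q\,\Tr\big((\mathsf{\Lambda}+q\I)^{-1}\big)$, which is the stated formula.

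For the variance, I would apply \cref{corol:variance_of_T} with $\mathsf{P}_{\bar r}$ replaced by $\mathsf{M}_q$; this immediately gives $\var[T]=\Tr\big(\mathsf{M}_q(\I-\mathsf{M}_q)^{-1}\big)+\Tr\big(\mathsf{M}_q(\I-\mathsf{M}_q)^{-1}\big)^2$, i.e.\ exactly the claimed expression, with no further rewriting needed. I do not anticipate any real difficulty here: the only points deserving care are the identification $T=T_r$ (which is precisely the reduction to Wilson's algorithm on $G_r$ recalled around \cref{eq:SF_measure}) and the bookkeeping of the extra vertex when passing from $G$ to $G_r$; the manipulation of $(\I-\mathsf{M}_q)^{-1}$ is a one-line computation.
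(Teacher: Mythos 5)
Your proposal is correct and follows essentially the same route as the paper: reduce to Wilson's algorithm on the auxiliary graph $G_r$, observe that deleting the root's row and column from the transition matrix of $G_r$ yields $\mathsf{M}_q=(\mathsf{D}+q\I)^{-1}\mathsf{W}$, and apply \cref{prop:law_of_T} and \cref{corol:variance_of_T}. You merely spell out the algebraic step $(\I-\mathsf{M}_q)^{-1}=(\mathsf{\Lambda}+q\I)^{-1}(\mathsf{D}+q\I)$ that the paper leaves implicit, which is a welcome addition rather than a deviation.
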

\begin{proof}
     The combinatorial Laplacian $\mathsf{\Lambda}'$ of $G_r$ can be written in block form
    \[  \mathsf{\Lambda}' = 
        \begin{pmatrix}
            \mathsf{\Lambda} + q \I & -q \mathsf{1}\\
            -q \mathsf{1}^\top & n q
        \end{pmatrix} = \mathsf{D}' - \mathsf{W}',
    \]
    where
    \[  
        \mathsf{D}' =     
        \begin{pmatrix}
            \mathsf{D} + q \I & \mathsf{0}\\
            \mathsf{0}^\top & n q
        \end{pmatrix}, 
        \quad
        \mathsf{W}' = 
        \begin{pmatrix}
            \mathsf{W} & -q \mathsf{1}\\
        -q \mathsf{1}^\top & 0
        \end{pmatrix},
    \]
    and $\bm{1}$ is the all-ones vector.
    Upon noting that $\mathsf{D}'_{\bar{r}} = \mathsf{D} + q \I$ and $\mathsf{W}'_{\bar{r}} = \mathsf{W}$, the result follows from Proposition~\ref{prop:law_of_T} and Corollary~\ref{corol:variance_of_T}.
\end{proof}
Notice how the expectation of $T$ in Proposition~\ref{prop:mean_var_SF} includes two terms, one for the number of steps of the walker within $G$, and the other for the number of steps to the auxiliary root of $G_r$. 
As a side note, the latter is also the number of roots in the determinantal point process formed by the roots of this random SF; see \citep{AvGaud2018}. 
\section{\cyclepopping{} for cycle-rooted spanning forests}
\label{s:formal_def_cycle_popping}
To describe the algorithm, we first introduce a Markov chain and a few stopping times.

\subsection{The Markov chain}
\label{s:defining_the_markov_chain}
Let $(X_{n})_{n\geq 0}$ be a simple random walk (SRW) on the graph nodes.
For definiteness, we take the initial distribution $X_0$ to be uniform over the nodes, but we shall most often consider, for $x\in\mathcal{V}$, the law $\mathbb{Q}_x$ of the chain when $X_0\sim\delta_x$; we fix the transition probability to 
\begin{equation}
    \mathbb{Q}_x(X_{1} = y) = \frac{w_{xy}}{\deg(x)} \triangleq  p_{xy}, \label{eq:P}
\end{equation}
as announced in Section~\ref{sec:notations}.
In particular, for $k\geq 1$ and for all $x,y\in \calV$,
$
    \mathbb{Q}_x(X_{k} = y) = (\mathsf{P}^k)_{xy}.
$

To each oriented cycle $c$ in the graph, we also associate a sequence of i.i.d.\  Bernoulli random variables $(B_{c,n})_{n\geq 0}$, with constant success probability $\alpha(c)$. 
For a fixed enumeration $c_1, \dots, c_d$ of the oriented cycles of the graph, we consider the Markov chain\footnote{
    When completing this work, we realized that the description of a similar Markov process was given in the PhD thesis of \citet[Chapter 5]{Constantin23}, who studied infinite volume limits of determinantal and non-determinantal CRSF measures.
}
$$
    Z_n = (X_n, B_{c_1,n}, \dots, B_{c_d,n})_{n\geq 0}.
$$
We still write $\mathbb{Q}_x$ the law of $Z_n$ when $X_0\sim \delta_x$.

\subsection{Hitting and cycle times\label{s:hitting_cycle_time}}
We now describe how the SRW can be conditioned not to intersect a subset of nodes. 

Let $\calA\subset \calV$ be a subset of nodes and let $x\in \calV\setminus \calA$. 
It is customary to define the \emph{first hitting time} as the random variable
\[
    \tau_{\to \calA} = \min\{n\geq 0 \text{ such that } X_{n} \in \calA \}.
\]
It is a \emph{stopping time}, since the event $\{\tau_{\to \calA} = m\}$ only depends on $X_{0}$, $X_{1}$, \dots, $X_{m}$, and is thus in the $\sigma$-algebra $\sigma(Z_1, \dots, Z_m)$.
For later convenience, for $m\geq 0$, we further introduce the \emph{first hitting time after $m$ steps}
\[
    \tau^{(\text{after }m)}_{\to \calA} = \min\{n\geq m \text{ such that } X_{n} \in \calA \},
\]
which is also non-anticipating, and satisfies $\tau^{(\text{after }0)}_{\to \calA} = \tau_{\to \calA}$.

Intuitively, Bernoulli variables will correspond to cycles being accepted after they have been formed by the SRW. 
For $x\in\mathcal{V}$ and a fixed cycle $c$, we thus define 
\[
    \tau_{\circlearrowleft c} = \min\{n\geq 0 \text{ such that } c \subseteq (X_{0}, X_{1},  \dots, X_{n}) \text{ and }  B_{c,n} = 1 \},
\]
that is, the first time a cycle is formed and accepted.
Again, $\tau_{\circlearrowleft c}$ is a stopping time, since  $\{\tau_{\circlearrowleft c} = m\}$ only depends on $Z_{0},Z_{1}, \dots, Z_{m}$, through their components.
Naturally, the value of $\tau_{\circlearrowleft c}$ is at least equal to the length of $c$.

In the same way, we define the \emph{cycle time of $c$ after $m$ steps} by
\[
    \tau^{(\text{after }m)}_{\circlearrowleft c} = \min\{n\geq m \text{ such that } c \subseteq (X_{m}, X_{m+1},  \dots, X_{n}) \text{ and }  B_{c,n} = 1 \}.
\]
Lastly, the first time after $m$ steps at which either a cycle is accepted or the SRW hits $\calA$ is the stopping time\footnote{In this paper, we write for simplicity $a\wedge b = \min \{a,b\}$.}
\[
    \tau^{(\text{after }m)}_{\mathrm{stop},\calA} = \tau^{(\text{after }m)}_{\to \calA} \wedge  \min_{ \text{ cycle }c} \tau^{(\text{after }m)}_{\circlearrowleft c}.
\]
For simplicity, we also write $\tau_{\mathrm{stop},\calA}= \tau^{(\text{after }0)}_{\mathrm{stop},\calA}$.
For later use, we state a lemma whose proof is elementary.
\begin{lemma}\label{lem:stopping_time_after_m}
    The event $\{\tau^{(\text{after }m)}_{\mathrm{stop},\calA} = k\}$ is in the $\sigma$-algebra generated by $Z_{m},Z_{m+1}, \dots, Z_{m+k}$.
\end{lemma}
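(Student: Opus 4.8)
The plan is to unwind the definition of $\tau^{(\text{after }m)}_{\mathrm{stop},\calA}$ step by step, observing at each stage that the relevant event depends only on the finitely many increments $Z_{m},Z_{m+1},\dots,Z_{m+k}$, and then to invoke the stationarity of the Markov chain $Z_n$ to shift the time index. Concretely, I would first treat the two building blocks separately. For the hitting time, the event $\{\tau^{(\text{after }m)}_{\to\calA}=k\}$ with the convention that time is measured from $m$ means $X_{m+k}\in\calA$ while $X_{m+j}\notin\calA$ for $0\le j<k$; this is manifestly an event in $\sigma(X_m,\dots,X_{m+k})\subseteq\sigma(Z_m,\dots,Z_{m+k})$, since each $X_{m+j}$ is a coordinate of $Z_{m+j}$. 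For the cycle time, $\{\tau^{(\text{after }m)}_{\circlearrowleft c}=k\}$ asks that the path $(X_m,\dots,X_{m+k})$ contains the cycle $c$ and that $B_{c,m+k}=1$, with no such earlier-and-accepted occurrence; "$c\subseteq(X_m,\dots,X_{m+k})$" is a statement about those path coordinates, and $B_{c,m+k}$ is a coordinate of $Z_{m+k}$, so again the event lies in $\sigma(Z_m,\dots,Z_{m+k})$.

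Next I would combine these. Since $\tau^{(\text{after }m)}_{\mathrm{stop},\calA}$ is a minimum of $\tau^{(\text{after }m)}_{\to\calA}$ and the finitely many $\tau^{(\text{after }m)}_{\circlearrowleft c}$ (one per oriented cycle $c$, of which there are finitely many, namely $n_c$), the event $\{\tau^{(\text{after }m)}_{\mathrm{stop},\calA}=k\}$ can be written as the event that at least one of these constituent times equals $k$ and all of them are $\ge k$. Each such event lies in $\sigma(Z_m,\dots,Z_{m+k})$ by the previous paragraph, and $\sigma$-algebras are closed under finite unions and intersections, so the combined event is in $\sigma(Z_m,\dots,Z_{m+k})$ as well. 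This already gives the statement as literally phrased. If one additionally wants the "translation-invariance" flavour — that the law of $\tau^{(\text{after }m)}_{\mathrm{stop},\calA}$ under $\mathbb{Q}_x$, restricted to the event $\{X_m=y\}$, matches that of $\tau_{\mathrm{stop},\calA}$ under $\mathbb{Q}_y$ — I would add a short sentence appealing to the strong Markov property at time $m$ together with the fact that the Bernoulli families $(B_{c,n})_n$ are i.i.d.\ and independent of $(X_n)_n$, so that $(Z_{m+j})_{j\ge0}$ conditioned on $X_m=y$ is distributed as $(Z_j)_{j\ge0}$ under $\mathbb{Q}_y$.

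I do not expect any genuine obstacle here: the lemma is essentially a bookkeeping statement, and the only thing to be careful about is the indexing convention (whether $\tau^{(\text{after }m)}$ counts steps from $0$ or from $m$) so that "$=k$" in the statement is consistently interpreted — here it should mean the stopping happens at the $k$-th step \emph{after} $m$, i.e.\ at global time $m+k$, matching $\tau^{(\text{after }0)}=\tau$. The mildly delicate point, if one states it, is spelling out that "$c\subseteq(X_m,\dots,X_{m+k})$" is a measurable function of exactly those coordinates and of nothing earlier; this is immediate once one notes a cycle $c=(v_0,\dots,v_\ell=v_0)$ is contained in a path iff the consecutive vertices $v_0,\dots,v_\ell$ appear as a consecutive subword, which only involves $X_m,\dots,X_{m+k}$. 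Since the proof is this routine, I would keep it to one or two sentences in the actual text, exactly as the authors signal by calling it "a lemma whose proof is elementary."
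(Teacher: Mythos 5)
Your proposal is correct and is exactly the routine verification the authors have in mind: the paper gives no proof at all (it merely declares the lemma "elementary"), and your decomposition into the hitting-time and cycle-time events, each visibly a function of the coordinates of $Z_m,\dots,Z_{m+k}$, followed by closure of $\sigma$-algebras under finite unions and intersections over the finitely many cycles, is the intended argument. Your caveat about indexing is well taken — with the paper's definitions the stopping times are global times, so under the literal reading the event actually lies in the smaller $\sigma$-algebra $\sigma(Z_m,\dots,Z_k)$ and the stated conclusion holds a fortiori, while your "relative time" reading gives the tight version that is the one actually used later (e.g.\ in the last-exit decomposition); either way the lemma and your proof stand.
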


Finally, the algorithm will be simpler to explain if we define as many copies of $(X_n)$ as there are nodes in $G$, denoting by $X_n^x$ a chain with the same Markov kernel as $(X_n)$, but with initial distribution $\delta_x$, $x\in G$. 
Similarly, we define independent streams i.i.d.\ Bernoullis $(B^x_{c,n})_{n\geq 0}$, one stream for each $x\in\calV$ and each cycle $c$ in $G$.
Again, for $x\in\calV$, the stopping times 
$$
    \tau^{x, (\text{after }m)}_{\rightarrow\calA}, \tau^{x, (\text{after }m)}_{\circlearrowleft c}, \text{ and } \tau^{x, (\text{after }m)}_{\mathrm{stop},\calA}
$$    
are defined by replacing $X_n$ by $X_n^x$ and $B_{c,n}$ by $B_{c,n}^x$ in the corresponding definitions.
We denote by $\mathbb P$ the joint law of all chains and all Bernoulli streams.
We are now ready to formalize the sampling algorithm.

\subsection{Formalization of \cyclepopping{} to sample CRSFs \label{sec:formalization}}
\cyclepopping{}$(G,\mathsf{W},\alpha,\iota)$ is a modified version of Wilson's algorithm, which takes the graph $G${, the edge weights $\mathsf{W}$}, the cycle weights $\alpha$ and a bijection $\iota:\{0,\dots,N-1\}\rightarrow \mathcal{V}$ as arguments. 
It defines a growing set of branches and cycle-rooted branches.
Recall that a \emph{branch} (or \emph{path graph}) is a tree with exactly two nodes of  degree $1$, whereas all the other nodes have degree $2$.
Similarly, a \emph{cycle-rooted branch} (or \emph{lasso}) is either a cycle or a cycle to which a branch has been connected at one of its end points.

The algorithm defines an increasing sequence of graphs $\calT_1\subset\calT_2\subset \dots$, with node set $\calV_\ell$ at iteration $\ell$, until we reach the first $\ell$ such that $\calV_\ell=\calV$, where the algorithm terminates.
The pseudocode of \cyclepopping{}$(G,\mathsf{W},\alpha,\iota)$ is given in Algorithm~\ref{a:cyclepopping}.

\begin{algorithm}
    \begin{itemize}
            \item Initialize $\ell=0$, $\calT_0 = (\calV_0,\calE_0) = (\emptyset, \emptyset)$.
            \item While $\calV\setminus \calV_\ell$ is not empty,
            \begin{enumerate}
            \item Set $x_0 = \iota(k_\star)$, with $k_\star = \min\{k  \text{ such that }  \iota(k) \notin \calV_\ell\}$.
            \item Run $(X_{n}^{x_0})_{n\geq 0}$, and erase the loops up to the first time we hit $\calT_\ell$ or a cycle is accepted, that is, up to the stopping time
            \begin{equation*}
                \tau_{\mathrm{stop},\calT_\ell}^{x_0} = \tau_{\to \calT_\ell}^{x_0} \wedge  \min_{ \text{ cycle }c} \tau^{x_0}_{\circlearrowleft c}.\label{e:t_stop}
            \end{equation*}
            Denote by $\calJ$ the branch or lasso corresponding to $(X^{x_0}_{0}, \dots, X^{x_0}_{\tau_{\mathrm{stop},\calT_\ell}})$.\\
            \item Build the new subgraph $\calT_{\ell + 1} = \calT_\ell \cup \calJ$.
            \item If $\calV\setminus \calV_\ell$ is empty, return $\calT = \calT_{\ell+1}$. Otherwise, increment $\ell$ by $1$.
            \end{enumerate}
    \end{itemize}
    \caption{\cyclepopping{}$(G, \mathsf{W}, \alpha, \iota)$}
    \label{a:cyclepopping}
\end{algorithm}
We make an important observation for later use.
\begin{remark}
    \label{rem:shape}
    Any run of \cyclepopping{}$(G,\mathsf{W},\alpha,\iota)$ which outputs a fixed CRSF $\calF$ produces the \emph{same} sequence of subgraphs.
    In other words for a fixed ordering $\iota$ of the nodes, we have a \emph{one-to-one} function
    \begin{equation}
        \mathrm{stages}_\iota(\calF) =
    (\calF_1, \calF_2, \dots, \calF_{\kappa_\iota(\calF)}),\label{eq:map_r_to_branches}
    \end{equation}
    with $\calF_1 \subset \calF_2 \dots \subset \calF_{\kappa_\iota(\calF)} =\calF$ the realizations of $\calT_1 \subset \dots \subset \calT_{\kappa_\iota(\calF)}$ in Algorithm~\ref{a:cyclepopping}.
    This can be seen as follows.
    At the first stage, the oriented CRSF $\calF$ naturally determines an oriented cycle-rooted tree, say $\calF_1$, which connects the first node in the ordering to an oriented cycle of $\calF$.
    This is true since $\calF$ is spanning.
    The second node in the ordering which does not belong to $\calF_1$ is the origin of another oriented subgraph which either connects to $\calF_1$ or is another cycle-rooted tree.
    The union of this new subgraph and $\calF_1$ is $\calF_2$. 
    The rest of the construction proceeds similarly.
    We refer to Figure~\ref{fig:stages} for an illustration.
\end{remark}

\begin{remark}
    Noticeably, Algorithm~\ref{a:cyclepopping} actually generalizes the original version of Wilson's algorithm for sampling an ST with a given root node. 
    Namely, we can attach an auxilliary cycle to the root node and make sure that all cycles have a vanishing acceptance probability except this auxilliary cycle which is accepted with probability one.
    Running Algorithm~\ref{a:cyclepopping} on this auxilliary graph actually samples a (connected) CRSF including only the auxilliary cycle, which yields a rooted ST upon erasing this cycle after completion of the algorithm.
\end{remark}

\begin{figure}[t]
  \centering
    \begin{subfigure}[b]{0.15\linewidth}
        \begin{tikzpicture}
            \node[blue] (R11) at (0,0) {\textbullet};
            \node [blue]  at (0 + 0.5,0 - 0.3) {$\iota(1)$};
            \node (R12) at (1,0) {\textbullet};
            \node (R13) at (2,0) { $\cdot$};
            \node (R14) at (3,0) { $\cdot$};
            \node (R21) at (0,-1) {\textbullet};
            \node (R22) at (1,-1) {\textbullet};
            \node (R23) at (2,-1) { $\cdot$};
            \node (R24) at (3,-1) { $\cdot$};
            \node (R31) at (0,-2) {$\cdot$};
            \node (R32) at (1,-2) {$\cdot$};
            \node (R33) at (2,-2) {$\cdot$};
            \node (R34) at (3,-2) { $\cdot$};
            \draw [thick, draw=gray, opacity=0.2] (R12.center)--(R13.center);
            \draw [thick, draw=gray, opacity=0.2] (R22.center)--(R23.center);
            \draw [thick, draw=gray, opacity=0.2] (R22.center)--(R32.center);
            \draw [thick, draw=gray, opacity=0.2] (R23.center)--(R33.center);
            \draw [thick, draw=gray, opacity=0.2] (R33.center)--(R34.center);
            \path [->, ultra thick] (R11) edge  (R12);
            \draw [thick, draw=gray, opacity=0.2] (R11.center)--(R12.center);
            \path [->, ultra thick] (R12) edge  (R22);
            \draw [thick, draw=gray, opacity=0.2] (R12.center)--(R22.center);
            \path [->, ultra thick] (R22) edge  (R21);
            \draw [thick, draw=gray, opacity=0.2] (R22.center)--(R21.center);
            \path [->, ultra thick] (R21) edge  (R11);
            \draw [thick, draw=gray, opacity=0.2] (R21.center)--(R11.center);
            \path [->, thick, dashed] (R31) edge  (R21);
            \draw [thick, draw=gray, opacity=0.2] (R31.center)--(R21.center);
            \path [->, thick, dashed] (R32) edge  (R31);
            \draw [thick, draw=gray, opacity=0.2] (R32.center)--(R31.center);
            \path [->, thick, dashed] (R13) edge  (R14);            
            \draw [thick, draw=gray, opacity=0.2] (R13.center)--(R14.center);
            \path [->, thick, dashed] (R14) edge  (R24);
            \draw [thick, draw=gray, opacity=0.2] (R14.center)--(R24.center);
            \path [->, thick, dashed] (R24) edge  (R23);
            \draw [thick, draw=gray, opacity=0.2] (R24.center)--(R23.center);
            \path [->, thick, dashed] (R23) edge  (R13);
            \draw [thick, draw=gray, opacity=0.2] (R23.center)--(R13.center);
            \path [->, thick, dashed] (R34) edge  (R24);
            \draw [thick, draw=gray, opacity=0.2] (R34.center)--(R24.center);
            \path [->, thick, dashed] (R33) edge  (R32);
            \draw [thick, draw=gray, opacity=0.2] (R33.center)--(R32.center);
        \end{tikzpicture}
        \caption{$\calF_1$}
    \end{subfigure}
    \hfill
    \begin{subfigure}[b]{0.15\linewidth}
        \begin{tikzpicture}
            \node (R11) at (0,0) {\textbullet};
            \node (R12) at (1,0) {\textbullet};
            \node (R13) at (2,0) { $\cdot$};
            \node (R14) at (3,0) { $\cdot$};

            \node (R21) at (0,-1) {\textbullet};
            \node (R22) at (1,-1) {\textbullet};
            \node (R23) at (2,-1) {$\cdot$};
            \node (R24) at (3,-1) { $\cdot$};

            \node (R31) at (0,-2) { \textbullet};
            \node (R32) at (1,-2) { \textbullet};
            \node [blue] (R33) at (2,-2) { \textbullet};
            \node [blue]  at (2,-2 + 0.3) {$\iota(2)$};
            \node (R34) at (3,-2) { $\cdot$};

            \draw [thick, draw=gray, opacity=0.2] (R12.center)--(R13.center);
            \draw [thick, draw=gray, opacity=0.2] (R22.center)--(R23.center);
            \draw [thick, draw=gray, opacity=0.2] (R22.center)--(R32.center);
            \draw [thick, draw=gray, opacity=0.2] (R23.center)--(R33.center);
            \draw [thick, draw=gray, opacity=0.2] (R33.center)--(R34.center);

            \path [->, ultra thick] (R11) edge  (R12);
            \draw [thick, draw=gray, opacity=0.2] (R11.center)--(R12.center);
            \path [->, ultra thick] (R12) edge  (R22);
            \draw [thick, draw=gray, opacity=0.2] (R12.center)--(R22.center);
            \path [->, ultra thick] (R22) edge  (R21);
            \draw [thick, draw=gray, opacity=0.2] (R22.center)--(R21.center);
            \path [->, ultra thick] (R21) edge  (R11);
            \draw [thick, draw=gray, opacity=0.2] (R21.center)--(R11.center);

            \path [->, ultra thick] (R33) edge  (R32);
            \draw [thick, draw=gray, opacity=0.2] (R33.center)--(R32.center);
            \path [->, ultra thick] (R31) edge  (R21);
            \draw [thick, draw=gray, opacity=0.2] (R31.center)--(R21.center);
            \path [->, ultra thick] (R32) edge  (R31);
            \draw [thick, draw=gray, opacity=0.2] (R32.center)--(R31.center);

            \path [->, thick, dashed] (R31) edge  (R21);
            \draw [thick, draw=gray, opacity=0.2] (R31.center)--(R21.center);
            \path [->, thick, dashed] (R32) edge  (R31);
            \draw [thick, draw=gray, opacity=0.2] (R32.center)--(R31.center);

            \path [->, thick, dashed] (R13) edge  (R14);            
            \draw [thick, draw=gray, opacity=0.2] (R13.center)--(R14.center);
            \path [->, thick, dashed] (R14) edge  (R24);
            \draw [thick, draw=gray, opacity=0.2] (R14.center)--(R24.center);
            \path [->, thick, dashed] (R24) edge  (R23);
            \draw [thick, draw=gray, opacity=0.2] (R24.center)--(R23.center);
            \path [->, thick, dashed] (R23) edge  (R13);
            \draw [thick, draw=gray, opacity=0.2] (R23.center)--(R13.center);

            \path [->, thick, dashed] (R34) edge  (R24);
            \draw [thick, draw=gray, opacity=0.2] (R34.center)--(R24.center);

            \path [->, thick, dashed] (R33) edge  (R32);
            \draw [thick, draw=gray, opacity=0.2] (R33.center)--(R32.center);
        \end{tikzpicture}
        \caption{$\calF_2$}
    \end{subfigure}
    \hfill
    \begin{subfigure}[b]{0.15\linewidth}
        \begin{tikzpicture}
            \node (R11) at (0,0) {\textbullet};
            \node (R11) at (0,0) {\textbullet};

            \node (R12) at (1,0) {\textbullet};
            \node (R13) at (2,0) {\textbullet};
            \node (R14) at (3,0) {\textbullet};

            \node (R21) at (0,-1) {\textbullet};
            \node (R22) at (1,-1) {\textbullet};
            \node (R23) at (2,-1) {\textbullet};
            \node (R24) at (3,-1) {\textbullet};

            \node (R31) at (0,-2) {\textbullet};
            \node (R32) at (1,-2) {\textbullet};
            \node (R33) at (2,-2) {\textbullet};
            \node [blue] (R34) at (3,-2) {\textbullet};
            \node [blue]  at (3-0.4,-2 + 0.2) {$\iota(3)$};

            \draw [thick, draw=gray, opacity=0.2] (R12.center)--(R13.center);
            \draw [thick, draw=gray, opacity=0.2] (R22.center)--(R23.center);
            \draw [thick, draw=gray, opacity=0.2] (R22.center)--(R32.center);
            \draw [thick, draw=gray, opacity=0.2] (R23.center)--(R33.center);
            \draw [thick, draw=gray, opacity=0.2] (R33.center)--(R34.center);

            \path [->, ultra thick] (R11) edge  (R12);
            \draw [thick, draw=gray, opacity=0.2] (R11.center)--(R12.center);
            \path [->, ultra thick] (R12) edge  (R22);
            \draw [thick, draw=gray, opacity=0.2] (R12.center)--(R22.center);
            \path [->, ultra thick] (R22) edge  (R21);
            \draw [thick, draw=gray, opacity=0.2] (R22.center)--(R21.center);
            \path [->, ultra thick] (R21) edge  (R11);
            \draw [thick, draw=gray, opacity=0.2] (R21.center)--(R11.center);

            \path [->, ultra thick] (R33) edge  (R32);
            \draw [thick, draw=gray, opacity=0.2] (R33.center)--(R32.center);
            \path [->, ultra thick] (R31) edge  (R21);
            \draw [thick, draw=gray, opacity=0.2] (R31.center)--(R21.center);
            \path [->, ultra thick] (R32) edge  (R31);
            \draw [thick, draw=gray, opacity=0.2] (R32.center)--(R31.center);

            \path [->, ultra thick] (R13) edge  (R14);            
            \draw [thick, draw=gray, opacity=0.2] (R13.center)--(R14.center);
            \path [->, ultra thick] (R14) edge  (R24);
            \draw [thick, draw=gray, opacity=0.2] (R14.center)--(R24.center);
            \path [->, ultra thick] (R24) edge  (R23);
            \draw [thick, draw=gray, opacity=0.2] (R24.center)--(R23.center);
            \path [->, ultra thick] (R23) edge  (R13);
            \draw [thick, draw=gray, opacity=0.2] (R23.center)--(R13.center);

            \path [->, ultra thick] (R34) edge  (R24);
            \draw [thick, draw=gray, opacity=0.2] (R34.center)--(R24.center);

            \path [->, ultra thick] (R33) edge  (R32);
            \draw [thick, draw=gray, opacity=0.2] (R33.center)--(R32.center);
        \end{tikzpicture}
        \caption{$\calF_3 = \calF$}
    \end{subfigure}
    \hspace{1.5cm}
    \caption{Illustration of $\mathrm{stages}_\iota(\calF) =
     (\calF_1, \calF_2, \calF)$ given in \eqref{eq:map_r_to_branches} on a CRSF $\calF$ of a $3\times4$ grid graph.
    The first three nodes in the order given by $\iota$ are colored in blue; the remainder of the ordering is not displayed. 
    At each stage of the growing sequence, the subgraph appears in bold whereas its complement in $\calF$ is represented with dashed lines.
    Here $\kappa_{\iota}(\calF) = 3$. \label{fig:stages}}
\end{figure}
\section{Proof of correctness in the determinantal case}\label{sec:SamplingCRSFs}
In this section, we prove that \cyclepopping{} outputs a sample from measure \eqref{eq:proba_CRSF_non_det} in the particular case where the cycle weights derive from a $\Uone$-connection $\phi$, as introduced by \cite{kenyon2011}.
We actually flesh out a sketch of proof given by \cite{Kassel15}, itself based on the algebraic proof for spanning trees by \cite{Marchal99}; see Section~\ref{s:CyclePopping_for_STs}.
A byproduct of the proof is that it allows us to investigate the complexity of \cyclepopping{}.
\subsection{A determinantal probability measure over CRSFs}
Given a $\Uone$-connection $\phi$ on the edges of $G$, we first define the holonomy of a based loop $\gamma = (x_0, \dots, x_k)$ as the unit-modulus complex number
\begin{equation}
    \label{e:holonomy}
    \hol(\gamma) = \phi_{x_0 x_1} \dots \phi_{x_{k-1}x_0} \triangleq \exp(-\rmi \theta(\gamma)).
\end{equation}
In words, $\theta(\gamma)$ is seen as the angular inconsistency obtained by composing the edge angles along $\gamma$. 
Holonomies are conjugated under orientation flip, and the holonomy of a backtrack is always equal to $1$.
\begin{assumption}[non-trivial connection]\label{ass:non-trivial}
    There exists at least one cycle $c$ such that $\hol(c) \neq 1$.
\end{assumption}
This assumption is crucial to define the determinantal measure on CRSFs studied in this paper.
If Assumption~\ref{ass:non-trivial} does not hold, the magnetic Laplacian is unitarily equivalent to the combinatorial Laplacian.
An important remark is that Assumption~\ref{ass:non-trivial} implies that $\mathsf{\Delta}$ is nonsingular, as can be seen by considering the quadratic form associated to $\mathsf{\Delta}$; see e.g.\ \citep{AFST_2011_6_20_3_599_0}.

For technical reasons that will become clear below, we further need to require that the \emph{sign-flipped} connection $xy \mapsto e^{\rmi \pi} \phi_{xy}$ is non-trivial, so that its magnetic Laplacian is also non-singular.
\begin{assumption}[non-trivial sign-flipped connection]
    \label{ass:non-trivial-sign-flipped}
    There is at least one cycle $c$ such that $\hol(c) \neq (-1)^{|c|}$.
\end{assumption}

Denote by $\calU$ a CRSF. 
Under Assumption~\ref{ass:non-trivial}, we define a probability measure on  CRSFs as
\begin{equation}
    \label{eq:proba_CRSF}
    \mu_{\mathrm{CRSF}}(\calU) =
     \frac{1}{\det (\mathsf{\Delta})}\prod_{e\in \calU} w_e \times
    \prod_{\substack{\text{non-oriented}\\\text{cycle }c \subseteq \calU }}
    2 \cdot (1 - \cos \theta(c)).
\end{equation}
That \eqref{eq:proba_CRSF} is well-normalized is a consequence of the generalized matrix-tree theorem of \citet{FORMAN199335}.

Define the weighted and \emph{oriented} edge-vertex incidence matrix $\mathsf{B}\in\mathbb{C}^{m\times n}$ given by
\begin{align}
    (\mathsf{B})_{e,v}
    = \begin{cases}
        \sqrt{w_e}  & \text{ if $e= vu$ for some $u$,} \\
        -\phi_{e}^*\sqrt{w_e} & \text{ if $e= uv$ for some $u$,} \\
        0          & \text{ otherwise.}
    \end{cases}\label{eq:mag_incidence}
\end{align}
As shown by \citet{kenyon2011}, the probability measure \eqref{eq:proba_CRSF} is determinantal over the graph edges with correlation kernel
\[
    \mathsf{K} = \mathsf{B} \mathsf{\Delta}^{-1} \mathsf{B}^*,
\]
that is, the measure of a subset $\calS$ of the edges of a CRSF $\calU$ is $\det(\mathsf{K}_\calS)$. 

Two remarks are in order.
First, the measure \eqref{eq:proba_CRSF} favors CRSFs that include edges with large weights, and cycles with large inconsistencies.
Second, a natural question arising from scrutinizing \eqref{eq:proba_CRSF} goes as follows: what is the distribution of the cycles of a CRSF distributed according to the determinantal measure ? 
Proposition~\ref{prop_incidence_prob_cycles} gives the incidence probabilities of sets of non-concurrent cycles in the samples of \eqref{eq:proba_CRSF}.
\begin{proposition}[determinantal formula for cycles in determinantal CRSFs]\label{prop_incidence_prob_cycles}
    Let $\calU$ be a CRSF. 
    Define by $\cycles(\calU)$ the set of  cycles of $\calU$.
    Under Assumption~\ref{ass:non-trivial}, let $\calT$ be a random CRSF distributed according to the determinantal measure \eqref{eq:proba_CRSF}.
    Let $\calC$ be a set of non-concurrent cycles. 
    We have
    \[
        \mathbb{P}(\calC \subseteq \cycles(\calT)) = \nu(\calC) \det (\Delta^{-1})_{\nodes(\calC)},
    \]
    where $\nodes(\calC)$  is the set of nodes in $\calC$ and with the weight 
    $$
     \nu(\calC) = \prod_{e\in \calC} w_e \prod_{c \in \calC}(2-2 \cos \theta(c)).
    $$
\end{proposition}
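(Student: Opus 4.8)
The plan is to exploit directly the determinantal structure of $\mu_{\mathrm{CRSF}}$, with kernel $\mathsf{K}=\mathsf{B}\mathsf{\Delta}^{-1}\mathsf{B}^{*}$, together with one elementary combinatorial remark: in a CRSF, a cycle $c$ belongs to $\cycles(\calT_{\mathrm{un}})$ if and only if every edge of $c$ belongs to $\calT_{\mathrm{un}}$ (one direction is trivial; conversely, if all edges of $c$ lie in $\calT_{\mathrm{un}}$ then the connected component of $\calT_{\mathrm{un}}$ containing them already contains the cycle $c$, and since each component of a CRSF carries exactly one cycle, $c$ must be it). Writing $\calU=\bigcup_{c\in\calC}E(c)$ for the union of the edge sets of the cycles in $\calC$ --- a disjoint union, since $\calC$ is non-concurrent --- this gives $\{\calC\subseteq\cycles(\calT_{\mathrm{un}})\}=\{\calU\subseteq\calT_{\mathrm{un}}\}$, hence
\[
\mathbb{P}\big(\calC\subseteq\cycles(\calT_{\mathrm{un}})\big)=\mathbb{P}\big(\calU\subseteq\calT_{\mathrm{un}}\big)=\det(\mathsf{K}_{\calU})=\det\!\big(\mathsf{B}_{\calU}\mathsf{\Delta}^{-1}\mathsf{B}_{\calU}^{*}\big),
\]
where $\mathsf{B}_{\calU}$ denotes the submatrix of $\mathsf{B}$ with rows indexed by $\calU$. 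Non-concurrency is used precisely so that $\calU$ is a disjoint union of cycles with $|\calU|=\sum_{c\in\calC}|c|=|\nodes(\calC)|$.

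The second step is to turn this into a product of honest (square) determinants. Since the arbitrary orientations fixed in \eqref{eq:mag_incidence} are immaterial for the principal minors of $\mathsf{K}$ (changing one rescales a row of $\mathsf{B}$ by a unit-modulus scalar, i.e.\ conjugates $\mathsf{K}$ by a diagonal unitary), I would choose them so that, for every $c\in\calC$, the edges of $c$ are oriented consistently along $c$; this is possible because the cycles in $\calC$ are edge-disjoint. A row of $\mathsf{B}$ indexed by an edge of $\calC$ is supported on the two endpoints of that edge, which lie in $S\triangleq\nodes(\calC)$; hence $\mathsf{B}_{\calU}$ is supported on the columns in $S$ and $\mathsf{K}_{\calU}=\mathsf{B}_{\calU,S}\,(\mathsf{\Delta}^{-1})_{S}\,\mathsf{B}_{\calU,S}^{*}$, where now $\mathsf{B}_{\calU,S}$ is a square $|\calU|\times|\calU|$ matrix (this is what spares us a Cauchy--Binet expansion). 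Moreover, after ordering edges and vertices cycle by cycle, $\mathsf{B}_{\calU,S}$ is block-diagonal with blocks $\mathsf{B}_{E(c),V(c)}$, so that, using $\det(A^{*})=\overline{\det A}$ and the positivity of $\mathsf{\Delta}$ under \cref{ass:non-trivial} (whence $(\mathsf{\Delta}^{-1})_{S}$ is Hermitian with real positive determinant),
\[
\det(\mathsf{K}_{\calU})=\big|\det(\mathsf{B}_{\calU,S})\big|^{2}\det\!\big((\mathsf{\Delta}^{-1})_{S}\big)=\det\!\big((\mathsf{\Delta}^{-1})_{\nodes(\calC)}\big)\prod_{c\in\calC}\big|\det(\mathsf{B}_{E(c),V(c)})\big|^{2}.
\]

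The third step is the single-cycle computation. For $c=(x_{0},\dots,x_{k-1},x_{0})$ with edges $e_{i}=x_{i-1}x_{i}$ oriented along $c$, the matrix $\mathsf{B}_{E(c),V(c)}$ is cyclic bidiagonal, with entries $\sqrt{w_{e_{i}}}$ and $-\phi_{x_{i-1}x_{i}}^{*}\sqrt{w_{e_{i}}}$ on row $e_{i}$; expanding the determinant along any column leaves only two nonzero terms and gives $\det(\mathsf{B}_{E(c),V(c)})=\big(\prod_{e\in c}\sqrt{w_{e}}\big)\big(1-\overline{\hol(c)}\big)$. Since $\hol(c)=\exp(-\rmi\theta(c))$, this yields $\big|\det(\mathsf{B}_{E(c),V(c)})\big|^{2}=\big(\prod_{e\in c}w_{e}\big)\,|1-e^{\rmi\theta(c)}|^{2}=\big(\prod_{e\in c}w_{e}\big)(2-2\cos\theta(c))$. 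Substituting into the previous display and collecting the edge-weight products over the (edge-disjoint) cycles of $\calC$ gives exactly $\det(\mathsf{K}_{\calU})=\nu(\calC)\,\det((\mathsf{\Delta}^{-1})_{\nodes(\calC)})$, which is the claim. One could also quote this single-cycle identity from the computations behind the generalized matrix--tree theorem of \citet{FORMAN199335}/\citet{kenyon2011} rather than redo it.

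I do not expect a deep obstacle; the only real care is bookkeeping. The two points to get right are (i) the reduction of the second step to the square, block-diagonal matrix $\mathsf{B}_{\calU,S}$, which rests on $\calC$ being non-concurrent so that $|\calU|=|\nodes(\calC)|$ and the blocks do not interact, and (ii) the arbitrary edge orientations baked into $\mathsf{B}$, which I handle by choosing them conveniently up front and by noting that they only ever affect $\det(\mathsf{B}_{E(c),V(c)})$ by a unit-modulus factor, hence are invisible in $|\cdot|^{2}$ and in the principal minors of $\mathsf{K}$. The degenerate length-$2$ ``backtrack'' cycles need no special treatment, since their holonomy is $1$ and so their weight $2-2\cos\theta(c)$ vanishes; in any event the non-oriented cycles occurring in CRSFs have length $\geq 3$.
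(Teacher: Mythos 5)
Your proof is correct, but it takes a genuinely different route from the one in the paper. The paper's proof (Appendix A) never touches the kernel $\mathsf{K}$: it invokes the all-minors version of Forman's matrix-tree theorem, $\det(\mathsf{\Delta}_{\overline{\calA}})=\sum_{\calF_{\mathrm{un}}\in\calS(\overline{\calA})}\prod_e w_e\prod_c(2-2\cos\theta(c))$, peels off the factor $\nu(\calC)$ from the sum over CRSFs containing $\calC$, recognizes the remaining sum as $\det(\mathsf{\Delta}_{\overline{\nodes(\calC)}})$, and finishes with the block-inverse identity $\det(\mathsf{\Delta}_{\overline{\nodes(\calC)}})/\det(\mathsf{\Delta})=\det((\mathsf{\Delta}^{-1})_{\nodes(\calC)})$. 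You instead start from the determinantal property of the edge process with kernel $\mathsf{K}=\mathsf{B}\mathsf{\Delta}^{-1}\mathsf{B}^*$ (which the paper quotes from Kenyon but does not prove), reduce cycle inclusion to edge-set inclusion via the \emph{one cycle per component} property of CRSFs, and compute $\det(\mathsf{K}_\calU)$ directly by exploiting $|\calU|=|\nodes(\calC)|$ to avoid Cauchy--Binet; the per-cycle determinant $|\det(\mathsf{B}_{E(c),V(c)})|^2=(\prod_{e\in c}w_e)(2-2\cos\theta(c))$ is exactly the local computation underlying Forman's theorem. Each approach has its merits: the paper's is self-contained given the matrix-tree theorem it needs anyway for normalization, while yours makes the appearance of the principal minor $(\mathsf{\Delta}^{-1})_{\nodes(\calC)}$ structurally transparent (it is just the DPP inclusion formula) at the price of assuming the form of $\mathsf{K}$. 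Your handling of the two delicate points — the square block-diagonal reduction of $\mathsf{B}_{\calU,S}$, and the immateriality of the reference edge orientations (a diagonal unitary conjugation of $\mathsf{K}$) — is sound, as is the remark that backtracks cannot occur since $G$ has no multiple edges.
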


Incidentally, Proposition~\ref{prop_incidence_prob_cycles} is a magnetic analogue for the expression of the process of roots of the determinantal spanning forest \eqref{eq:SF_measure} as given by \citet{AvGaud2018}.
Its proof is given in Appendix.

Let us now discuss sampling algorithms. 
Since the measure \eqref{eq:proba_CRSF} is determinantal, it is thus licit to use the algebraic algorithm of \citet{HKPV06} to sample from it.
In this paper, we take a different route to sampling \eqref{eq:proba_CRSF} and consider a variant of \cyclepopping{} introduced by \citet{KK2017}.
\cyclepopping{} has several interesting features such as being completely graph-based, decentralized, and robust to numerical errors; see e.g.\ \citep{FanBar22} for more details and an application.
Third, \cyclepopping{} actually allows to consider more general cycle weights than \eqref{eq:determinantal_cycle_weights}, yielding non-determinantal measures; we later treat this general case in Section~\ref{s:Poisson}.

Actually, since it is based on a random walk, \cyclepopping{} samples an \emph{oriented} CRSF, that is, a spanning subgraph, each connected component of which has a unique oriented cycle, towards which all edges are directed.
Accounting for the fact that each cycle may have two orientations, the probability measure {of interest} is given by, for  $\calF$ an oriented CRSF,
\begin{equation}
    \label{eq:proba_CRSF_oriented}
    \mu_{\mathrm{CRSF}}(\calF) =
     \frac{1}{\det (\mathsf{\Delta})}\prod_{e\in \calF} w_e \times
    \prod_{\substack{\text{oriented}\\\text{cycle }c \subseteq \calF }}
    (1 - \cos \theta(c)).
\end{equation}
This is a special case of \eqref{eq:proba_CRSF_non_det}, with the cycle weights taken to be
\begin{equation}
    \label{eq:determinantal_cycle_weights}
    \alpha(c) = 1 - \cos \theta(c).
\end{equation} 
We further make the following assumption on cycle holonomies, which shall later allow us to interpret $\alpha(c)$ in \eqref{eq:determinantal_cycle_weights} as a probability.
\begin{assumption}[weakly inconsistent cycles]\label{ass:weak}
    For all cycles $c$, $\cos\theta(c) \geq 0$.
\end{assumption} 
In physical terms, this condition intuitively assumes a weak flux of the external magnetic field through the cycles.

\begin{remark}\label{rem:ass2}
    If Assumption~\ref{ass:non-trivial} and Assumption~\ref{ass:weak} hold, then Assumption~\ref{ass:non-trivial-sign-flipped} is necessarily satisfied.
\end{remark}
We shall make use of the following matrix, which we think of as a connection-aware transition matrix, 
\begin{equation}
    \mathsf{\Pi} = \mathsf{D}^{-1}(\mathsf{W} \odot \mathsf{\Phi}). 
    \label{eq:Pi}
\end{equation}
Note that $(\mathsf{\Pi})_{xx} =0$ for all $x\in \calV$.
\begin{proposition}\label{prop:Pi}
    If Assumption~\ref{ass:non-trivial} holds, the eigenvalues of $\mathsf{\Pi}$ are in the interval $[-1,1)$. If Assumption~\ref{ass:non-trivial-sign-flipped}  holds, the eigenvalues of $\mathsf{\Pi}$ are in the  interval $(-1,1]$.
\end{proposition}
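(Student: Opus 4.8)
The plan is to reduce the statement to the positive (semi)definiteness of two magnetic Laplacians: that of $\phi$ and that of the sign-flipped connection $xy\mapsto e^{\rmi\pi}\phi_{xy}=-\phi_{xy}$. The key observation is that $\mathsf{\Pi}$ is similar to a Hermitian matrix, and that $\I\pm\mathsf{\Pi}$ is, up to conjugation by $\mathsf{D}^{1/2}$, exactly one of these Laplacians.

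First I would note that, since $\mathsf{D}\succ 0$, the matrix $\mathsf{D}^{1/2}\mathsf{\Pi}\mathsf{D}^{-1/2}=\mathsf{D}^{-1/2}(\mathsf{W}\odot\mathsf{\Phi})\mathsf{D}^{-1/2}=:\mathsf{N}$ is well defined and Hermitian, because $\mathsf{W}$ is symmetric with nonnegative entries and $\phi_{yx}=\phi_{xy}^{*}$. Hence $\mathsf{\Pi}$ has real spectrum, equal to that of $\mathsf{N}$. Writing $\mathsf{\Delta}=\mathsf{D}-\mathsf{W}\odot\mathsf{\Phi}$ for the magnetic Laplacian of $\phi$ and $\tilde{\mathsf{\Delta}}=\mathsf{D}+\mathsf{W}\odot\mathsf{\Phi}$ for that of the sign-flipped connection $-\phi$, one has $\I-\mathsf{N}=\mathsf{D}^{-1/2}\mathsf{\Delta}\mathsf{D}^{-1/2}$ and $\I+\mathsf{N}=\mathsf{D}^{-1/2}\tilde{\mathsf{\Delta}}\mathsf{D}^{-1/2}$. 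Thus, if $\lambda$ is an eigenvalue of $\mathsf{\Pi}$, then $1-\lambda$ is an eigenvalue of $\mathsf{D}^{-1/2}\mathsf{\Delta}\mathsf{D}^{-1/2}$ and $1+\lambda$ is an eigenvalue of $\mathsf{D}^{-1/2}\tilde{\mathsf{\Delta}}\mathsf{D}^{-1/2}$, so it suffices to control the signs of these two forms, which are congruent to $\mathsf{\Delta}$ and $\tilde{\mathsf{\Delta}}$ respectively.

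Next I would record the standard quadratic-form identity for the magnetic Laplacian, $f^{*}\mathsf{\Delta}f=\frac{1}{2}\sum_{x,y\,:\,x\sim y}w_{xy}\,\bigl|f_x-\phi_{xy}f_y\bigr|^{2}\geq 0$ (sum over ordered pairs), together with its analogue for $\tilde{\mathsf{\Delta}}$ obtained by replacing $\phi$ by $-\phi$. This gives $\mathsf{\Delta}\succeq 0$ and $\tilde{\mathsf{\Delta}}\succeq 0$ unconditionally, hence $-1\leq\lambda\leq 1$ for every eigenvalue of $\mathsf{\Pi}$. For the strict bounds, I would characterize the kernels: $f\in\ker\mathsf{\Delta}$ iff $f_x=\phi_{xy}f_y$ along every edge, which forces $|f|$ to be constant since $G$ is connected, and then going around any cycle $c$ with $\hol(c)\neq 1$ forces $f\equiv 0$. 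Under \cref{ass:non-trivial} this makes $\mathsf{\Delta}$ nonsingular, i.e. $\mathsf{\Delta}\succ 0$ (this is precisely the remark following \cref{ass:non-trivial}), whence $\lambda<1$. Applying the same reasoning to $\tilde{\mathsf{\Delta}}$, the holonomy of $-\phi$ around a cycle $c$ is $(-1)^{|c|}\hol(c)$, which differs from $1$ exactly when $\hol(c)\neq(-1)^{|c|}$; so \cref{ass:non-trivial-sign-flipped} yields $\tilde{\mathsf{\Delta}}\succ 0$ and hence $\lambda>-1$. Combining: \cref{ass:non-trivial} gives eigenvalues in $[-1,1)$, and \cref{ass:non-trivial-sign-flipped} gives eigenvalues in $(-1,1]$.

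I do not expect a serious obstacle here. The only mildly delicate points are (i) identifying $\I+\mathsf{N}$ with the magnetic Laplacian of the sign-flipped connection and translating \cref{ass:non-trivial-sign-flipped} into non-triviality of that connection, and (ii) the connectivity argument characterizing $\ker\mathsf{\Delta}$, which is where the standing assumptions from \cref{sec:notations} that $G$ is connected and has no self-loop are used.
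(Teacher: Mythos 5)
Your proposal is correct and follows essentially the same route as the paper's proof: conjugate $\mathsf{\Pi}$ by $\mathsf{D}^{1/2}$ to get a Hermitian matrix, use the quadratic-form identity for $\I\pm\mathsf{\Pi}'$ (equivalently, positive semidefiniteness of the magnetic Laplacians of $\phi$ and of the sign-flipped connection) to confine the spectrum to $[-1,1]$, and then invoke nonsingularity of $\mathsf{\Delta}$ (resp.\ of the sign-flipped Laplacian) under \cref{ass:non-trivial} (resp.\ \cref{ass:non-trivial-sign-flipped}) to exclude the endpoint $1$ (resp.\ $-1$). The only difference is cosmetic: you spell out the kernel characterization behind the nonsingularity, which the paper delegates to the remark following \cref{ass:non-trivial}.
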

In particular, under Assumption~\ref{ass:non-trivial}, $\I - \Pi$ is non-singular. 
The first part of the proof of Proposition~\ref{prop:Pi} is inspired by \citet[page 63]{SingWu16}. 
\begin{proof}
    The matrix $\Pi$ is related by a similarity transformation to the Hermitian matrix $\mathsf{\Pi}' = \mathsf{D}^{-1/2} (\mathsf{W} \odot \mathsf{\Phi})\mathsf{D}^{-1/2}$. Thus, the eigenvalues of $\mathsf{\Pi}$ and $\mathsf{\Pi}'$ are identical and real-valued. Next, we observe that the eigenvalues of $\I \pm \mathsf{\Pi}'$ are non-negative. Indeed, for all $\mathsf{v}\in \mathbb{C}^n$, we have
    \[
        \mathsf{v}^*(\I \pm \mathsf{\Pi}')\mathsf{v} = \sum_{x\sim y} w_{xy}\left|\frac{\mathsf{v}_x}{\sqrt{\deg(x)}} \pm \frac{\phi_{yx}\mathsf{v}_y}{\sqrt{\deg(y)}}\right|^2 \geq 0,
    \]
    where $w_{xy}\geq 0$.
    From this observation, we deduce that the eigenvalues of $\I + \mathsf{\Pi}^\prime$ and $\I - \mathsf{\Pi}^\prime$ are non-negative. Thus, the smallest (resp.\ largest) eigenvalue of $\mathsf{\Pi}^\prime$ cannot be smaller (resp.\ larger) than than $-1$ (resp.\ $1$). Since $\mathsf{\Pi}$ and $\mathsf{\Pi}^\prime$ share the same eigenvalues, the spectrum of $\mathsf{\Pi}$ lies in $[-1,1]$. 
    
    Furthermore, since $\mathsf{\Pi} = \I - \mathsf{D}^{-1} \mathsf{\Delta}$ with $\mathsf{\Delta}$ non-singular due to Assumption~\ref{ass:non-trivial}, $\mathsf{\Pi}$ cannot have an eigenvalue equal to $1$. 
    This shows that the spectrum of $\mathsf{\Pi}$ lies within the interval  $[-1,1)$.
    A similar argument applies to show that $\I + \mathsf{\Pi}'$ is nonsingular under Assumption~\ref{ass:non-trivial-sign-flipped}. We simply need to define the magnetic Laplacian for the sign-flipped connection. Hence, if Assumption~\ref{ass:non-trivial-sign-flipped} holds, $\mathsf{\Pi}$ cannot have an eigenvalue equal to $-1$.
    This completes the proof.
\end{proof}
We are ready to state the correctness of \cyclepopping{}.
\begin{proposition}[correctness]\label{prop:correctness}
    Let Assumption~\ref{ass:non-trivial} and Assumption~\ref{ass:weak} hold and fix any ordering $\iota$ of the nodes.
    Let $\calF$ be a CRSF in the support of \eqref{eq:proba_CRSF_oriented},
    and let $\calF_1, \calF_2, \dots, \calF_{\kappa_{\iota}(\calF)}$ be the deterministic decomposition induced by $\iota$; see Remark~\ref{rem:shape}.

    Denote by $\calT_\ell$ the random subgraphs output by \cyclepopping{}$(G,\mathsf{W},\alpha,\iota)$.
    For all $1\leq \ell \leq \kappa_{\iota}(\calF)$, we have
\begin{align}
    \mathbb{P}(\calT_\ell = \calF_\ell) &= \mathbb{P}(\calT_1 = \calF_1, \dots,\calT_\ell = \calF_\ell) \nonumber \\
    &=   \frac{\det(\mathsf{\Delta}_{\overline{\calV_\ell}})}{\det(\mathsf{\Delta})} \times\prod_{e\in \calF_\ell}w_{e} \times \prod_{\substack{\text{\rm oriented}\\\text{\rm cycle } c \subseteq \calF_\ell}} \Big(1- \cos\theta(c)\Big),
    \label{e:correctness_determinantal}
\end{align}
where $\calV_\ell$ is the set of nodes in $\calF_\ell$.
\end{proposition}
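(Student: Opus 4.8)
The plan is to argue by induction on $\ell$, after dispatching the first equality. That $\{\calT_\ell=\calF_\ell\}$ coincides with $\{\calT_1=\calF_1,\dots,\calT_\ell=\calF_\ell\}$ is essentially the content of \cref{rem:shape}: running the same deterministic staging procedure on the partial object $\calF_\ell$ recovers $\calF_1,\dots,\calF_{\ell-1}$, so any run of \cyclepopping{} that produces $\calT_\ell=\calF_\ell$ must have passed through $\calT_j=\calF_j$ for all $j\le\ell$, the reverse inclusion being trivial. For the main formula, the case $\ell=0$ reads $1=1$. For the inductive step I condition on $\{\calT_{\ell-1}=\calF_{\ell-1}\}$: the start node $x_0=\iota(\min\{k:\iota(k)\notin\calV_{\ell-1}\})$ of stage $\ell$ is then deterministic, and since the start nodes of the successive stages are pairwise distinct, the stream $(X^{x_0}_\bullet,(B^{x_0}_{c,\bullet})_c)$ driving stage $\ell$ is independent of everything used in stages $1,\dots,\ell-1$; hence $\{\calT_{\ell-1}=\calF_{\ell-1}\}$ is independent of the output of stage $\ell$ and
\[
  \mathbb{P}(\calT_\ell=\calF_\ell)=\mathbb{P}(\calT_{\ell-1}=\calF_{\ell-1})\cdot\mathbb{P}\!\left(\text{the walk from }x_0\text{ absorbed on }\calV_{\ell-1}\text{ produces }\calF_\ell\setminus\calF_{\ell-1}\right).
\]
By \cref{rem:shape} the new piece $\calF_\ell\setminus\calF_{\ell-1}$ is either a branch ending in $\calV_{\ell-1}$, or a lasso carrying exactly one new oriented cycle $c^\star$.

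The heart of the argument is the single-stage probability above. I would expand it as a sum over the simple-random-walk trajectories $\omega$ started at $x_0$ and stopped at $\tau_{\mathrm{stop},\calV_{\ell-1}}$ whose loop-erasure realizes the prescribed branch or lasso, each $\omega$ weighted by $\prod_{e\in\omega}p_e$, by a factor $1-\alpha(c)$ for every cycle $c$ erased (hence rejected) along $\omega$, and, in the lasso case, by the extra factor $\alpha(c^\star)$ coming from accepting $c^\star$. The erased cycles tile exactly the edges of $\omega$ that do not survive in the loop-erasure, so $\prod_{e\in\omega}p_e$ factors as $\prod_{e\in\calF_\ell\setminus\calF_{\ell-1}}p_e$ times $\prod_{c}\prod_{e\in c}p_e$ over those cycles, and $\alpha(c)=1-\cos\theta(c)\in[0,1]$ by \eqref{eq:determinantal_cycle_weights} and \cref{ass:weak}. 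Pulling out the factor $\prod_{e\in\calF_\ell\setminus\calF_{\ell-1}}p_e$ (and $\alpha(c^\star)$ in the lasso case), the stage probability reduces to the identity
\[
  \sum_{\text{loop decorations}}\ \prod_{c\text{ erased}}\Big(\prod_{e\in c}p_e\Big)\cos\theta(c)\;\overset{!}{=}\;\frac{\det\!\big(\I-\mathsf{\Pi}_{\overline{\calV_\ell}}\big)}{\det\!\big(\I-\mathsf{\Pi}_{\overline{\calV_{\ell-1}}}\big)},
\]
with $\mathsf{\Pi}$ the connection-aware transition matrix of \eqref{eq:Pi}, the sum running over the loop bouquets that the walk may hang off the vertices of the new piece while still yielding the prescribed loop-erasure.

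The identity above is where the $\Uone$-connection enters, and it is the only non-routine point. One proves it by grouping the loop decorations according to the multiset of edges they use: within such a group each erased unoriented cycle may be traversed in either direction, and summing the holonomy over its two orientations replaces $\hol(c)$ by the real number $\hol(c)+\overline{\hol(c)}=2\cos\theta(c)$, so that a whole group reassembles into the $\prod_c\cos\theta(c)$ rejection weights — this is precisely the combinatorial mechanism that organizes $\det\mathsf{\Delta}$ in the magnetic matrix-tree theorem \citep{FORMAN199335,kenyon2011}, and the same orientation symmetry of the loop measure exploited systematically in \cref{s:Poisson}. Convergence of the loop sums is guaranteed because $\mathsf{\Pi}$, hence every principal submatrix $\mathsf{\Pi}_{\overline{S}}$ by eigenvalue interlacing, has spectrum in $(-1,1)$ under \cref{ass:non-trivial,ass:non-trivial-sign-flipped} (the second automatic here by \cref{rem:ass2}); see \cref{prop:Pi}. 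To conclude, one rewrites $\prod_{e}p_e=\big(\prod_e w_e\big)\big/\prod_v\deg(v)$ over the edges and vertices of the new piece and uses $\mathsf{\Pi}=\I-\mathsf{D}^{-1}\mathsf{\Delta}$, i.e.\ $\det(\I-\mathsf{\Pi}_{\overline{S}})=\det(\mathsf{\Delta}_{\overline{S}})/\prod_{v\notin S}\deg(v)$: the degree factors cancel, the determinant ratio telescopes against the inductive hypothesis, and the lasso factor $\alpha(c^\star)=1-\cos\theta(c^\star)$ supplies the one missing cycle term, giving \eqref{e:correctness_determinantal}.

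The main obstacle is the single-stage identity: it demands careful bookkeeping of which cycles trigger a coin flip along a trajectory and of how those cycles tile the erased edges, together with the non-obvious point that the complex holonomies reorganize into real cosine weights only after summing over whole families of trajectories — it is false cycle-by-cycle, and even pair-by-pair. The remaining ingredients — the reduction to a single stage, the conversions between $\mathsf{\Pi}$, $\mathsf{P}$, $\mathsf{D}$ and $\mathsf{\Delta}$, and the telescoping — are routine.
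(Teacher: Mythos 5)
Your proposal is correct and follows essentially the same route as the paper's proof in \cref{sec:proof_correctness}: a stage-by-stage (last-exit) decomposition, a sum over erased-loop decorations whose $\cos\theta(c)$ rejection weights are symmetrized over cycle orientations into holonomies so that the Green generating function becomes a diagonal entry of $(\I-\mathsf{\Pi}_{\overline{\calA}})^{-1}$, and a telescoping of the resulting determinant ratios via $\mathsf{\Pi}=\I-\mathsf{D}^{-1}\mathsf{\Delta}$. The only cosmetic difference is that you package the within-stage telescoping as a single identity $\det(\I-\mathsf{\Pi}_{\overline{\calV_\ell}})/\det(\I-\mathsf{\Pi}_{\overline{\calV_{\ell-1}}})$, whereas the paper evaluates each per-vertex Green function as a ratio of principal minors and telescopes at the end.
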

The first equality in \eqref{e:correctness_determinantal} is a consequence of Remark~\ref{rem:shape}.
When $\ell=\kappa_{\iota}(\calF)$, the second equality shows that \cyclepopping{} indeed produces samples from \eqref{eq:proba_CRSF_oriented}, upon noting that $\calF_{\kappa_{\iota}(\calF)} = \calF$ and $\det\mathsf{\Delta}_{\emptyset} = 1$.
The extra factor $2$ in front of each cycle weight in \eqref{eq:proba_CRSF} comes from the fact that each cycle of a CRSF can be obtained with two orientations in an oriented CRSF.
Remark that backtracks are popped with probability $1$.
Finally, note that the resulting measure does not depend on the node ordering $\iota$. 
Hence, in the sequel, we often omit the dependence on $\iota$.

\subsection{A last-exit decomposition} 
\label{sec:proof_correctness}

This section revisits the proof sketch in \citep[Section 2.3]{Kassel15}, fleshing out all the technical details.
A closely related approach was used by \citet{Marchal99} to prove the correctness of Wilson's algorithm for uniformly sampling spanning trees; see also \citep{AvGaud2018}.

With the notation of Proposition~\ref{prop:correctness},
\begin{equation}
    \mathbb{P}(\calT_1 = \calF_1, \dots,  \calT_\ell = \calF_\ell) = \mathbb{E}\prod_{i=1}^{\ell} 1_{\calT_i = \calF_i} = \mathbb{E} \prod_{i=1}^{\ell} 1_{\calT_i = \calF_{i-1}\cup \calJ_i},
    \label{e:expectation_of_indicators}
\end{equation}
where $\calF_1$ is a lasso, and for $\ell\geq 2$, $\calF_i = \calF_{i-1} \cup \calJ_i$, $\calJ_i$ being either a branch or a lasso.
Now, fix $1\leq i\leq \ell$, and write $\calJ_i = (x_0^i, \dots, x_{k_i-1}^i)$. 
On $\{\calT_{i-1} = \calF_{i-1}\}$, the event $\{\calT_i = \calF_{i-1} \cup \calJ_i\}$ decomposes as a disjoint union
\begin{align}
    \{\calT_{i-1} = \calF_{i-1}\}\cap\{\calT_i = \calF_{i-1} \cup \calJ_i\} 
    &= \{\calT_{i-1} = \calF_{i-1}\} \cap \left[\bigcup_{\bfm \in \mathbb{N}^{k_i-1}} \calE_{\bfm}^i\right],
    \label{e:decomposition_as_disjoint_union}
\end{align}
where for $\bfm \in \mathbb{N}^{k-1}$, the event $\calE^i_{\bfm}$ is defined by a last-exit decomposition; see~\citep[Proposition 4.6.4]{LaLi10}. 
In words, we walk an $m_1$-step loop based as $x_0^i$ without hitting $\calV_{i-1}$, pop the loop, and walk one step to $x_1^i$, \emph{never to return} to $x_0^i$; then we walk an $m_2$-loop based at $x_1^i$ without hitting $\calV_{i-1}$, pop the loop, and walk one step to $x_2^i$, never to return to $x_1^i$, etc.\ until we accept the cycle in $\calJ_i$ if the latter is a lasso, or we hit $\calV_{i-1}$ if $\calJ_i$ is a branch. 
Formally, let $m_1^* = m_1$ and $m_j^* = m_{j-1}^* + m_j + 1$ for $j\geq 2$. 
We first treat the case of $\calJ_i = (x_0^i, \dots, x_{k_i-1}^i)$ being a lasso, so that $k_i\geq 2$ and $x_j^i = x_{k_i}^i$ for some $j<k_i$. 
Denoting by $c_i$ the cycle of $\calJ_i$, we let 
\begin{align}
    \calE_{\bfm}^i
    &= 
    \overbrace{\{ X^{x_0^i}_{m_1^*} = x_0^i \text{ and } m_1^* < \tau^{x_0^i}_{\mathrm{stop}, \calV_{i-1}}\}}^{\text{loops of $m_1^*$ steps based at $x_0^i$}} \cap \overbrace{\{
    X^{x_0^i}_{m_1^*+1} = x_1^i \}}^{\text{step from $x_0^i$ to $x_1^i$ }} \nonumber\\
    &\cap\{
    X^{x_0^i}_{m_2^*} = x_1^i \text{ and } m_2^* < \tau^{x_0^i, (\text{after }m_1^* + 1)}_{\mathrm{stop}, \calV_{i-1}\cup \{x_0^i\}} \}\cap 
    \{X^{x_0^i}_{m_2^* + 1} = x_2^i \} \nonumber\\
    &\vdots\nonumber\\
    &\cap\{
    X^{x_{0}^i}_{m_{k_i-1}^*} = x_{k_i-2}^i \text{ and } m_{k_i-1}^* < \tau^{x_0^i, (\text{after }m^*_{k_i-2} + 1)}_{\mathrm{stop},\calV_{i-1}\cup \{x_0^i, \dots, x_{k_i-3}^i\}} \} \nonumber\\
    &\cap \{
    X^{x_{0}^i}_{m_{k_i-1}^* + 1} = x^i_{k_i-1} \}\nonumber\\
    &\cap \underbrace{\{B^{x_0^i}_{c_i, m_{k_i-1}^* + 1} = 1\}}_{\text{$c_i$ is accepted}}.
    \label{eq:fund_event}
\end{align}
When $\calJ_i$ is a branch rather than a lasso, the last row of \eqref{eq:fund_event} is simply omitted: we have hit $\calV_{i-1}$ at $x^i_{k_i-1}$.

Repeatedly plugging \eqref{e:decomposition_as_disjoint_union} into \eqref{e:expectation_of_indicators}, we obtain 
\begin{equation}
    \mathbb{P}(\calT_1 = \calF_1, \dots,  \calT_\ell = \calF_\ell) = \mathbb{E}\left[\prod_{i=1}^{\ell} \sum_{\bfm \in \mathbb{N}^{k_i-1}} 1_{\calE_{\bfm}^i}\right],
\end{equation}
upon noting that, by convention, we set $\calF_0 = \emptyset$ and thus $\calT_0=\calF_0$ almost surely.
By independence of the $\ell$ involved SRWs and countable additivity, it comes
\begin{align}
    \label{e:decomposition_in_fund_event}
    \mathbb{P}(\calT_1 = \calF_1, \dots,  \calT_\ell = \calF_\ell) = \prod_{i=1}^{\ell} \left[\sum_{\bfm \in \mathbb{N}^{k_i-1}} \mathbb{P}(\calE_{\bfm}^i)\right].
\end{align}
We are thus led to computing $\mathbb{P}(\calE_{\bfm}^i)$, for $1\leq i\leq \ell$, and $\bfm\in \mathbb{N}^{k_i}$.

Again, we start with the case where $\calJ_i$ is a lasso. 
The event of the last row of \eqref{eq:fund_event} is independent from the rest. 
Then we use Lemma~\ref{lem:stopping_time_after_m} and apply the Markov property to the chain $$
Z_{n}^{x_0^i} = (X_n^{x_0^i}, B_{1, c_1}^{x_0^i}, \dots,  B_{1, c_d}^{x_0^i}),
$$
where $c_1, \dots, c_d$ are the cycles of the graph $G$.
Remembering that $\mathbb{Q}_x$ denotes the law of the SRW $(X_n)$ on $G$ with initial distribution $\delta_x$, we obtain,
\begin{align}
    &\mathbb{P}(\calE_{\bfm}^i)\nonumber\\
    &=  
    \mathbb{Q}_{x_0^i}( \{X_{m_1^*} = x_0^i \text{ and } m_1 < \tau_{\mathrm{stop}, \calV_{i-1}}\})  p_{x_0^i x_1^i}\nonumber\\
    &\quad \ \mathbb{Q}_{x_1^i}(\{
    X_{m_2} = x_1^i \text{ and } m_2 < \tau_{\mathrm{stop}, \calV_{i-1}\cup \{x_0^i\}} \}) p_{x_1^i x_2^i} \nonumber\\
    &\quad \vdots\nonumber\\
    &\quad \times\mathbb{Q}_{x_{k_i-2}^i}(\{
    X_{m_{k_i-1}} = x_{k_i-2}^i \text{ and } m_{k_i-1} < \tau_{\mathrm{stop},\calV_{i-1}\cup \{x_0^i, \dots, x_{k_i-3}^i\}}\}) p_{x^i_{k_i-2} x^i_{k_i-1}}\nonumber\\
    &\quad \times \alpha(c_i).
    \label{eq:fund_event_markovied}
\end{align}
The case of $\calJ_i$ being a branch is similar, resulting in the omission of the factor $\alpha(c)$ in \eqref{eq:fund_event_markovied}.

Now, we recognize in \eqref{eq:fund_event_markovied} evaluations of the Green generating function of the chain, which we introduce before continuing the computation of \eqref{eq:fund_event_markovied}.

\subsection{The Green generating function\label{s:Green_generating_fct}}

We begin by computing the probability that $(X_n)$, starting from node $x$, makes a based loop of $k$ steps ($k\geq 2$), while staying in the complement of a subset of nodes $\calA$ and popping (i.e.\ erasing, rejecting) all the cycles along its path,
\begin{align}
    g_k(x,\calA) \triangleq \mathbb{Q}_{x}\Big(X_{k} = x \text{ and } k < \tau_{\to \calA} \wedge \min_{ \text{ cycle }c} \tau_{\circlearrowleft c}\Big),
    \label{eq:Green_k}
\end{align}
with $x \in \calV \setminus \calA$.
For definiteness, we write $g_1(x,\calA) = 0$ and $g_0(x,\calA) = 1$.
For an oriented based loop $\gamma = (x_0, x_1, \dots, x_{k-1}, x_0)$, where $k\geq 2$, write 
\begin{equation}
    q(\gamma) = p_{x_0x_1} \dots p_{x_{k-1}x_0},\label{eq:loop_weight}
\end{equation}
that is, the probability that the simple random walk starting at $x_0$ walks exactly on this closed path.\footnote{
For measures over loops in graphs and loop soups, we refer to \citep{Sznitman2012,LeJan11,lawler2004brownian,Symanzik69}.
}
In particular, $q(\gamma) = q(\bar{\gamma})$ for any based loop $\gamma$, in light of the definition of $p_{xy}$ in \eqref{eq:P}.
It is conventional to set $q(\gamma) = 1$ when $\gamma$ consists of only one point.
Note that based loops are not necessarily cycles, and may contain cycles, backtracks being popped almost surely in Algorithm~\ref{a:cyclepopping}.
\begin{notation}[set of based loops]
    Let $\calA$ be a subset of $\calV$ and $x\in\calV \setminus \calA$. 
    Denote by $\BLoops(\overline{\calA},x)$ the set of oriented based loops in $\calV \setminus \calA$ whose base is $x$, including trivial loops of only one point.
\end{notation}
\begin{notation}[cycles within a based loop]\label{not:pieces_loop}
    Let $\gamma$ be a loop based at $x_0$. 
    We now go over the oriented path defined by $\gamma$ and chronologically erase all the cycles encountered. 
    Note that these (oriented) cycles may occur more than once. 
    We collect the cycles, with multiplicity, in a multiset\footnote{A multiset is a set endowed with a function endoding element multiplicity.} we call $\cycles(\gamma)$.
    Figure~\ref{fig:heaps} illustrates a based loop, for which the multiset contains 3 cycles, one of which has multiplicity two.\footnote{
        Combinatorists might prefer defining a \emph{heap} \citep{krattenthaler2006theory}, as shown in Figure~\ref{fig:heap_from_loop}, but we do not need the additional sophistication for the moment.
    }
\end{notation}
\begin{figure}[h]
    \begin{subfigure}[b]{0.5\linewidth}
        \centering
    \begin{tikzpicture}[scale =1.5]
        \def\x{0.7};
        \def\y{0.7};
        \def\s{0.03};

        \node (P11) at (0,0) {\textbullet};
        \node  at (-0.3,0) {$x$};
        \node (P12) at (1,0) {\textbullet};
        \node (P13) at (2,0) {\textbullet};

        \node (P21) at (0+\x,0+\y) {\textbullet};
        \node (P22) at (1+\x,0+\y) {\textbullet};
        \node (P23) at (2+\x,0+\y) {\textbullet};

        \node (P31) at (0+2*\x,0+2*\y) {\textbullet};
        \node (P32) at (1+2*\x,0+2*\y) {\textbullet};
        \node (P33) at (2+2*\x,0+2*\y) {\textbullet};

        \draw [very thick, draw=gray, opacity=0.2] (P11.center)--(P12.center);
        \draw [very thick, draw=gray, opacity=0.2] (P12.center)--(P13.center);
        \draw [very thick, draw=gray, opacity=0.2] (P11.center)--(P21.center);
        \draw [very thick, draw=gray, opacity=0.2] (P12.center)--(P22.center);
        \draw [very thick, draw=gray, opacity=0.2] (P13.center)--(P23.center);
        \draw [very thick, draw=gray, opacity=0.2] (P21.center)--(P22.center);
        \draw [very thick, draw=gray, opacity=0.2] (P22.center)--(P23.center);
        \draw [very thick, draw=gray, opacity=0.2] (P21.center)--(P31.center);
        \draw [very thick, draw=gray, opacity=0.2] (P22.center)--(P32.center);
        \draw [very thick, draw=gray, opacity=0.2] (P23.center)--(P33.center);
        \draw [very thick, draw=gray, opacity=0.2] (P31.center)--(P32.center);
        \draw [very thick, draw=gray, opacity=0.2] (P32.center)--(P33.center);
        %
        \draw [magenta,->,very thick] (P11)--($(P12) + (0,1*\s)$);
        \draw [magenta,->,very thick] ($(P12)+ (0,1*\s)$)-- ($(P22)+ (0,2*\s)$);
        \draw [magenta,->,very thick] ($(P22)+ (0,2*\s)$)--($(P21)+ (0,3*\s)$);
        \draw [magenta,->,very thick] ($(P21)+ (0,3*\s)$)--($(P11)+ (0,4*\s)$);
    
        \draw [orange,->,very thick] ($(P11)+ (0,4*\s)$) to ($(P12)+ (0,5*\s)$);
        \draw [dotted, draw=gray, opacity=0.2] (P12)--($(P12)+ (0,5*\s)$);

        \draw [teal,->,very thick] ($(P12)+ (0,5*\s)$) to ($(P13)+ (0,6*\s)$);
        \draw [teal,->,very thick] ($(P13)+ (0,6*\s)$) to ($(P23)+ (0,7*\s)$);
        \draw [teal,->,very thick] ($(P23)+ (0,7*\s)$) to ($(P22)+ (0,8*\s)$);
        \draw [teal,->,very thick] ($(P22)+ (0,8*\s)$) to ($(P12)+ (0,9*\s)$);

        \draw [teal,->,very thick] ($(P12)+ (0,9*\s)$) to ($(P13)+ (0,10*\s)$);
        \draw [teal,->,very thick] ($(P13)+ (0,10*\s)$) to ($(P23)+ (0,11*\s)$);
        \draw [dashed, draw=gray, opacity=0.9] (P23)--($(P23)+ (0,11*\s)$);

        \draw [teal,->,very thick] ($(P23)+ (0,11*\s)$) to ($(P22)+ (0,12*\s)$);
        \draw [dashed, draw=gray, opacity=0.9] (P22)--($(P22)+ (0,12*\s)$);

        \draw [teal,->,very thick] ($(P22)+ (0,12*\s)$) to ($(P12)+ (0,13*\s)$);
        \draw [dashed, draw=gray, opacity=0.9] (P12)--($(P12)+ (0,13*\s)$);

        \draw [orange,->,very thick] ($(P12)+ (0,13*\s)$) to ($(P13)+ (0,14*\s)$);
        \draw [dashed, draw=gray, opacity=0.9] (P13)--($(P13)+ (0,15*\s)$);
        \draw [orange,->,very thick] ($(P13)+ (0,14*\s)$) to ($(P23)+ (0,15*\s)$);
        \draw [dashed, draw=gray, opacity=0.9] (P23)--($(P23)+ (0,15*\s)$);

        \draw [orange,->,very thick] ($(P23)+ (0,15*\s)$) to ($(P33)+ (0,16*\s)$);
        \draw [dashed, draw=gray, opacity=0.9] (P33)--($(P33)+ (0,16*\s)$);

        \draw [orange,->,very thick] ($(P33)+ (0,16*\s)$) to ($(P32)+ (0,17*\s)$);
        \draw [dashed, draw=gray, opacity=0.9] (P32)--($(P32)+ (0,17*\s)$);

        \draw [orange,->,very thick] ($(P32)+ (0,17*\s)$) to ($(P31)+ (0,18*\s)$);
        \draw [dashed, draw=gray, opacity=0.9] (P31)--($(P31)+ (0,18*\s)$);

        \draw [orange,->,very thick] ($(P31)+ (0,18*\s)$) to ($(P21)+ (0,19*\s)$);
        \draw [dashed, draw=gray, opacity=0.9] (P21)--($(P21)+ (0,19*\s)$);

        \draw [orange,->,very thick] ($(P21)+ (0,19*\s)$) to ($(P11)+ (0,20*\s)$);
        \draw [dashed, draw=gray, opacity=0.9] (P11)--($(P11)+ (0,20*\s)$);
    \end{tikzpicture}
    \caption{A based loop $\gamma$ based at $x$. \label{fig:loop_with cycles}}
    \end{subfigure}
    \hfill
    \begin{subfigure}[b]{0.5\linewidth}
        \centering
        \begin{tikzpicture}[scale = 1.5]
            \def\x{0.7};
            \def\y{0.7};
            \def\s{2.5};
            \node  at (-0.3,0) {$x$};
            \node (P11_shift2) at (0,0+\s) {\textbullet};
            \node (P12_shift2) at (1,0+\s) {\textbullet};
            \node (P13_shift2) at (2,0+\s) {\textbullet};

            \node [orange] at (1+2*\x,0+2*\y+0.5*\s) {$c_3$};
            \node [teal] at (1+2*\x,0+2*\y+0.5*\s -2.5/2) {$c_2$};
            \node [magenta] at (1+2*\x - 1.,0+2*\y+0.5*\s -2.5) {$c_1$};

            \node (P21_shift2) at (0+\x,0+\y+\s) {\textbullet};
            \node (P23_shift2) at (2+\x,0+\y+\s) {\textbullet};
    
            \node (P31_shift2) at (0+2*\x,0+2*\y+\s) {\textbullet};
            \node (P32_shift2) at (1+2*\x,0+2*\y+\s) {\textbullet};
            \node (P33_shift2) at (2+2*\x,0+2*\y+\s) {\textbullet};
            \node (P12_shift1) at (1,0+0.5*\s) {\textbullet};
            \node (P13_shift1) at (2,0+0.5*\s) {\textbullet};
            \node (P22_shift1) at (1+\x,0+\y+0.5*\s) {\textbullet};
            \node (P23_shift1) at (2+\x,0+\y+0.5*\s) {\textbullet};


            \node (P11) at (0,0) {\textbullet};
            \node (P12) at (1,0) {\textbullet};    
            \node (P21) at (0+\x,0+\y) {\textbullet};
            \node (P22) at (1+\x,0+\y) {\textbullet};
    
            \draw [dotted,draw=blue] (P11.center)--(P11_shift2.center);
            \draw [dotted, draw=blue, opacity=0.5] (P21.center)--(P21_shift2.center);

            \draw [dotted, draw=blue] (P12.center)--(P12_shift2.center);
            \draw [dotted, draw=blue, opacity=0.5] (P22.center)--(P22_shift1.center);

            \draw [dotted, draw=blue] (P23_shift1.center)--(P23_shift2.center);
            \draw [dotted, draw=blue] (P13_shift1.center)--(P13_shift2.center);

            \draw [magenta,->,very thick] (P11)--(P12);
            \draw [magenta,->,very thick] (P12)--(P22);
            \draw [magenta,->,very thick, opacity=0.8] (P22)--(P21);
            \draw [magenta,->,very thick] (P21)--(P11);
            \draw [teal,->,very thick,style=double] (P12_shift1)--(P13_shift1);
            \draw [teal,->,very thick,style=double] (P13_shift1)--(P23_shift1);
            \draw [teal,->,very thick,style=double, opacity=0.8] (P23_shift1)--(P22_shift1);
            \draw [teal,->,very thick,style=double] (P22_shift1)--(P12_shift1);
            \draw [orange,->,very thick] (P11_shift2)--(P12_shift2);
            \draw [orange,->,very thick] (P12_shift2)--(P13_shift2);
            \draw [orange,->,very thick] (P13_shift2)--(P23_shift2);
            \draw [orange,->,very thick] (P23_shift2)--(P33_shift2);
            \draw [orange,->,very thick] (P33_shift2)--(P32_shift2);
            \draw [orange,->,very thick] (P32_shift2)--(P31_shift2);
            \draw [orange,->,very thick] (P31_shift2)--(P21_shift2);
            \draw [orange,->,very thick] (P21_shift2)--(P11_shift2);
    
        \end{tikzpicture}
        \caption{The heap associated with $\gamma$. \label{fig:heap_from_loop}}

    \end{subfigure}

    \caption{Based loop as a heap of cycles.
    In Figure~\ref{fig:loop_with cycles}, we display a loop $\gamma$ based at $x$ where we visualized the steps by a small vertical shift.
    Colors allow to visualize $\cycles(\gamma) = \{c_1,(c_2)^2, c_3\}$.
    On the right-hand side, in Figure~\ref{fig:heap_from_loop}, we display the heap associated with $\gamma$.
    It is constituted of the following pieces: $c_1$, $c_2$ and $c_3$. 
    Double arrows indicate multiplicity $2$.
\label{fig:heaps}}
\end{figure}
Thanks to this notation, we can write the holonomy \eqref{e:holonomy} of a based loop as 
\begin{equation}
    \hol(\gamma) = \prod_{c\in \cycles(\gamma)} \hol(c) = \prod_{c\in \cycles(\gamma)}  \exp(-\rmi\theta(c)).   \label{eq:hol_decomposition} 
\end{equation}
\begin{figure}[b!]
    \begin{subfigure}[b]{0.5\linewidth}
        \centering
        \resizebox{\linewidth}{!}{
        \begin{tikzpicture}
            \node (A) at (0,0) {\textbullet};
            \node at (0,-0.25) {$x_0$};
            \node (B) at (2,1) {\textbullet};
            \node at (2,1+0.25) {$x_4$};
            \node at (2,0.4) {$c_1$};
            \node (C) at (4,0) {\textbullet};
            \node at (4,-0.25) {$x_5$};
            \node (D) at (6,0) {\textbullet};
            \node at (6,-0.25) {$x_6$};
            \path [->, ultra thick] (C) edge  (A);
            \draw [ultra  thick, draw=gray, opacity=0.2] (C.center)--(A.center);
            \node (A1) at (-0.5 ,1) {\textbullet};
            \node  at (-0.5 - 0.25 ,1 + 0.25) {$x_1$};
            \node  (A2) at (0.5,1) {\textbullet};
            \node  at (0.5 + 0.25 ,1 + 0.25) {$x_2$};
            \path [->, ultra thick] (A) edge  (A1);
            \draw [thick, draw=gray, opacity=0.2] (A.center)--(A1.center);
            \path [->, ultra thick] (A1) edge  (A2);
            \draw [thick, draw=gray, opacity=0.2] (A1.center)--(A2.center);
            \path [->, ultra thick] (A2) edge  (A);
            \draw [thick, draw=gray, opacity=0.2] (A2.center)--(A.center);
            \node  at (0,0.65) {$c_0$};
            \path [->, ultra thick] (A) edge  (B);
            \draw [thick, draw=gray, opacity=0.2] (A.center)--(B.center);
            \path [->, ultra thick](B) edge (C);
            \draw [thick, draw=gray, opacity=0.2] (B.center)--(C.center);
            \path [->, ultra thick](C) edge (D);
            \draw [thick, draw=gray, opacity=0.2] (C.center)--(D.center);
            \path [->, ultra thick](D) edge (C);
            \draw [thick, draw=gray, opacity=0.2] (D.center)--(C.center);
        \end{tikzpicture}
            }
            \caption{$\gamma= (x_0, x_1, x_2, x_0, x_4, x_5, x_6, x_5, x_0)$}
    \end{subfigure}
    \begin{subfigure}[b]{0.5\linewidth}
        \resizebox{\linewidth}{!}{
        \centering
        \begin{tikzpicture}
            \node (A) at (0,0) {\textbullet};
            \node at (0,-0.25) {$x_0$};
            \node (B) at (2,1) {\textbullet};
            \node at (2,1+0.25) {$x_4$};
            \node at (2,0.4) {$c_1$};
            \node (C) at (4,0) {\textbullet};
            \node at (4,-0.25) {$x_5$};
            \node (D) at (6,0) {\textbullet};
            \node at (6,-0.25) {$x_6$};
            \path [->, ultra thick] (C) edge  (A);
            \draw [thick, draw=gray, opacity=0.2] (C.center)--(A.center);
            \node (A1) at (-0.5 ,1) {\textbullet};
            \node  at (-0.5 - 0.25 ,1 + 0.25) {$x_1$};
            \node  (A2) at (0.5,1) {\textbullet};
            \node  at (0.5 + 0.25 ,1 + 0.25) {$x_2$};
            \path [->, ultra thick] (A) edge  (A2);
            \draw [thick, draw=gray, opacity=0.2] (A.center)--(A2.center);
            \path [->, ultra thick] (A2) edge  (A1);
            \draw [thick, draw=gray, opacity=0.2] (A2.center)--(A1.center);
            \path [->, ultra thick] (A1) edge  (A);
            \draw [thick, draw=gray, opacity=0.2] (A1.center)--(A.center);
            \node  at (0,0.65) {$\bar{c}_0$};
            \path [->, ultra thick] (A) edge  (B);
            \draw [thick, draw=gray, opacity=0.2] (A.center)--(B.center);
            \path [->, ultra thick](B) edge (C);
            \draw [thick, draw=gray, opacity=0.2] (B.center)--(C.center);
            \path [->, ultra thick](C) edge (D);
            \draw [thick, draw=gray, opacity=0.2] (C.center)--(D.center);
            \path [->, ultra thick](D) edge (C);
            \draw [thick, draw=gray, opacity=0.2] (D.center)--(C.center);
        \end{tikzpicture}
        }
        \caption{$\gamma^\prime = (x_0, x_2, x_1, x_0, x_4, x_5, x_6, x_5, x_0)$}    \end{subfigure}
    \caption{Two loops based at $x_0$, namely, $\gamma,\gamma^\prime \in\BLoops(\overline{\calA},x_0)$. We have $\cycles(\gamma)= \{c_0,c_1\}$ with $c_0 = (x_0,x_1,x_2,x_0)$ and $c_1 = (x_0, x_4, x_5, x_0)$. Similarly, $\cycles(\gamma^\prime)= \{\bar{c}_0,c_1\}$ The loop weights are $q(\gamma) = q(\gamma^\prime)$ and their holonomies satisfy $\hol(\gamma) + \hol(\gamma^\prime) = \hol(c_1)\Re\hol(c_0)$. \label{fig:popped_loops}}
\end{figure}
Now, recall that when a cycle $c$ is formed at time $n$, it is popped independently from the past, with probability $\mathbb{P}(B_{c,n} = 0) = \cos \theta(c)$.
Armed with these definitions, we formulate the following instrumental lemma.
\begin{lemma}[popping probabilities from holonomies]
    The probability \eqref{eq:Green_k} that the SRW starting at $x$ is back at $x$ in $k$ steps ($k\geq 2$), before hitting $\calA$ and before accepting any cycle  along its path, is
    \begin{align}\label{eq:based-loops}
    g_k(x,\calA)
        &=  \sum_{\substack{\gamma \in\BLoops(\overline{\calA},x)\nonumber\\|\gamma| = k}} q(\gamma) \prod_{c\in \cycles(\gamma)} \cos\theta(c) \nonumber\\
        &= \sum_{\substack{\gamma \in\BLoops(\overline{\calA},x)\\|\gamma| = k}}q(\gamma) \prod_{c\in \cycles(\gamma)}  \exp(-\rmi\theta(c)).
    \end{align}
\end{lemma}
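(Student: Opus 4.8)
The plan is to condition on the trajectory made by the simple random walk during its first $k$ steps. Starting from $x$, any trajectory that is back at $x$ at time $k$ while having stayed in $\calV\setminus\calA$ is, by definition, a based loop $\gamma\in\BLoops(\overline{\calA},x)$ with $|\gamma|=k$, and conversely each such based loop is realised by exactly one trajectory; moreover the probability that $(X_n)$ follows a prescribed trajectory $\gamma=(x_0,\dots,x_{k-1},x_0)$ equals $\prod_j p_{x_j x_{j+1}}=q(\gamma)$. Hence
\[
    g_k(x,\calA)=\sum_{\substack{\gamma\in\BLoops(\overline{\calA},x)\\ |\gamma|=k}} q(\gamma)\,\mathbb{P}\big(\text{no cycle along }\gamma\text{ is accepted}\mid (X_n)_{n\le k}=\gamma\big).
\]
I would then evaluate the conditional probability: by \cref{not:pieces_loop}, running along $\gamma$ and chronologically erasing loops produces the multiset $\cycles(\gamma)$, and each of its elements is a cycle $c$ that closes at one specific time step of the loop-erased trajectory, where it is popped with probability $\mathbb{P}(B_{c,\cdot}=0)=\cos\theta(c)$ (as recalled just before the statement), independently of the walk and of all other draws, since the relevant Bernoulli variables carry pairwise distinct time indices. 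So the conditional probability that every such cycle is popped is $\prod_{c\in\cycles(\gamma)}\cos\theta(c)$, counted with multiplicity, which establishes the first line of \eqref{eq:based-loops}.

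For the second equality, the key identity is $\cos\theta(c)=\Re\hol(c)=\tfrac12\big(\hol(c)+\hol(\bar c)\big)$, combined with the factorisation $\hol(\gamma)=\prod_{c\in\cycles(\gamma)}\hol(c)$ from \eqref{eq:hol_decomposition}. I would introduce the operation on based loops that reverses the orientation of a chosen subset of the non-backtrack constituent cycles of $\gamma$, each at its own position along the loop; this partitions the based loops in $\BLoops(\overline{\calA},x)$ of length $k$ into classes. Within a class, all loops have the same weight $q$, because reversing a \emph{closed} sub-walk changes neither the multiset of traversed edges nor the multiset of source-degrees along that sub-walk -- this is the local version of the identity $q(\gamma)=q(\bar\gamma)$ noted after \eqref{eq:loop_weight} -- and they share the same multiset of values $\cos\theta(c)$, since $\cos\theta(\bar c)=\cos\theta(c)$. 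Summing $q(\gamma')\hol(\gamma')$ over a class, one pulls out $q$ and expands the product over constituent cycles, turning each factor into $\hol(c)+\hol(\bar c)=2\cos\theta(c)$; the result is exactly $\sum_{\gamma'} q(\gamma')\prod_{c}\cos\theta(c)$ over the same class. Summing over all classes yields $\sum_{\gamma}q(\gamma)\hol(\gamma)=\sum_{\gamma}q(\gamma)\prod_{c\in\cycles(\gamma)}\cos\theta(c)$, i.e.\ the second line of \eqref{eq:based-loops}. Backtracks need no special care: a backtrack $c$ has $\bar c=c$ and $\hol(c)=1=\cos\theta(c)$, so it contributes a factor $1$ on both sides and is left fixed by the reversal; a genuine cycle of length $\ge 3$ satisfies $c\neq\bar c$ as an unbased loop, so the reversal acts freely and distinct subsets yield distinct loops.

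The step I expect to be the main obstacle is precisely this combinatorial bookkeeping: one must carefully define the ``reverse a subset of constituent cycles'' operation on a based loop whose constituent cycles occur at definite, possibly repeated, positions, check that the output is again a based loop of the same length, that the correspondence between subsets and loops within a class is a bijection, and that $q$ is genuinely invariant -- the last point relying on the fact that over a closed sub-walk the product $\prod 1/\deg(\cdot)$ over its source vertices is unaffected by reversal. Once this framework is in place, the remaining computation is the one-line expansion of $\prod_{c}\big(\hol(c)+\hol(\bar c)\big)$; the first equality, by contrast, is the easy part, already obtained above from the independence of the Bernoulli streams.
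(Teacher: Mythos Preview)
Your proof is correct and follows essentially the same route as the paper: both exploit the orientation-reversal symmetry of the constituent cycles to pass from $\cos\theta(c)$ to $\exp(-\rmi\theta(c))$. The paper's argument is a terse pairing sketch (reverse one cycle $c$ in $\gamma$ to obtain $\gamma'$ and cancel the imaginary parts of $\hol(c)$ and $\hol(\bar c)$), while you formalise the same idea as an orbit decomposition under reversal of all subsets of non-backtrack cycles and expand $\prod_c(\hol(c)+\hol(\bar c))$; your version is more careful and also makes the first equality (conditioning on the trajectory and using independence of the Bernoulli draws) explicit, which the paper leaves implicit.
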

\begin{proof}
    The equality \eqref{eq:based-loops} is obtained by realizing that, for an oriented based loop $\gamma$ of length $k$ containing an oriented simple cycle $c\in \cycles(\gamma)$, there is a distinct oriented based loop $\gamma'$ of length $k$ containing the cycle $\bar{c}$ (of opposite orientation) and keeping the same orientation for the remaining simple cycles. 
    Thus, the imaginary part of  $\hol(c)$ and $\hol(\bar{c})$ simply cancel out by summing the terms corresponding to $\gamma$ and $\gamma'$; see Figure~\ref{fig:popped_loops} for an illustration.
    This also trivially applies when $c$ is a backtrack since $\cos \theta(c) = 1$ in this case.
\end{proof}
Next, we replace each term by its definition in \eqref{eq:based-loops}, and we obtain
\begin{align}
    g_k(x,\calA) &=  \sum_{x_1, \dots, x_{k-1}\in \calV\setminus \calA} p_{xx_1} \dots p_{x_{k-1}x} \cdot \phi_{x x_1} \dots \phi_{x_{k-1}x}\nonumber\\
    &=  \left(\left(\mathsf{\Pi}_{\overline{\calA}}\right)^k\right)_{xx}\nonumber.
\end{align}

We collect the contributions of the popped loops based at $x$ of \emph{all possible lengths}, into the so-called Green generating function
\begin{align}
    G(t,x,x;\calA) &\triangleq \sum_{k=0}^{+\infty} t^k g_k(x,\calA), \quad t\in(0,1],
    \label{e:G_as_a_sum_over_k}
\end{align}
where $g_k(x,\calA)$ is defined in \eqref{eq:Green_k}.
Note that, to be consistent with most references, like \citep{Marchal99}, we use here the notation $G(t,x,x^\prime;\calA)$, though we will always take $x^\prime = x$.
In particular, by using \eqref{eq:based-loops},
\begin{align}
    G(t,x,x;\calA) =  \sum_{k=0}^{+\infty} t^k \left(\left(\mathsf{\Pi}_{\overline{\calA}}\right)^k\right)_{xx}
    = \left((\I - t \mathsf{\Pi}_{\overline{\calA}})^{-1}\right)_{xx},
    \label{e:green_as_neumann_series}
\end{align}
where the convergence of the Neumann series is guaranteed for all $t\in (0,1]$ under the conjunction of Assumption~\ref{ass:non-trivial} and Assumption~\ref{ass:non-trivial-sign-flipped} since the spectrum of $\Pi$ lies in $(-1,1)$ in this case as a consequence of Proposition~\ref{prop:Pi}.
The series also converges if Assumption~\ref{ass:non-trivial} and Assumption~\ref{ass:weak} hold, as a consequence of Remark~\ref{rem:ass2}.

Note that the term $k=0$ above is equal to one, and the term $k=1$ vanishes since $(\mathsf{\Pi})_{xx} = 0$ for all $x\in \calV$. 
We refer to \citet{PiTa18} for an interpretation of the Green generating function in terms of spanning forests.

Now, under Assumption~\ref{ass:non-trivial} and Assumption~\ref{ass:weak}, we consider in more detail the value of the Green generating function in $t=1$. 
Applying \eqref{eq:Pi} to \eqref{e:green_as_neumann_series},
\begin{equation}
        G(1,x,x;\calA) = \deg(x) \left((\mathsf{\Delta}_{\overline{\calA}})^{-1}\right)_{xx}.\label{eq:Green}
\end{equation}
As a side remark, \eqref{eq:Green} is the average number of visits of $x$ that \cyclepopping{} does before accepting any cycle and before hitting $\calA$, as can be seen by summing the probability in \eqref{eq:Green_k}.
\subsection{Putting it all together}
\label{sec:all_together}

Armed with the Green generating function \eqref{eq:Green}, we continue the computation of \eqref{eq:fund_event_markovied}, which becomes
\begin{align}
    \sum_{\bfm \in \mathbb{N}^{k_i-1}} & \mathbb{P}(\calE_{\bfm}^i) \nonumber\\
    &= \alpha(c_i) \prod_{j=0}^{k_i-2} p_{x_j^i x^i_{j+1}} G(1,x^i_j,x^i_j; \calV_{i-1}\cup\{x^j_0, \dots, x_{j-1}^i\}), \label{eq:last_step_before_determinantal_assumption}\\
    &= \alpha(c_i) \left[\deg(x^i_0)\left(\mathsf{\Delta}_{\overline{\calV_{i-1}}}\right)^{-1}_{x_0^i x_0^i}\right]  p_{x_0^i x_1^i}\quad\times \dots \times\nonumber\\ 
    &\quad\times \left[\deg(x_{k_i-2})\left(\mathsf{\Delta}_{\overline{\calV_{i-1}\setminus\{x_0^i, \dots, x_{k_i-3}^i\}}}\right)^{-1}_{x_{k_i-2} x_{k_i-2}}\right]  p_{x_{k_i-2} x_{k_i-1}}.
    \label{eq:before_telescoping}
\end{align}
for a lasso, and the same equality without $\alpha(c_i)$ for a branch.

Finally, we note that, for $\calA\subset \calV$ and $x \in \calV\setminus \calA$,
\[
    \left((\mathsf{\Delta}_{\overline{\calA}})^{-1}\right)_{x x} = \frac{\det(\mathsf{\Delta}_{\overline{\calA\cup\{x\}}})}{\det(\mathsf{\Delta}_{\overline{\calA}})}.
\]
In particular, \eqref{eq:before_telescoping} is a telescopic product, and 
\[
    \sum_{\bfm \in \mathbb{N}^{k_i-1}} \mathbb{P}(\calE_{\bfm}^i) = w_{x^i_0 x^i_1} \dots w_{x^i_{k_i-2} x^i_{k_i-1}} 
    \times  \frac{ \det \mathsf{\Delta}_{\overline{\calV_{i-1}\cup\{x_0^i, \dots, x_{k_i-2}^i\}}}} { \det \mathsf{\Delta}_{\overline{\calV_{i-1}}} } \times \alpha(c_i),
\]
for a lasso, and the same without $\alpha(c_i)$ for a branch.
Plugging back into the decomposition \eqref{e:decomposition_in_fund_event}, we obtain \eqref{e:correctness_determinantal}, which concludes the proof. 

\subsection{Law of the number of steps to complete \cyclepopping{}}
Let $T$ be the number of steps to complete \cyclepopping{}. 
We characterise the law of $T$ through its moment generating function (MGF). 
\begin{proposition}\label{prop:moments}
    Let $t \in [-1,1]$. If Assumption~\ref{ass:non-trivial} and Assumption~\ref{ass:weak} hold, we have
    \begin{equation}
        \E[t^T] = t^{n}\det\left(\I + (1-t) \mathsf{\Pi} (\I - \mathsf{\Pi})^{-1}\right)^{-1} = t^{n}\frac{\det(\I - \mathsf{\Pi})}{\det(\I - t\mathsf{\Pi})},\label{e:MGF_Pi}
    \end{equation}
    where $\mathsf{\Pi}$ is defined in \eqref{eq:Pi}.
\end{proposition}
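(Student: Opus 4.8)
The plan is to re-run the last-exit decomposition of \cref{sec:proof_correctness} with a single twist: weight every run of the algorithm by $t$ raised to the total number of simple-random-walk steps it makes, so that the bookkeeping automatically produces $\E[t^T]$. The key observation is that, on the event $\calE^i_{\bfm}$ of \eqref{eq:fund_event}, the walk that grows $\calJ_i$ makes exactly $|\calJ_i| + (m_1 + \dots + m_{k_i-1})$ steps: one step per edge of $\calJ_i$, plus the lengths of the $k_i-1$ based loops that are popped along the way. Hence the $t$-weight of that walk factors as $t^{|\calJ_i|}\prod_{j} t^{m_j}$. Summing over $i$ and recalling that an oriented CRSF on $n$ vertices has exactly $n$ edges, this already encodes the heuristic identity $T \overset{\mathrm{law}}{=} n + \sum(\text{popped loop lengths})$ announced in \eqref{eq:law_of_T_Poisson}; the generating function below turns it into a proof.

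Carrying the factor $t^{m_j}$ through \eqref{e:decomposition_in_fund_event}--\eqref{eq:fund_event_markovied}, the sum over $\bfm \in \mathbb{N}^{k_i-1}$ now replaces each Green quantity $G(1, x, x; \calA) = \deg(x)\,((\mathsf{\Delta}_{\overline\calA})^{-1})_{xx}$ of \cref{sec:all_together} by its genuine generating-function analogue $G(t, x, x; \calA) = ((\I - t\mathsf{\Pi}_{\overline\calA})^{-1})_{xx}$ from \eqref{e:green_as_neumann_series}. Using the cofactor identity $((\I - t\mathsf{\Pi}_{\overline\calA})^{-1})_{xx} = \det(\I - t\mathsf{\Pi}_{\overline{\calA\cup\{x\}}})/\det(\I - t\mathsf{\Pi}_{\overline\calA})$ in place of the matrix-tree identity for $\mathsf{\Delta}$, the same telescoping as in \cref{sec:all_together} yields, for each fixed CRSF $\calF$ with stages $\calF_1\subset\cdots\subset\calF_{\kappa_\iota(\calF)}=\calF$,
\[
\sum_{\substack{\text{runs}\\\text{outputting }\calF}} t^{\#\text{steps}}\,\mathbb{P}(\cdot) \;=\; \frac{t^{n}}{\det(\I - t\mathsf{\Pi})}\;\prod_{e\in\calF}p_e\prod_{\substack{\text{oriented}\\\text{cycle }c\subseteq\calF}}\alpha(c),
\]
where the determinant ratios telescope across the $\kappa_\iota(\calF)$ independent walks to $\det(\I - t\mathsf{\Pi}_{\overline\calV})/\det(\I - t\mathsf{\Pi}_{\overline\emptyset}) = 1/\det(\I - t\mathsf{\Pi})$, and the powers of $t$ collect to $t^{\sum_i|\calJ_i|} = t^{|\calF|} = t^n$.

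Summing this over all CRSFs $\calF$ in the support of \eqref{eq:proba_CRSF_oriented} and recognizing $Z_\alpha = \sum_\calF \prod_{e\in\calF}p_e\prod_{c\subseteq\calF}\alpha(c)$ gives $\E[t^T] = t^{n} Z_\alpha/\det(\I - t\mathsf{\Pi})$. Specializing at $t=1$ forces $Z_\alpha = \det(\I - \mathsf{\Pi})$, since the left-hand side is then $\mathbb{P}(T<\infty) = 1$ by the correctness statement \cref{prop:correctness}; equivalently, $Z_\alpha = \det(\mathsf{\Delta})/\det(\mathsf{D}) = \det(\I - \mathsf{\Pi})$ because every vertex is the tail of exactly one edge of an oriented CRSF, so that $\prod_{e\in\calF}p_e = \prod_{e\in\calF}w_e/\det(\mathsf{D})$. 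Therefore $\E[t^T] = t^n\det(\I - \mathsf{\Pi})/\det(\I - t\mathsf{\Pi})$, and the claimed expression follows by taking determinants in the factorization $\I - t\mathsf{\Pi} = \big(\I + (1-t)\mathsf{\Pi}(\I - \mathsf{\Pi})^{-1}\big)(\I - \mathsf{\Pi})$.

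The only delicate point — and the place where I expect to spend effort — is legitimizing these manipulations on the whole interval $t\in[-1,1]$, rather than just on $(0,1]$ where all summands are nonnegative and Tonelli's theorem justifies the interchanges. Since $\mathsf{\Pi}_{\overline\calA}$ is similar to a principal submatrix of the Hermitian matrix $\mathsf{D}^{-1/2}(\mathsf{W}\odot\mathsf{\Phi})\mathsf{D}^{-1/2}$, Cauchy interlacing together with \cref{prop:Pi} and \cref{rem:ass2} gives that its spectral radius satisfies $\rho(\mathsf{\Pi}_{\overline\calA})\le\rho(\mathsf{\Pi})<1$, so the Neumann series \eqref{e:green_as_neumann_series} and all the determinants above are well defined for $|t|\le 1$ and $\det(\I - t\mathsf{\Pi})\ne 0$ there. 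Absolute convergence of the rearrangements for $t\in[-1,0)$ then follows because replacing $t$ by $|t|\in(0,1)$ turns the computation into one with nonnegative summands whose total is $\E[|t|^T]\le\E[1^T]=1<\infty$; the boundary value $t=1$ is recovered by continuity (or reused directly as above), and $t=0$ is immediate since $T\ge n\ge 1$. Modulo this bookkeeping and these convergence checks, the argument is a verbatim rerun of \cref{sec:proof_correctness} with $t$-weights, the new ingredient being $G(t,\cdot)$ in place of $G(1,\cdot)$.
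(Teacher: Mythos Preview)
Your proposal is correct and is essentially the same argument as the paper's: both reweight each SRW step by $t$ (the paper phrases it as formally replacing $w_e$ by $tw_e$ in the correctness proof) and then use the matrix-tree normalization to identify the resulting ratio $\det(\I-\mathsf{\Pi})/\det(\I-t\mathsf{\Pi})$. The paper's version is terser---it jumps directly to the CRSF sum and invokes Forman's theorem---whereas you spell out the same telescoping through $G(t,\cdot)$ and add the interlacing/absolute-convergence bookkeeping for $t\in[-1,0)$ that the paper leaves implicit.
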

\begin{proof}
    In light of the proof of correctness in Section~\ref{sec:proof_correctness}, this result is simply obtained as follows: the weight $w_e$ of each edge followed by the SRW is formally replaced by $tw_e$. This accounts for one power of $t$ for each step taken by the SRW. 
    Then, we have
    \begin{align}
        \E[t^T] &= \sum_{\substack{\calF \text{ CRSF}\\ \text{oriented}}} \frac{\prod_{e\in \calF} tw_{e} \times \prod_{c \in \calF}(1- \cos\theta(c))}{\det(\mathsf{D})\det(\I - t \mathsf{D}^{-1}(\mathsf{W}\odot \mathsf{\Phi}))}\nonumber\\
        &=\sum_{\substack{\calU \text{ CRSF}}} \frac{\prod_{e\in \calU} tw_{e} \times \prod_{c \in \calU}2(1- \cos\theta(c))}{\det(\mathsf{D} - t \mathsf{W}\odot \mathsf{\Phi})}\nonumber\\
        &= \frac{t^n}{\det(\mathsf{D} - t \mathsf{W} \odot \mathsf{\Phi})} \det(\mathsf{D} - \mathsf{W} \odot \mathsf{\Phi}) = \frac{t^n}{\det(\I - t \mathsf{\Pi})} \det(\I - \mathsf{\Pi}),\label{eq:ratio_det}
    \end{align}
    where we used the generalized matrix-tree theorem of \citet{FORMAN199335} at the next-to-last equality, which is simply the normalization of \eqref{eq:proba_CRSF}.
    Note that $\I + \mathsf{\Pi}$ is invertible in the light of Remark~\ref{rem:ass2} and Proposition~\ref{prop:Pi}.
    This completes the proof.
\end{proof}
\begin{proposition}\label{prop:mean_var_using_cumulants}
    Under Assumption~\ref{ass:non-trivial} and Assumption~\ref{ass:weak}, it holds that 
    \[
        \E[t^T] = t^{n} \exp \sum_{k=1}^{+\infty} \frac{(1-t)^k}{k} \Tr\left(\mathsf{\Pi}(\I -\mathsf{\Pi})^{-1} \right)^k.
    \]
    In particular, we have the following identities
        \[
            \E[T] = n + \Tr\left(\mathsf{\Pi}(\I -\mathsf{\Pi})^{-1} \right) 
            \text{ and }
            \var[T] = \Tr\left(\mathsf{\Pi}(\I -\mathsf{\Pi})^{-1}\right) + \Tr\left(\mathsf{\Pi}(\I -\mathsf{\Pi})^{-1} \right)^2,
        \]
        as well as a formula for the expected parity $\E[(-1)^T] = \det(\mathsf{\Pi} - \I)/\det(\mathsf{\Pi} + \I)$ of the number of steps.
\end{proposition}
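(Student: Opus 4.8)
The plan is to read off all three identities from the closed-form moment generating function established in \cref{prop:moments}; no new probabilistic ingredient is needed, and the argument parallels the one behind \cref{corol:variance_of_T} for spanning trees. Throughout I abbreviate $\mathsf{M} = \mathsf{\Pi}(\I-\mathsf{\Pi})^{-1}$, which is well defined since $\I - \mathsf{\Pi}$ is invertible under \cref{ass:non-trivial} by \cref{prop:Pi}.

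First I would take the logarithm of $\E[t^T] = t^n\det(\I + (1-t)\mathsf{M})^{-1}$ and use the identity $\log\det(\I + X) = \Tr\log(\I + X)$ together with the Mercator expansion $\log(\I + X) = \sum_{k\ge 1}\frac{(-1)^{k-1}}{k}X^k$ applied to $X = (1-t)\mathsf{M}$; taking traces term by term and exponentiating yields the claimed series form of $\E[t^T]$. The one point that requires genuine care is the domain of validity: the power series in $1-t$ converges only where $|1-t|$ times the spectral radius of $\mathsf{M}$ is below one, i.e.\ on a neighbourhood of $t=1$. This is enough to justify the differentiations used for the moments, and for the evaluation at $t=-1$ one simply keeps the rational closed form of \cref{prop:moments} rather than the series. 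Apart from this convergence remark and the usual sign bookkeeping in the logarithmic expansion, there is no real obstacle.

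For the mean and variance I would treat $t \mapsto \log\E[t^T]$ (equivalently $u\mapsto \log\E[e^{uT}]$ via $t = e^u$) as the cumulant generating function of $T$, so that $\E[T]$ and $\var[T]$ are its first two cumulants, computed by differentiating at $t=1$. The factor $t^n$ contributes $n$ to the first cumulant and nothing to the second, while among the trace terms only the $k=1$ term survives the first derivative at $t=1$ and only the $k=1$ and $k=2$ terms survive the second; this gives $\E[T] = n + \Tr\mathsf{M}$ and $\var[T] = \Tr\mathsf{M} + \Tr\mathsf{M}^2$, which are the stated formulas. Finally, for the parity I would substitute $t=-1$ into \cref{prop:moments}: since $1-(-1)=2$ and $\I + 2\mathsf{M} = \bigl((\I-\mathsf{\Pi}) + 2\mathsf{\Pi}\bigr)(\I-\mathsf{\Pi})^{-1} = (\I+\mathsf{\Pi})(\I-\mathsf{\Pi})^{-1}$, with $\I+\mathsf{\Pi}$ invertible by \cref{rem:ass2} and \cref{prop:Pi}, one obtains $\E[(-1)^T] = (-1)^n\det(\I + 2\mathsf{M})^{-1} = (-1)^n\det(\I-\mathsf{\Pi})/\det(\I+\mathsf{\Pi}) = \det(\mathsf{\Pi}-\I)/\det(\mathsf{\Pi}+\I)$, as claimed.
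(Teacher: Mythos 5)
Your proof is correct and follows essentially the same route as the paper's: take the logarithm of the MGF from \cref{prop:moments}, expand the $\Tr\log$, read off the first two cumulants by differentiating at $t=1$ (equivalently at $u=0$ with $t=e^u$), and substitute $t=-1$ into the closed form for the parity. Your added care about the convergence domain of the series and the explicit factorization $\I+2\mathsf{\Pi}(\I-\mathsf{\Pi})^{-1}=(\I+\mathsf{\Pi})(\I-\mathsf{\Pi})^{-1}$ are details the paper leaves implicit; note only that the Mercator expansion produces $(t-1)^k$ rather than $(1-t)^k$ in each term, which is consistent with your (correct) moment formulas and indicates a sign typo in the displayed series of the statement.
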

\begin{proof}
    These identities are merely obtained by differentiating the cumulant generating function $\log \E[e^{\alpha T}] = \sum_{k\geq 1} \kappa_k(T)\alpha^k / k!$ with respect to $\alpha$ such that $\alpha <0$.
    This generating function is obtained by taking the logarithm of \eqref{e:MGF_Pi} evaluated at $t=e^\alpha$, which gives $\log \E[e^{\alpha T}] = n \alpha - \Tr \log\left(\mathbb{I} + (1-e^\alpha)\mathsf{\Pi}(\I -\mathsf{\Pi})^{-1}\right)$.
    Lastly, the formula for $\E[(-1)^T]$ is obtained by taking $t=-1$ in Proposition~\ref{prop:moments}.
\end{proof}
Recall that $\I - \mathsf{\Pi}$ is non-singular in light of Proposition~\ref{prop:Pi}.
Moreover, letting $\mathsf{\Delta_{N}} = \mathsf{D}^{-1/2}\mathsf{\Delta}\mathsf{D}^{-1/2}$ be the so-called \emph{normalized} magnetic Laplacian, we see that $\E[T] = \Tr\mathsf{\Delta_{N}}^{-1}$.
The expected running time is thus intuitively low if the least eigenvalue of the normalized magnetic Laplacian is not close to zero.
The variational formula for this least eigenvalue,
\[
    \lambda_{\min}(\mathsf{\Delta_{N}}) = \min_{\|\mathsf{v}\|_2 = 1} \mathsf{v}^*\mathsf{\Delta_{N}}\mathsf{v} = \min_{\|\mathsf{v}\|_2 = 1} \sum_{x\sim y} w_{xy}\left|\frac{\mathsf{v}_x}{\sqrt{\deg(x)}} - \frac{\phi_{yx}\mathsf{v}_y}{\sqrt{\deg(y)}}\right|^2,
\]
shows that $\lambda_{\min}(\mathsf{\Delta_{N}})$ relates to the inconsistency the $\Uone$-connection graph, see \citep{FanBar22}.
\subsection{Multi-type spanning forests}
A multi-type spanning forest $\calU$ (MTSF) is a spanning graph where each connected component is either a tree or a cycle-rooted tree, as defined by \citet{kenyon2019}, who also proves that the measure on MTSFs given by
\begin{equation}
    \mu_{\mathrm{MTSF}}(\calU) = \frac{q^{\rho(\calU)}}{\det(\mathsf{\Delta} + q \I)}  \Big(\prod_{e\in \calU} w_e\Big)    \prod_{\substack{\text{non-oriented}\\\text{cycle }c \subseteq \calU }}
    \Big(2 - 2\cos \theta(c)\Big).   \label{eq:MTSF_measure} 
\end{equation}  
is determinantal.
This measure \eqref{eq:MTSF_measure} has been used to build sparsifiers of the regularized magnetic Laplacian in \citep{FanBar22}, and we are thus interested in the computational complexity of sampling from \eqref{eq:MTSF_measure}.
As a simple extension of the previous discussions,  Proposition~\ref{prop:mean_var_MTSF} below gives the mean and variance of the time to sample from \eqref{eq:ST_measure} with a variant of \cyclepopping{}, denoted by \cyclepopping{}$_q$ to emphasize\footnote{Note that \cyclepopping{} is simply \cyclepopping{}$_0$.} the dependence of the parameter $q \geq 0$.
It is briefly described below and we refer the reader to \citep{FanBar22}.

As in Section~\ref{sec:SF_case}, define the auxiliary graph $G_r$ where each node of $G$ is connected to an auxiliary root $r\notin\calV$, with an edge weight $q > 0$. 
Then, under Assumption~\ref{ass:non-trivial} and Assumption~\ref{ass:weak}, \cyclepopping{}$_q$ proceeds as \cyclepopping{} on $G_r$ (see Section~\ref{sec:formalization}), by adding a branch or lasso at each iteration, but starting from $\calF_0 = \{r\}$.
This way, the SRWs are also absorbed upon reaching $r$.
Finally, the spanning subgraph of $G_r$ produced by \cyclepopping{}$_q$ is postprocessed by removing $r$ and all the edges of which $r$ is an endpoint.
The resulting subgraph of $G$ is then an MTSF distributed as \eqref{eq:ST_measure}.
\begin{proposition}\label{prop:mean_var_MTSF}
    Let Assumption~\ref{ass:non-trivial} and Assumption~\ref{ass:weak} hold. Let $q>0$ and let $T$ denote the number of steps to sample a MTSF with \cyclepopping{}$_q$ according to \eqref{eq:MTSF_measure}.
    We have
        \[
            \E[T] = \Tr\left((\mathsf{D} + q\I)(\mathsf{\Delta} + q \I)^{-1}\right)
        \]
        and 
        \[
            \var[T] =  \Tr\left(\mathsf{M}^{(\mathsf{\Phi})}_q (\mathbb{I}- \mathsf{M}^{(\mathsf{\Phi})}_q)^{-1}\right) + \Tr\left(\mathsf{M}^{(\mathsf{\Phi})}_q (\mathbb{I}- \mathsf{M}^{(\mathsf{\Phi})}_q)^{-1}\right)^2,
        \]
        with $\mathsf{M}^{(\mathsf{\Phi})}_q = (\mathsf{D}+q\mathbb{I})^{-1}(\mathsf{W}\odot \mathsf{\Phi})$.
\end{proposition}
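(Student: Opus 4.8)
The plan is to mirror the derivation of \cref{prop:mean_var_SF}: realize \cyclepopping{}$_q$ as \cyclepopping{} run on the auxiliary connection graph $G_r$ started from the singleton $\calF_0 = \{r\}$, identify the associated connection-aware transition matrix, and then invoke the CRSF running-time results \cref{prop:moments} and \cref{prop:mean_var_using_cumulants} in place of Marchal's plain-graph statements. The block-matrix computation in the proof of \cref{prop:mean_var_SF} is reused almost verbatim, the only difference being that the combinatorial Laplacian is replaced by the magnetic one.

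First I would set up $G_r$ and write its magnetic Laplacian in block form. Since the $n$ new edges joining each vertex of $G$ to the auxiliary root $r$ carry weight $q$ and trivial phase $\phi_{xr}=1$, the magnetic Laplacian of $G_r$ reads
\[
    \mathsf{\Delta}' = \begin{pmatrix} \mathsf{\Delta} + q\I & -q\mathsf{1}\\ -q\mathsf{1}^\top & nq\end{pmatrix} = \mathsf{D}' - \mathsf{W}'\odot\mathsf{\Phi}', \qquad \mathsf{D}' = \Diag(\mathsf{D}+q\I,\ nq).
\]
Deleting the row and column indexed by $r$ yields $\mathsf{D}'_{\bar r} = \mathsf{D}+q\I$ and $(\mathsf{W}'\odot\mathsf{\Phi}')_{\bar r} = \mathsf{W}\odot\mathsf{\Phi}$, so the connection-aware transition matrix $\mathsf{\Pi}' = (\mathsf{D}')^{-1}(\mathsf{W}'\odot\mathsf{\Phi}')$ of $G_r$ satisfies $\mathsf{\Pi}'_{\bar r} = (\mathsf{D}+q\I)^{-1}(\mathsf{W}\odot\mathsf{\Phi}) = \mathsf{M}^{(\mathsf{\Phi})}_q$. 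Because the walker on $G_r$ is absorbed at $r$, it never closes a cycle through $r$, so the only cycles \cyclepopping{}$_q$ can pop are cycles of $G$; thus \cref{ass:weak} on $G$ suffices, while \cref{ass:non-trivial} and \cref{ass:non-trivial-sign-flipped} for $G_r$ hold since $G_r\supseteq G$ carries the same connection on $G$ and \cref{rem:ass2} applies. A quadratic-form estimate as in the proof of \cref{prop:Pi} then gives, for all $\mathsf{v}\in\mathbb{C}^n$, $\mathsf{v}^*(\mathsf{D}+q\I \pm \mathsf{W}\odot\mathsf{\Phi})\mathsf{v} = q\|\mathsf{v}\|_2^2 + \mathsf{v}^*(\mathsf{D}\pm\mathsf{W}\odot\mathsf{\Phi})\mathsf{v} > 0$, using positive semidefiniteness of $\mathsf{\Delta}$ and of the sign-flipped magnetic Laplacian; hence $\mathsf{D}+q\I \pm (\mathsf{W}\odot\mathsf{\Phi})$ is positive definite and the spectral radius of $\mathsf{M}^{(\mathsf{\Phi})}_q$ is strictly less than $1$, which makes the Neumann series below converge for all $t\in[-1,1]$.

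Next, the last-exit decomposition and Green-generating-function computation of \cref{sec:proof_correctness}, carried out on $G_r$ starting from $\calF_0 = \{r\}$, is identical to the proof of \cref{prop:moments} once $\mathsf{\Pi}$ is replaced by $\mathsf{\Pi}'_{\bar r} = \mathsf{M}^{(\mathsf{\Phi})}_q$ and one observes that the deterministic part of the step count equals the number of edges of the CRSF of $G_r$, namely $|\calF| + \rho(\calF) = n$. This gives, for $t\in[-1,1]$,
\[
    \E[t^T] = t^{n}\,\frac{\det\big(\I - \mathsf{M}^{(\mathsf{\Phi})}_q\big)}{\det\big(\I - t\,\mathsf{M}^{(\mathsf{\Phi})}_q\big)}.
\]
Applying \cref{prop:mean_var_using_cumulants} with $\mathsf{\Pi}$ replaced by $\mathsf{M}^{(\mathsf{\Phi})}_q$ yields $\E[T] = n + \Tr\big(\mathsf{M}^{(\mathsf{\Phi})}_q(\I - \mathsf{M}^{(\mathsf{\Phi})}_q)^{-1}\big)$ and the stated expression for $\var[T]$. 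Finally I would simplify the mean: from $\I - \mathsf{M}^{(\mathsf{\Phi})}_q = (\mathsf{D}+q\I)^{-1}(\mathsf{\Delta}+q\I)$ we get $(\I - \mathsf{M}^{(\mathsf{\Phi})}_q)^{-1} = (\mathsf{\Delta}+q\I)^{-1}(\mathsf{D}+q\I)$, hence $\E[T] = \Tr\big((\I - \mathsf{M}^{(\mathsf{\Phi})}_q)^{-1}\big) = \Tr\big((\mathsf{D}+q\I)(\mathsf{\Delta}+q\I)^{-1}\big)$ by cyclicity of the trace.

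The main obstacle is the bookkeeping around the auxiliary graph: confirming that the spectral assumptions genuinely transfer to $G_r$ (so that \cref{prop:moments} and \cref{prop:mean_var_using_cumulants} apply with $\mathsf{M}^{(\mathsf{\Phi})}_q$ in the role of $\mathsf{\Pi}$), in particular that $\mathsf{M}^{(\mathsf{\Phi})}_q$ has spectral radius strictly below $1$, and checking that the absorbing root removes any need for \cref{ass:weak} to constrain the new cycles through $r$. Everything past that point is the same determinant algebra already performed in \cref{sec:proof_correctness} and in the proof of \cref{prop:mean_var_SF}.
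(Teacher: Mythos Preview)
Your proposal is correct and follows the approach the paper itself sketches: it says only that the proof ``follows the same lines as the proof of \cref{prop:mean_var_SF}, and we thus omit it,'' i.e., run the auxiliary-graph construction with the magnetic Laplacian in place of the combinatorial one and read off $\mathsf{M}^{(\mathsf{\Phi})}_q$ from the $r$-deleted block. You supply more detail than the paper does---in particular the explicit check that $\mathsf{M}^{(\mathsf{\Phi})}_q$ has spectral radius strictly less than $1$ via the quadratic form, and the observation that absorption at $r$ means no cycle through $r$ is ever tested---but the route is the same.
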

The proof of Proposition~\ref{prop:mean_var_MTSF} follows the same lines as the proof of Proposition~\ref{prop:mean_var_SF}, and we thus omit it.
\section{Proof of correctness beyond the determinantal case}
\label{s:Poisson}
We now extend the proof that Algorithm~\ref{a:cyclepopping} samples from the measure \eqref{eq:proba_CRSF_non_det}, by removing any assumption on the form of the cycle weight function $\alpha$.
In particular, we do not assume \eqref{eq:determinantal_cycle_weights}, and the probability measure \eqref{eq:proba_CRSF_non_det} is not necessarily determinantal over the graph edges.
Furthermore, we generalize a bit the framework by considering mesures over oriented Cycle Rooted Spanning Forests (oCRSF), which are spanning subgraphs for which each connected component has exactly one oriented cycle or exactly one backtrack. 
We define the measure
\begin{equation}
    \label{eq:proba_CRSF_non_det_oriented}
    \mu_{\mathrm{oCRSF},\alpha}(\calF)
    = \frac{1}{Z_\alpha} \prod_{e \text{ edge of } \calF} p_e \times
    \prod_{\substack{c \text{ oriented cycle }\\{\text{or backtrack of } \calF }}}
    \alpha(c),
\end{equation}
By assigning a weight zero to backtracks and forgetting cycle orientation, we can then obtain a sample from \eqref{eq:proba_CRSF_non_det}.
\citet{KK2017} already proved the correctness of Algorithm~\ref{a:cyclepopping} to sample from \eqref{eq:proba_CRSF_non_det_oriented} using a stacks-of-cards argument that we formalize in Section~\ref{sec:prs} in framework of partial rejection sampling.
But first, by adapting Section~\ref{sec:proof_correctness}, we provide a different proof, more in line with the seminal proof of \cite{Marchal99} for spanning trees.
In particular, a by-product of this new proof is the law of the running time of the algorithm.
\subsection{Loop measures\label{s:loop_measures}}
We follow \cite{LaLi10} in defining so-called \emph{loop measures}, with a slight adaptation to allow for cycles to be accepted with non-zero probability.

For each oriented cycle $c$ of the graph (including backtracks), we have a weight
$
    \alpha(c) \in [0,1],
$
which is interpreted as the probability to accept $c$.
To avoid trivial measures, we take the following assumption which implies Assumption~\ref{ass:non-trivial} when $\alpha(\cdot) = 1 - \cos \theta(\cdot)$.
\begin{assumption}[Non-trivial cycle weight]\label{ass:non-trivial-weights}
    There is at least one cycle $c_\star$ such that $\alpha(c_\star) > 0$. 
\end{assumption}
We first define a measure on based loops by
\begin{equation}
    \mu_\alpha(\gamma) =  
    \begin{cases}
        q(\gamma) \prod_{c\in \cycles(\gamma)}(1 - \alpha(c)) & \text{ if } |\gamma|\geq 2\\
        1 & \text{ if } |\gamma| = 0
    \end{cases},
    \label{eq:mu_alpha}
\end{equation}
where $q(\gamma)$ is the product of the transition probabilities along $\gamma$, as defined in \eqref{eq:loop_weight}, and $\cycles(\gamma)$ is the multiset of popped cycles defined in Notation~\ref{not:pieces_loop}.
Note that $\mu_\alpha$ is not necessarily a probability measure.
An immediate property of the measure \eqref{eq:mu_alpha} is that it behaves well under concatenation.
\begin{lemma}\label{lem:concat}
    Let $\gamma_1$ and $\gamma_2$ be two loops based at $x\in \calV$, and  $\gamma_1\circ \gamma_2$ the concatenation of $\gamma_1$ and $\gamma_2$.
    Then $\mu_\alpha(\gamma_1\circ \gamma_2) = \mu_\alpha(\gamma_1) \mu_\alpha(\gamma_2)$.
\end{lemma}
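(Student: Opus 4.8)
The plan is to prove the multiplicativity of $\mu_\alpha$ under concatenation directly from the definition in \cref{eq:mu_alpha}, by splitting the statement into the two factors that make up $\mu_\alpha(\gamma)$: the loop weight $q(\cdot)$ and the cycle-popping product $\prod_{c}(1-\alpha(c))$.

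\begin{proof}[Proof of \cref{lem:concat}]
If either $\gamma_1$ or $\gamma_2$ is trivial (of zero length), then $\gamma_1\circ\gamma_2$ equals the other loop and the claim holds since $\mu_\alpha$ of a trivial loop is $1$. Assume now $|\gamma_1|, |\gamma_2|\geq 2$, so that $|\gamma_1\circ\gamma_2|\geq 2$ as well. Write $\gamma_1 = (x, a_1,\dots,a_{k-1},x)$ and $\gamma_2 = (x, b_1,\dots,b_{\ell-1},x)$, so that $\gamma_1\circ\gamma_2 = (x,a_1,\dots,a_{k-1},x,b_1,\dots,b_{\ell-1},x)$. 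First, from the definition \cref{eq:loop_weight} of $q$ as the product of transition probabilities $p_{\cdot\cdot}$ along the edges traversed, the edge sequence of $\gamma_1\circ\gamma_2$ is exactly the concatenation of those of $\gamma_1$ and $\gamma_2$, hence $q(\gamma_1\circ\gamma_2) = q(\gamma_1)\,q(\gamma_2)$ immediately.

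Second, I claim $\cycles(\gamma_1\circ\gamma_2) = \cycles(\gamma_1) \uplus \cycles(\gamma_2)$ as multisets, where $\uplus$ denotes multiset union (adding multiplicities). Recall from \cref{not:pieces_loop} that $\cycles(\gamma)$ is obtained by traversing the oriented path of $\gamma$ and chronologically erasing each cycle as soon as it closes up, using a stack-type bookkeeping of the visited-but-not-yet-closed vertices. The key observation is that when the traversal of $\gamma_1\circ\gamma_2$ reaches the midpoint, the walker is back at the base vertex $x$ \emph{having erased exactly $\cycles(\gamma_1)$}: indeed, a full excursion from $x$ back to $x$ leaves an empty pending stack because any cycle formed during $\gamma_1$ is closed and erased before $x$ could be revisited in a way that survives into $\gamma_2$ --- more precisely, the loop-erasure of $\gamma_1$ as a standalone based loop at $x$ reduces it to the trivial loop, so the pending stack after processing the first half is just $(x)$. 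Consequently the erasure procedure restarted on the second half behaves exactly as it would on $\gamma_2$ alone, yielding the extra cycles $\cycles(\gamma_2)$. Taking the product of $(1-\alpha(c))$ over the combined multiset then factorizes:
\[
    \prod_{c\in\cycles(\gamma_1\circ\gamma_2)}(1-\alpha(c)) = \prod_{c\in\cycles(\gamma_1)}(1-\alpha(c)) \prod_{c\in\cycles(\gamma_2)}(1-\alpha(c)).
\]
Combining the two factorizations gives $\mu_\alpha(\gamma_1\circ\gamma_2) = \mu_\alpha(\gamma_1)\mu_\alpha(\gamma_2)$.
\end{proof}

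The routine part is the $q$-factor, which is immediate from the definition. The step I expect to be the main obstacle --- and the one deserving the most care --- is the multiset identity $\cycles(\gamma_1\circ\gamma_2) = \cycles(\gamma_1)\uplus\cycles(\gamma_2)$: one must argue cleanly that the chronological loop-erasure of the concatenation, when it crosses the junction vertex $x$, has an empty pending stack, so that no cycle straddles the two halves. This follows because $x$ is the base of both loops and each half is itself a closed excursion at $x$, but spelling out the stack-based erasure of \cref{not:pieces_loop} rigorously (rather than hand-waving with "loops are erased as they form") is where the real content lies. A clean way to phrase it is to note that loop-erasing $\gamma_1$ as a based loop at $x$ returns the trivial loop $(x)$, hence the erasure state after the first half of $\gamma_1\circ\gamma_2$ coincides with the initial state, and the erasure is a "Markovian" left-to-right procedure depending only on the current pending stack.
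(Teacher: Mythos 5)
Your proof is correct, and it supplies exactly the argument the paper leaves implicit: the paper states this lemma without proof, calling the property "immediate." The two factorizations you give (the $q$-factor from the edge-sequence concatenation, and the multiset identity $\cycles(\gamma_1\circ\gamma_2)=\cycles(\gamma_1)\uplus\cycles(\gamma_2)$, justified by the fact that the pending stack of the chronological erasure returns to $(x)$ at the end of the first excursion) are the right way to make it rigorous.
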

Let $x\in \calV\setminus \calA$ with $\calA\subset \calV$, define the Green generating function by
\[
    G_\alpha(t,x,x;\calA) = \sum_{\substack{\gamma \in\BLoops(\overline{\calA},x)}} t^{|\gamma|}\mu_\alpha(\gamma), \quad t\in (0,1],
\]
where the sum goes over loops based at $x$ and included in the complement of $\calA$.
Under Assumption~\ref{ass:weak}, this definition reduces to the Green function defined in Section~\ref{s:Green_generating_fct} when $\alpha(\cdot) = 1 - \cos \theta(\cdot)$.
\begin{lemma}\label{lem:Green_function_loops}
    Let $\calA\subset \calV$ be a subset of nodes and let $x\in \calV\setminus \calA$. 
    Under Assumption~\ref{ass:non-trivial-weights}, $
    G_\alpha(t,x,x;\calA) < +\infty
    $.
\end{lemma}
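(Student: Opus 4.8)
The plan is to reduce to $t=1$ and then to bound $G_\alpha(1,x,x;\calA)$ by the expectation of the stopping time $\tau_{\mathrm{stop},\calA}=\tau_{\to\calA}\wedge\min_{c}\tau_{\circlearrowleft c}$, which I will show has exponentially decaying tails thanks to \cref{ass:non-trivial-weights}. First, since $q(\gamma)\geq 0$ and $1-\alpha(c)\in[0,1]$, the weights $\mu_\alpha(\gamma)$ are nonnegative, so $t\mapsto G_\alpha(t,x,x;\calA)$ is nondecreasing on $(0,1]$ and it suffices to treat $t=1$. Grouping the sum defining $G_\alpha(1,x,x;\calA)$ by loop length exactly as in \cref{s:Green_generating_fct} -- and using that $\mu_\alpha(\gamma)=q(\gamma)\prod_{c\in\cycles(\gamma)}(1-\alpha(c))$ is precisely the probability that the SRW from $x$ traces $\gamma$ and rejects every cycle it forms -- one obtains $G_\alpha(1,x,x;\calA)=\sum_{k\geq 0}g_k(x,\calA)$, with $g_k(x,\calA)=\mathbb{Q}_x(X_k=x,\ k<\tau_{\mathrm{stop},\calA})$ the analogue of \eqref{eq:Green_k} for the general weight $\alpha$. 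Hence $G_\alpha(1,x,x;\calA)=\mathbb{E}_x\big[\#\{0\leq k<\tau_{\mathrm{stop},\calA}:X_k=x\}\big]\leq\mathbb{E}_x[\tau_{\mathrm{stop},\calA}]$, where $\mathbb{E}_x$ denotes expectation under $\mathbb{Q}_x$, so it remains to prove $\mathbb{E}_x[\tau_{\mathrm{stop},\calA}]<\infty$.

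For that I would establish a uniform bound: there exist $L\in\mathbb{N}$ and $\varepsilon>0$ with $\mathbb{Q}_y(\tau_{\mathrm{stop},\calA}\leq L)\geq\varepsilon$ for every $y\in\calV\setminus\calA$. Fix a cycle $c_\star$ with $\alpha(c_\star)>0$ (\cref{ass:non-trivial-weights}), set $L=|\calV|+|c_\star|$ and $p_{\min}=\min_{x\sim y}p_{xy}>0$. Since $G$ is connected, for each $y\in\calV\setminus\calA$ there is a walk $w_y$ from $y$ of length at most $L$ that either visits a node of $\calA$ or ends by traversing $c_\star$ entirely (a walk from $y$ to $\calA$ if $\calA\neq\emptyset$; otherwise a walk from $y$ to $c_\star$ followed by one full loop around $c_\star$). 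On the event that the SRW follows $w_y$ on its first $|w_y|$ steps and -- in the second case only -- the Bernoulli attached to $c_\star$ at the completion time equals $1$, we have $\tau_{\mathrm{stop},\calA}\leq|w_y|\leq L$; since $p_{\min}\leq 1$, this event has probability at least $p_{\min}^{L}\,\alpha(c_\star)=:\varepsilon>0$. Conditioning on $\sigma(Z_0,\dots,Z_{nL})$ on $\{\tau_{\mathrm{stop},\calA}>nL\}$ (on which $X_{nL}\notin\calA$, so $w_{X_{nL}}$ is defined and the above applies from state $X_{nL}$ with the Bernoullis indexed by times after $nL$), the Markov property yields $\mathbb{Q}_x(\tau_{\mathrm{stop},\calA}>(n+1)L)\leq(1-\varepsilon)\,\mathbb{Q}_x(\tau_{\mathrm{stop},\calA}>nL)$, hence $\mathbb{Q}_x(\tau_{\mathrm{stop},\calA}>nL)\leq(1-\varepsilon)^n$ and $\mathbb{E}_x[\tau_{\mathrm{stop},\calA}]\leq L\sum_{n\geq 0}(1-\varepsilon)^n=L/\varepsilon<\infty$, which concludes.

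The only delicate point, and the one I expect to need the most care, is the independence bookkeeping in this last step: one must check that ``the SRW follows $w_{X_{nL}}$ over the next $|w_{X_{nL}}|$ steps and the relevant Bernoulli fires'' has conditional probability at least $\varepsilon$ given $\sigma(Z_0,\dots,Z_{nL})$. This holds because the event only constrains the increments and the Bernoulli variables indexed by times strictly after $nL$, while the requirement $c_\star\subseteq(X_0,\dots,X_{nL+|w|})$ needed to trigger $\tau_{\circlearrowleft c_\star}$ is automatically fulfilled by the last $|c_\star|$ of those steps, regardless of the past trajectory. Everything else is elementary, and \cref{ass:non-trivial-weights} is genuinely used: for $\alpha\equiv 0$ and $\calA=\emptyset$ one would have $G_0(1,x,x;\emptyset)=\sum_k(\mathsf{P}^k)_{xx}=+\infty$.
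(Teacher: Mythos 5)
Your proof is correct. The reduction to $t=1$ by monotonicity and the probabilistic reading of $G_\alpha(1,x,x;\calA)$ as the expected number of visits to $x$ before $\tau_{\mathrm{stop},\calA}$ coincide with the paper's starting point, and both arguments hinge on the cycle $c_\star$ with $\alpha(c_\star)>0$ from \cref{ass:non-trivial-weights}. Where you diverge is the finiteness step: the paper first discards $\calA$ (using $\BLoops(\overline{\calA},x)\subseteq\BLoops(\calV,x)$), keeps only the single stopping rule $\tau_{\circlearrowleft c_\star}$, and invokes the fact that the number of returns to $x$ before $c_\star$ is accepted is geometric, hence has finite expectation --- asserting rather tersely that the relevant return probability is strictly less than one. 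You instead bound the expected number of visits by $\mathbb{E}_x[\tau_{\mathrm{stop},\calA}]$ and prove the latter finite via a uniform Doeblin-type minorization: from any state, within $L=|\calV|+|c_\star|$ steps one either hits $\calA$ or traverses $c_\star$ and accepts it with probability at least $p_{\min}^{L}\alpha(c_\star)$, giving geometric tails for the stopping time. Your route is longer but actually supplies the connectivity-plus-positivity argument that the paper leaves implicit behind ``which is finite,'' and it yields a stronger quantitative conclusion (exponential tail bounds on $\tau_{\mathrm{stop},\calA}$, uniform in the starting point); the independence bookkeeping you flag --- that the event only constrains increments and Bernoulli variables indexed by times after $nL$ --- is handled correctly.
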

\begin{proof}
    Consider first the case $\calA = \emptyset$ and $t=1$.
    Let $c_\star$ such that $\alpha(c_\star) > 0$. 
    By \eqref{eq:Green_k}, it comes
    \begin{align*}
        G_\alpha(1,x,x;\emptyset) &= \sum_{k=0}^{+\infty}\mathbb{Q}_{x}\Big(X_{k} = x \text{ and } k < \min_{ \text{ cycle }c} \tau_{\circlearrowleft c}\Big)\\
        &\leq \sum_{k=0}^{+\infty}\mathbb{Q}_{x}\Big(X_{k} = x \text{ and } k <  \tau_{\circlearrowleft c_\star}\Big).
    \end{align*}
    The quantity on the right-hand side equals
    \[
        1 + \mathbb{Q}_{x}\Big( \tau^{ (\text{after }1)}_{\to \{x\}} <  \tau_{\circlearrowleft c_\star}\Big)
    \]
    which is finite. 
    To conclude, it is readily checked that 
    $$
    G_\alpha(t,x,x;\calA) \leq  G_\alpha(1,x,x;\calA) \leq G_\alpha(1,x,x;\emptyset)
    $$
    since $t\in (0,1]$ and $\BLoops(\overline{\calA},x)$ is a subset of $\BLoops(\mathcal{V},x)$.
\end{proof}
We shall show that non-determinantality does not prevent to use a telescoping argument in the vein of \eqref{eq:before_telescoping}.
\begin{notation}[number of revisits of based point]\label{not:d(gam)}
    For a loop $\gamma$ based at $x$, denote by $d(\gamma)$ the number of times $\gamma$ comes back to $x$ after its start.
\end{notation}
We start with two lemmas. 
For example, $d((x,y,x,y,x)) = 2$.
\begin{lemma}\label{lem:power_sum}
    Let $\gamma \in\BLoops(\overline{\calA},x)$ and $d(\gamma)$ the number of visits of $x$ as in Notation~\ref{not:d(gam)}. 
    Let $t\in(0,1]$.
    Under Assumption~\ref{ass:non-trivial-weights}, for any integer $k\geq 1$, we have
    \begin{equation}
        \sum_{\substack{\gamma \in\BLoops(\overline{\calA},x) : d(\gamma)=k}} t^{|\gamma|}\mu_{\alpha}(\gamma) = \left(\sum_{\substack{\gamma \in\BLoops(\overline{\calA},x) : d(\gamma)=1}} t^{|\gamma|}\mu_\alpha(\gamma)\right)^k.\label{eq:power_sum}
    \end{equation}
\end{lemma}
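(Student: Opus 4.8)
The plan is to prove \cref{eq:power_sum} by exhibiting a length- and weight-preserving bijection between based loops at $x$ with exactly $k$ returns to $x$ and $k$-tuples of based loops at $x$ with exactly one return, then invoking the multiplicativity of $\mu_\alpha$ under concatenation (\cref{lem:concat}) together with a Tonelli-type interchange justified by \cref{lem:Green_function_loops}.

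First I would set up the decomposition. Given $\gamma \in \BLoops(\overline{\calA},x)$ with $d(\gamma) = k$, let $0 < t_1 < t_2 < \dots < t_k = |\gamma|$ be the successive times at which $\gamma$ visits $x$ after its start. Cutting $\gamma$ at these times produces based loops $\gamma_1,\dots,\gamma_k$, where $\gamma_i$ runs from time $t_{i-1}$ to $t_i$ (with $t_0 = 0$); each $\gamma_i$ is based at $x$, satisfies $d(\gamma_i) = 1$, lies in $\calV\setminus\calA$ because $\gamma$ does, and $\gamma = \gamma_1 \circ \dots \circ \gamma_k$. Conversely, concatenating any $k$-tuple $(\gamma_1,\dots,\gamma_k)$ of elements of $\BLoops(\overline{\calA},x)$ with $d(\gamma_i)=1$ for all $i$ yields a based loop at $x$ in $\calV\setminus\calA$ with $d = k$. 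These two operations are mutually inverse, so we obtain the desired bijection, and I would note that along it $|\gamma| = \sum_{i=1}^k |\gamma_i|$, hence $t^{|\gamma|} = \prod_{i=1}^k t^{|\gamma_i|}$.

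Next I would record that $\mu_\alpha(\gamma) = \prod_{i=1}^k \mu_\alpha(\gamma_i)$, which follows from \cref{lem:concat} by an immediate induction on $k$ (equivalently, and directly, the chronological cycle-erasure of $\gamma$ returns to the trivial path $(x)$ each time $\gamma$ revisits $x$, so $\cycles(\gamma)$ is the disjoint union of the multisets $\cycles(\gamma_i)$ and $q(\gamma) = \prod_i q(\gamma_i)$). Combining this with the length additivity, the summand $t^{|\gamma|}\mu_\alpha(\gamma)$ factorizes as $\prod_{i=1}^k t^{|\gamma_i|}\mu_\alpha(\gamma_i)$ along the bijection, and \cref{eq:power_sum} reduces to the assertion that the sum over a product set of a product of functions equals the product of the sums. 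Since $t > 0$ and $\alpha(c)\in[0,1]$, every term is non-negative, and \cref{lem:Green_function_loops} gives $\sum_{\gamma\in\BLoops(\overline{\calA},x)} t^{|\gamma|}\mu_\alpha(\gamma) = G_\alpha(t,x,x;\calA) < +\infty$; a fortiori the sub-sum over $\{\gamma : d(\gamma)=1\}$ is finite, so Tonelli's theorem (or simply rearrangement of non-negative series) licenses the interchange and delivers the identity.

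The only genuinely delicate point is the clean formulation and verification of the decomposition lemma: that cutting at the returns to $x$ is well defined, that each resulting piece is again an element of $\BLoops(\overline{\calA},x)$ with a single return to $x$, and that the cut-and-concatenate maps are inverse to one another. Everything else — the additivity of lengths, the multiplicativity of $\mu_\alpha$ via \cref{lem:concat}, and the convergence bookkeeping — is routine once the summands are known to be non-negative and \cref{lem:Green_function_loops} is in hand.
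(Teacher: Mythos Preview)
Your proof is correct and follows the same approach as the paper, which dispatches the lemma in one line by ``distributing the $k$ sums on the right-hand side and using \cref{lem:concat}.'' You have simply spelled out the underlying bijection and the convergence bookkeeping that the paper leaves implicit.
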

\begin{proof}
    This follows from distributing the $k$ sums on the right-hand side and using Lemma~\ref{lem:concat}.
\end{proof}
To understand the impact of forgetting the base point of a based loop, we need to be precise about sets of unbased loops and their cardinality.
Let $\Loops$ be the set of (unbased) loops in the graph $G$.
The marginal measure on $\Loops$ corresponding to \eqref{eq:mu_alpha} is
\begin{equation}
     \mu_\alpha([\gamma]) = \sum_{\gamma\in [\gamma]}\mu_\alpha(\gamma) =  \mathcal{N}_{[\gamma]} \mu_\alpha(\gamma),\label{eq:measure_forget_base}
\end{equation}
where the multiplicative factor $\mathcal{N}_{[\gamma]}$ is the number of representatives of the equivalence class $[\gamma]$.
Note that $\mathcal{N}_{[\gamma]}$ is not necessarily equal to $|\gamma|$.
For instance, if we take a loop $\gamma = [(x,y,x,y,x)]$ made of two identitical backtracks, we have $\mathcal{N}_{[\gamma]} = 2$ by counting the number of cyclic permutations. However, the length of this loop is $|[\gamma]| = 4 = |\gamma|$.

To clarify the meaning of $\mathcal{N}_{[\gamma]}$, we introduce the following definition: a based loop is \emph{primitive} if it cannot be written as $\gamma_0^m$ for any integer $m\geq 2$ and any other based loop $\gamma_0$.
Also, we say that an unbased loop is primitive if one of its based representatives is primitive.
For any representative $\gamma$ of a loop $[\gamma]$, we have $\gamma = \gamma_0^m$ for some $m\geq 1$ and a primitive based loop $\gamma_0$, possibly equal to $\gamma$. 
Upon defining $\mult(\gamma) = m$, we have  
\begin{equation}
    \mathcal{N}_{[\gamma]} = |\gamma|/\mult(\gamma),\label{e:nb_rep}
\end{equation}
i.e., the number of representatives of $[\gamma]$ is the length of $\gamma_0$.
\begin{lemma}[Number of $x$-based representatives of an unbased loop]\label{lem:number_loops}
     Let $[\gamma]$ be an unbased loop containing $x$.
     Denote by $d_x([\gamma])$ the number of visits of $x$ after its start by any based representative of $[\gamma]$ based at $x$. 
     Let $\#_{[\gamma],x}$ be the number of \emph{representatives of $[\gamma]$ based at $x$}.
      For  any representative $\gamma$ of $[\gamma]$ that is based at $x$, we have
     \begin{itemize}
        \item[$(i)$]  $d_x([\gamma]) = d(\gamma)$  and
        \item[$(ii)$] $\#_{[\gamma],x} = \mathcal{N}_{[\gamma]}d_x([\gamma])/|\gamma|$.
     \end{itemize}   
\end{lemma}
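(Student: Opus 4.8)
The plan is to reduce both parts to elementary counting for the action of the cyclic group $\mathbb{Z}/|\gamma|\mathbb{Z}$ on based loops by the shifts \eqref{eq:shift}. Fix an $x$-based representative $\gamma = (x_0, x_1, \dots, x_{k-1}, x_k)$ of $[\gamma]$, with $x_0 = x_k = x$ and $k = |\gamma|$. Unrolling \cref{not:d(gam)}, the walk returns to $x$ at step $i$ (for $1 \le i \le k$) exactly when $x_i = x$; since $x_k = x_0 = x$, relabelling the endpoint index $k$ as $0$ gives
\[
    d(\gamma) \;=\; \#\{\, i \in \{0, 1, \dots, k-1\} : x_i = x \,\},
\]
i.e.\ the number of occurrences of the vertex $x$ among $x_0, \dots, x_{k-1}$.

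For $(i)$, the right-hand side above is manifestly invariant under the shift \eqref{eq:shift}. Since any two $x$-based representatives of $[\gamma]$ are related by a shift by some index $j$ with $x_j = x$, the integer $d(\gamma)$ is the same for all of them; this proves $(i)$ and legitimizes writing $d_x([\gamma])$ for this common value.

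For $(ii)$, I would count $x$-based representatives through the orbit of $\gamma$. Writing $\gamma = \gamma_0^{m}$ with $\gamma_0$ primitive and $m = \mult(\gamma)$, the stabilizer of $\gamma$ in $\mathbb{Z}/k\mathbb{Z}$ is generated by the shift by $|\gamma_0| = k/m$ and has order $m$; hence the $k$ shifts of $\gamma$ realize each of the $\mathcal{N}_{[\gamma]}$ distinct representatives of $[\gamma]$ exactly $m$ times, consistently with \eqref{e:nb_rep}. Among these, the shift of $\gamma$ by $j$ is based at $x$ iff $x_j = x$, which by $(i)$ happens for exactly $d_x([\gamma])$ values of $j$. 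Crucially, $(x_i)_{0 \le i \le k-1}$ is $(k/m)$-periodic because $\gamma = \gamma_0^m$, so the set $\{\, j : x_j = x \,\}$ is a union of cosets of the stabilizer; the $d_x([\gamma])$ relevant shifts therefore collapse exactly $m$-to-one onto distinct based loops, giving $\#_{[\gamma],x} = d_x([\gamma])/m$. Substituting $m = \mult(\gamma) = |\gamma|/\mathcal{N}_{[\gamma]}$ from \eqref{e:nb_rep} yields $\#_{[\gamma],x} = \mathcal{N}_{[\gamma]}\, d_x([\gamma])/|\gamma|$.

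The step that I expect to need the most care is the periodicity argument in $(ii)$: one must check that $\{\, j : x_j = x \,\}$ is stabilizer-invariant — equivalently that the occurrences of $x$ along $\gamma$ repeat with period $|\gamma_0|$ — so that the $d_x([\gamma])$ candidate shifts descend exactly $\mult(\gamma)$-to-one rather than in some irregular pattern, which is precisely where the factorization of $\gamma$ through a primitive loop enters. As a consistency check I would also verify that summing $(ii)$ over the vertices $v$ of $\gamma$ recovers $\sum_v \#_{[\gamma],v} = \mathcal{N}_{[\gamma]}$, since $\sum_v d_v([\gamma]) = |\gamma|$.
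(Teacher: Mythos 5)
Your proof is correct, and it takes a genuinely cleaner route than the paper's. The paper proves $(ii)$ by splitting into two cases: if $\gamma=\gamma_0^m$ with $m\geq 2$ it asserts $\#_{[\gamma],x}=1$ and $d_x([\gamma])=m$, and if $\gamma$ is primitive it asserts $\#_{[\gamma],x}=d_x([\gamma])$; in each case the formula then follows from \cref{e:nb_rep}. The non-primitive branch of that argument implicitly assumes that the primitive root $\gamma_0$ visits $x$ exactly once after its start, which need not hold: for $\gamma=\big((x,y,x,z,x)\big)^2$ one has $m=2$, $|\gamma|=8$, $\mathcal{N}_{[\gamma]}=4$, $d_x([\gamma])=4$ and $\#_{[\gamma],x}=2$, not $1$. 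Your orbit--stabilizer argument handles all cases uniformly: the stabilizer of $\gamma$ under the shift action \eqref{eq:shift} is the subgroup of order $\mult(\gamma)$ generated by the shift by $|\gamma_0|$, two shifts yield the same based loop iff their difference lies in that subgroup, and the $(k/m)$-periodicity of $(x_i)$ makes $\{j: x_j=x\}$ a union of stabilizer cosets, so the $d_x([\gamma])$ admissible shifts descend exactly $\mult(\gamma)$-to-one onto distinct $x$-based representatives. This is precisely the point the paper's case analysis glosses over, and your consistency check $\sum_v \#_{[\gamma],v}=\mathcal{N}_{[\gamma]}$ is a nice confirmation. Note that the lemma is only used in the paper through sums of the form $\sum_{[\gamma]}\#_{[\gamma],x}\,\mu_\alpha(\gamma)/d(\gamma)$, where only the product formula $(ii)$ matters, so the downstream results are unaffected; still, your version of the argument is the one that should be recorded.
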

\begin{proof}
    Let $\gamma$ be a representative of $[\gamma]$ based at $x$ and recall $d(\gamma)$ is the number of times $\gamma$ comes back to $x$ after its start as in Notation~\ref{not:d(gam)}.
    The proof of $(i)$ is straightforward.
    To show $(ii)$, we consider two cases.
    $(a)$ If $\gamma$ is not primitive, i.e., if $\gamma$ can be written has $\gamma_0^{m}$ for some $m\geq 2$ and a primitive based loop $\gamma_0$, there is only one \emph{representative of $[\gamma]$ based at $x$}, namely $\#_{[\gamma],x} = 1$.
    Furthermore, the number of visits of $x$ by the based loop $\gamma$ after its start is  $d_x([\gamma]) = m$, whereas the number of representatives of $[\gamma]$ is $\mathcal{N}_{[\gamma]} = |\gamma| / m $; see \eqref{e:nb_rep}.
    Thus, by a direct substitution, we also have $\mathcal{N}_{[\gamma]}d_x([\gamma])/|\gamma| = 1$, and this yields $(ii)$.
    $(b)$ Otherwise, $\gamma$ is  primitive and we have $|\gamma| = \mathcal{N}_{[\gamma]}$ as a consequence of \eqref{e:nb_rep}, whereas the number of \emph{representatives of $[\gamma]$ based at $x$} is equal to the number of visits of $x$ by $\gamma$ after its start, namely $\#_{[\gamma],x} = d_x([\gamma])$. Recalling that  $\mathcal{N}_{[\gamma]}/|\gamma| = 1$, we have shown $(ii)$.
    This completes the proof.
\end{proof}
\subsection{An exponential formula for the Green generating function}
\label{s:gen_Green}
Define the measure on nontrivial unbased loops
\begin{equation}
    m_\alpha([\gamma]) = \begin{cases}
        \mu_{\alpha}([\gamma])/|\gamma| &\text{ if } |\gamma| \geq 2\\
         0 &\text{ if }  |\gamma| = 0
    \end{cases}.\label{eq:exact_measure}
\end{equation}
As anticipated in Section~\ref{sec:intro}, the expression \eqref{eq:exact_measure} can be simplified as follows: using \eqref{e:nb_rep}, we have that $m_\alpha([\gamma]) = \mu_\alpha(\gamma)/\mult(\gamma)$, where $\gamma$ is any representative of $[\gamma]$.
The presence of loop multiplicity is emphasized in \citet[section 2.2]{lejan:hal-03655583}.

Lemma~\ref{lem:power_sum} and Lemma~\ref{lem:number_loops} imply that the Green generating function can be expressed as an exponential involving this measure $m_\alpha$; see Lemma~\ref{lem:logG} below which is a key ingredient for the proof of correctness in Section~\ref{sec:proof_beyond_det}.
\begin{proposition}[Green exponential formula]\label{lem:logG}
    Let $t\in (0,1]$ and $\mathcal{A}\subset \mathcal{V}$. Let $x\in\mathcal{V}\setminus\calA$. 
    Denote by $\Loops(\bar{\calA})$ the set of unbased loops in the complement of a subgraph $\calA$ of $G$.
    Under \ref{ass:non-trivial-weights},  
    $$
        G_\alpha(t,x,x;\calA) =  \exp\left(\sum_{[\gamma]\in \Loops(\overline{\calA}) : x\in [\gamma]} t^{|\gamma|}m_\alpha([\gamma])\right).
    $$
\end{proposition}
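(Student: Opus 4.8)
The plan is to organize the sum over unbased loops through $x$ by stratifying according to the number of visits $d_x([\gamma])$ of the base point, mirroring Marchal's argument for spanning trees but with the multiplicity bookkeeping of Lemma~\ref{lem:number_loops}. First I would expand the right-hand side exponential as its Taylor series and show term by term that it equals the sum $\sum_{\gamma\in\BLoops(\overline{\calA},x)} t^{|\gamma|}\mu_\alpha(\gamma)$, which is $G_\alpha(t,x,x;\calA)$ by definition.

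Concretely, write $S_1 \triangleq \sum_{\gamma\in\BLoops(\overline{\calA},x):\,d(\gamma)=1} t^{|\gamma|}\mu_\alpha(\gamma)$ for the "first-return" generating sum at $x$. By Lemma~\ref{lem:Green_function_loops} the full Green function is finite, and since $\mu_\alpha \geq 0$ and each term with $d(\gamma)=k$ decomposes into $k$ concatenated first-return loops via Lemma~\ref{lem:concat}, we get $S_1 \le 1$ and in fact $S_1 < 1$ (the trivial loop is excluded from $d(\gamma)=1$, and $G_\alpha = \sum_{k\ge 0} S_1^k = (1-S_1)^{-1}$ by Lemma~\ref{lem:power_sum}, which is finite only if $S_1<1$). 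So I would record
\[
    G_\alpha(t,x,x;\calA) = \frac{1}{1 - S_1}.
\]
The remaining task is to show $S_1 = -\log\!\bigl(1 - \widetilde S\bigr)$ would be the wrong target; rather, I want to show directly that $\sum_{[\gamma]\in\Loops(\overline{\calA}):x\in[\gamma]} t^{|\gamma|}m_\alpha([\gamma]) = -\log(1-S_1)$, i.e.\ that this loop sum equals $\sum_{k\ge1} S_1^k/k$. The bridge is Lemma~\ref{lem:number_loops}(ii): for an unbased loop $[\gamma]$ through $x$, the number of based-at-$x$ representatives is $\#_{[\gamma],x} = \mathcal N_{[\gamma]}\, d_x([\gamma])/|\gamma|$, while $\mu_\alpha([\gamma]) = \mathcal N_{[\gamma]}\mu_\alpha(\gamma)$ by \eqref{eq:measure_forget_base}. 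Hence
\[
    t^{|\gamma|} m_\alpha([\gamma]) = \frac{t^{|\gamma|}\mu_\alpha([\gamma])}{|\gamma|} = \frac{\mathcal N_{[\gamma]}}{|\gamma|}\, t^{|\gamma|}\mu_\alpha(\gamma) = \frac{1}{d_x([\gamma])}\,\#_{[\gamma],x}\; t^{|\gamma|}\mu_\alpha(\gamma)^{\!} \Big/ \mathcal N_{[\gamma]} \cdot \mathcal N_{[\gamma]},
\]
so, summing a based-loop quantity over the $\#_{[\gamma],x}$ representatives (each with the same $t^{|\gamma|}\mu_\alpha(\gamma)$ and the same $d$), we obtain $\sum_{[\gamma]: x\in[\gamma]} t^{|\gamma|}m_\alpha([\gamma]) = \sum_{\gamma\in\BLoops(\overline{\calA},x)} \frac{1}{d(\gamma)} t^{|\gamma|}\mu_\alpha(\gamma)$. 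Now stratify the right-hand side by $d(\gamma)=k$ and apply Lemma~\ref{lem:power_sum}: the $d(\gamma)=k$ stratum contributes $\frac1k\sum_{d(\gamma)=k} t^{|\gamma|}\mu_\alpha(\gamma) = \frac1k S_1^k$. Summing over $k\ge1$ gives $\sum_{k\ge1} S_1^k/k = -\log(1-S_1) = \log G_\alpha(t,x,x;\calA)$, which is the claim after exponentiating.

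The main obstacle I expect is the combinatorial identity converting the "sum over based representatives with a $1/d(\gamma)$ weight" into the "sum over unbased loops with the $m_\alpha$ weight" — that is, making Lemma~\ref{lem:number_loops}(ii) do its job cleanly. The subtlety is that $\#_{[\gamma],x}$, $d_x([\gamma])$, $|\gamma|$ and $\mathcal N_{[\gamma]}$ all differ from the naive guesses when $[\gamma]$ is a nontrivial power $\gamma_0^m$, and one must check the non-primitive case separately (where $\#_{[\gamma],x}=1$ but $d_x([\gamma])=m$) to confirm that the weights $\frac{1}{d(\gamma)}t^{|\gamma|}\mu_\alpha(\gamma)$ summed over representatives exactly reproduce $t^{|\gamma|}m_\alpha([\gamma])$ in all cases. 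Once that identity is in hand, everything else is the geometric-series / $\log(1-x)$ manipulation, justified by absolute convergence from Lemma~\ref{lem:Green_function_loops} and nonnegativity of $\mu_\alpha$.
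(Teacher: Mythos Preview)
Your proposal is correct and follows essentially the same route as the paper: establish $G_\alpha(t,x,x;\calA)=(1-S_1)^{-1}$ via \cref{lem:power_sum}, then use \cref{lem:number_loops}(ii) to rewrite $\sum_{[\gamma]\ni x} t^{|\gamma|}m_\alpha([\gamma])$ as $\sum_{\gamma\in\BLoops(\overline{\calA},x)} t^{|\gamma|}\mu_\alpha(\gamma)/d(\gamma)$, stratify by $d(\gamma)=k$, and recognize $\sum_{k\ge1}S_1^k/k=-\log(1-S_1)$. The only cosmetic difference is that you deduce $S_1<1$ from the finiteness of $G_\alpha$ (\cref{lem:Green_function_loops}) whereas the paper reads it off as a return probability; both are valid, and your displayed chain of equalities involving $\mathcal{N}_{[\gamma]}$ could be tidied to simply $t^{|\gamma|}m_\alpha([\gamma])=\tfrac{\#_{[\gamma],x}}{d_x([\gamma])}\,t^{|\gamma|}\mu_\alpha(\gamma)$.
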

Proposition~\ref{lem:logG} may be seen as a generalization of \citep[Lemma 9.3.2]{LaLi10} to measures involving cycle weights\footnote{A similar result was announced in \citep[Section 2.5]{Kassel15} but remained unpublished (personal communication by A. Kassel).}; the proof technique is similar.
\begin{proof}
    We begin by considering the sum of the measures of loops based at $x$ that visit $x$ once after their start.
    The quantity
    \begin{equation}
        \sum_{\substack{\gamma \in\BLoops(\overline{\calA},x) : d(\gamma)=1}} \mu_\alpha(\gamma) = 
        \mathbb{Q}_{x}\Big(\tau^{(\text{after }1)}_{\to \{x\}} < \tau_{\to \calA} \wedge \min_{ \text{ cycle }c} \tau_{\circlearrowleft c} \Big) < 1 \label{eq:proba_back_once}      
    \end{equation}
    is the probability of coming back to $x$ before hitting $\mathcal{A}$ and before accepting any cycle.
    Now, with the help of \eqref{eq:power_sum}, we calculate the sum of the measures of loops based at $x$ that visit $x$ exactly $k$ times after their start.
    By summing \eqref{eq:power_sum} over all possible numbers $k$ of visits, including the case $k=0$ for which no loop occurs, we find
    \begin{equation}
        G_\alpha(t,x,x;\calA) = \left( 1 - \sum_{\substack{\gamma \in\BLoops(\overline{\calA},x) : d(\gamma)=1}} t^{|\gamma|}\mu_\alpha(\gamma)\right)^{-1}.\label{e:Green_inverse_success_proba}
    \end{equation}
    Armed with this identity, we now give the expression of $\log G_\alpha(t,x,x;\calA)$.
    Recalling that $
    \mu_\alpha([\gamma]) = \mathcal{N}_{[\gamma]} \mu_\alpha(\gamma)$ by definition, and using Lemma~\ref{lem:number_loops} for turning a sum over unbased loops into a sum over its based representatives, we find
    \begin{align*}
        \sum_{\substack{[\gamma]\in \Loops(\overline{\calA}) \\ x\in [\gamma]}} t^{|\gamma|}\frac{\mu_\alpha([\gamma])}{|\gamma|} 
        =
        \sum_{\substack{\gamma \in\BLoops(\overline{\calA},x) }} t^{|\gamma|}\frac{\mu_\alpha(\gamma)}{d(\gamma)} 
        &= 
        \sum_{k = 1}^{+\infty} \frac{1}{k}\sum_{\substack{\gamma \in\BLoops(\overline{\calA},x)\\ d(\gamma) = k }}  t^{|\gamma|}\mu_\alpha(\gamma)\\
        &= 
        \sum_{k = 1}^{+\infty} \frac{1}{k}
         \left(\sum_{\substack{\gamma \in\BLoops(\overline{\calA},x) \\ d(\gamma)=1}} t^{|\gamma|}\mu_\alpha(\gamma)\right)^k.
    \end{align*}
    Note that we used \eqref{eq:power_sum} to establish the last equality.
    The result follows by using the Taylor series $ \log(1 - s)^{-1} = \sum_{k=1}^{+\infty} s^k/k$ for $s\in [0,1)$.
\end{proof}
\subsection{Putting it all together \label{sec:proof_beyond_det}}
In the proof of Proposition~\ref{prop:correctness}, we first used the form of the weights when rewriting \eqref{eq:last_step_before_determinantal_assumption} as \eqref{eq:before_telescoping}. 
We thus start from \eqref{eq:last_step_before_determinantal_assumption}, and note that, at that stage, we had in particular proven the following. 
Let $\calF$ in the support of \eqref{eq:proba_CRSF_non_det_oriented}, and $x_1, \dots, x_n$ enumerate the nodes of $G$, in the unique possible order they were added by Algorithm~\ref{a:cyclepopping} to construct $\calF$, see Remark~\ref{rem:shape}.
Then plugging \eqref{eq:last_step_before_determinantal_assumption} into \eqref{e:decomposition_in_fund_event}, we have
\begin{equation}
    \mathbb{P}(\calT = \calF) =  p_{x_1 x_2} \dots p_{x_{n-1} x_n}  \prod_{i=1}^{n}G_\alpha(1,x_i,x_i;\mathcal{B}_i)  \prod_{c \in \calF} \alpha(c),
    \label{e:reprise}
\end{equation}
where we introduced the growing size subsets of nodes $\mathcal{B}_1 = \emptyset$, $\mathcal{B}_2 = \{x_1\}$, $\mathcal{B}_3 = \{x_1,x_2\}$, \dots,  $\mathcal{B}_{n} = \{x_1, \dots, x_{n-1}\}$.
Note that $G_\alpha(1,x_n,x_n;\mathcal{B}_n) = 1$.

Consider now the mutually disjoint sets of loops
$\rmE_1 = \{[\gamma] \text{ s.t. } x_1\in [\gamma]\}$,
$\rmE_2 = \{[\gamma] \text{ s.t. } x_1\notin [\gamma] \text{ and } x_2\in [\gamma]\}$
$\rmE_3 = \{[\gamma] \text{ s.t. } x_1\notin [\gamma] \text{ and } x_2\notin [\gamma]\text{ and } x_3\in [\gamma]\}$, \dots, whereas $\rmE_n$ is the trivial loop containing $x_n$.
In the more compact notation introduced in Section~\ref{s:loop_measures}, for $1\leq i \leq n $, we let
\begin{equation}
    \rmE_i = \left\{[\gamma]\in \Loops(\overline{\mathcal{B}_i}) \text{ such that } x_i\in [\gamma]\right\} \text{ with }\mathcal{B}_i = \{x_1,\dots, x_{i-1}\}.\label{eq:partition_of_unbased_loops}
\end{equation}
In particular, $\rmE_1, \dots, \rmE_{n}$ are a partition of the entire set of loops $\Loops$.
For simplicity, by slightly generalizing \eqref{eq:exact_measure}, we define another measure over loops including the dependence on $t\in(0,1]$, as
\[
    m_{\alpha,t}([\gamma]) = t^{|\gamma|}m_{\alpha}([\gamma]),
\]
and we write $m_{\alpha,t}(\rmE) = \sum_{[\gamma]\in \rmE} m_{\alpha,t}([\gamma])$ for any subset of unbased loops $\rmE$.
Using $n$ times the exponential formula of Lemma~\ref{lem:logG}, and the additivity of the measure $m_{\alpha,t}$, it comes
\begin{equation}
    G_\alpha(t,x_1,x_1;\mathcal{B}_1) \dots G_\alpha(t,x_{n},x_{n};\mathcal{B}_{n})= e^{m_{\alpha,t}(\rmE_1) + \dots + m_{\alpha,t}(\rmE_{n})} = e^{m_{\alpha,t}(\Loops)}.
    \label{eq:chain_rule}
\end{equation}
Equation~\eqref{eq:chain_rule} generalizes the telescoping product of determinants of Section~\ref{sec:all_together}.
In particular, for $t=1$, \eqref{eq:chain_rule} applied to \eqref{e:reprise} shows that $\mathbb{P}(\calT=\calF)$ is \eqref{eq:proba_CRSF_non_det_oriented}, and that the normalizing factor in \eqref{eq:proba_CRSF_non_det_oriented} is
\begin{equation}
    Z_\alpha^{-1} = e^{m_\alpha(\Loops)}.\label{eq:proba_CRSF_non_det_full}
\end{equation}    
This concludes the proof of correctness.
\subsection{Interpretation as a Poisson point process on unbased oriented loops \label{sec:Poisson_over_loops}}
The proof of Proposition~\ref{prop:moments} applies \emph{mutatis mutandis} for general weights, generalizing its result to the identity
\begin{equation}
    \mathbb{E}[t^T] = t^n e^{\sum_{\substack{[\gamma]\in \Loops}}(t^{|\gamma|} - 1)\frac{\mu_{\alpha}([\gamma])}{|\gamma|}} \text{ for } t\in (0,1].\label{eq:ratio_det_Poisson}
\end{equation}
Interestingly, \eqref{eq:ratio_det_Poisson} is a Laplace transform w.r.t.\ a Poisson point process.
Indeed, under the (inhomogeneous) Poisson process on $\Loops$ with intensity $m_\alpha$, for any non-negative function $f$ on $\Loops$,
\begin{equation}
    \E_{\calX\sim\text{\rm Poisson}(m_\alpha,\Loops)}e^{-\sum_{[\gamma]\in \calX}f([\gamma])} = e^{\sum_{\substack{[\gamma]\in \Loops}}\left(e^{-f([\gamma])}-1\right)m_\alpha([\gamma])},
    \label{e:laplace}
\end{equation}
see\footnote{
    We note that a result analogous to \eqref{e:laplace} is obtained in \citep[Eq 4.1]{LeJan11} for continuous loop measures, for which each node also comes with a continuous holding time.
} e.g. \citep[Thm 3.9]{LP2017}.
Letting $t\in(0,1]$ and applying \eqref{e:laplace} to 
$
    f:[\gamma] \mapsto \log(t^{-|\gamma|})
$
yields
\[
    \E_{\calX\sim\text{\rm Poisson}(m_\alpha,\Loops)} \left[t^{n + \sum_{[\gamma]\in \calX} | \gamma |}\right] = \mathbb{E}[t^T],
\]
where the right-hand side is given by \eqref{eq:ratio_det_Poisson}.
This justifies our initial claim in \eqref{eq:law_of_T_Poisson}, which we rephrase here.
\begin{corollary}[law of the running time of \cyclepopping{}]\label{corol:law_of_T}
    Let $T$ be the number of (Markov chain) steps to complete Algorithm~\ref{a:cyclepopping}.
    Then,  under Assumption~\ref{ass:non-trivial-weights}, 
    \begin{equation*}
        T \stackrel{\text{(law)}}{=} n + \sum_{\substack{[\gamma]\in \calX}}|\gamma| \text{ with } \calX\sim\text{\rm Poisson}(m_\alpha, \Loops),
    \end{equation*}
    where $m_\alpha$ is the measure over loops given in \eqref{eq:exact_measure}.
\end{corollary}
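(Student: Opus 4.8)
The plan is to recognise the probability generating function of $T$ --- already computed in \eqref{eq:ratio_det_Poisson} as a by-product of the correctness proof in \cref{sec:proof_beyond_det} --- as the Laplace functional of the Poisson point process $\calX\sim\text{Poisson}(m_\alpha,\Loops)$ evaluated at a well-chosen test function, and then to invoke uniqueness of generating functions of $\mathbb{N}$-valued random variables. First I would check that everything in sight is legitimate, i.e.\ that the intensity measure $m_\alpha$ has finite total mass under \cref{ass:non-trivial-weights}. This follows by taking $t=1$ in \eqref{eq:chain_rule}: we get $\exp m_\alpha(\Loops) = \prod_{i=1}^{n} G_\alpha(1,x_i,x_i;\mathcal{B}_i)$, and each factor on the right is finite by \cref{lem:Green_function_loops}, so $m_\alpha(\Loops)<+\infty$. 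Consequently $\text{Poisson}(m_\alpha,\Loops)$ is a well-defined point process with almost surely finitely many points, and $S := n + \sum_{[\gamma]\in \calX}|\gamma|$ is an almost surely finite, $\mathbb{N}$-valued random variable.

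Next I would apply the Laplace functional identity \eqref{e:laplace} with the nonnegative test function $f([\gamma]) = |\gamma|\log(1/t)$ for a fixed $t\in(0,1]$. Since $e^{-f([\gamma])} = t^{|\gamma|}$, the left-hand side of \eqref{e:laplace} becomes $\E_{\calX\sim\text{Poisson}(m_\alpha,\Loops)}\big[t^{\sum_{[\gamma]\in\calX}|\gamma|}\big]$ and the right-hand side becomes $\exp\big(\sum_{[\gamma]\in\Loops}(t^{|\gamma|}-1)m_\alpha([\gamma])\big)$. Multiplying both sides by $t^n$ and recalling that $m_\alpha([\gamma]) = \mu_\alpha([\gamma])/|\gamma|$ for $|\gamma|\geq 2$ (the only loops carrying positive mass, since $G$ has no self-loop), I obtain
\[
    \E\big[t^{S}\big] \;=\; t^{n}\exp\Big(\sum_{[\gamma]\in\Loops}(t^{|\gamma|}-1)\tfrac{\mu_\alpha([\gamma])}{|\gamma|}\Big) \;=\; \E[t^{T}],
\]
where the last equality is precisely \eqref{eq:ratio_det_Poisson}, valid for all $t\in(0,1]$.

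Finally I would conclude. \Cref{a:cyclepopping} terminates almost surely --- this is the content of the correctness result established in \cref{s:Poisson}, where $\sum_{\calF}\mathbb{P}(\calT=\calF)=1$ --- so $T$ is itself an almost surely finite $\mathbb{N}$-valued random variable and $t\mapsto\E[t^{T}]$ is its honest probability generating function, and likewise for $S$. Two power series with nonnegative coefficients summing to $1$ that agree on the whole interval $(0,1)$ must have identical coefficients, hence $\mathbb{P}(T=k)=\mathbb{P}(S=k)$ for every $k\in\mathbb{N}$, which is exactly the claimed equality in distribution. The only genuinely delicate point in this argument is the finiteness bookkeeping --- that $m_\alpha(\Loops)<+\infty$ so that the Poisson process and its Laplace functional are meaningful, and that both $T$ and $S$ are a.s.\ finite integer-valued random variables so that matching generating functions on $(0,1)$ upgrades to equality in law --- but all of this is supplied by \cref{lem:Green_function_loops}, \eqref{eq:chain_rule} and the correctness theorem, so the remainder is a direct assembly of facts already proved.
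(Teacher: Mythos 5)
Your proof is correct and follows essentially the same route as the paper: you identify the generating function $\E[t^T]$ from \eqref{eq:ratio_det_Poisson} with the Laplace functional \eqref{e:laplace} of $\text{Poisson}(m_\alpha,\Loops)$ evaluated at $f([\gamma]) = |\gamma|\log(1/t)$, and match generating functions. You add explicit finiteness bookkeeping that the paper leaves implicit — namely that $m_\alpha(\Loops)<\infty$ follows from \eqref{eq:chain_rule} at $t=1$ together with \cref{lem:Green_function_loops}, and that agreement of the probability generating functions of two a.s.\ finite $\mathbb{N}$-valued random variables on $(0,1)$ upgrades to equality in law — which is a welcome clarification but not a different argument.
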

In the light of Corollary~\ref{corol:law_of_T}, we now briefly discuss the complexity of \cyclepopping{}, and compare it to a well-known algebraic algorithm for sampling determinantal point processes, due to \cite*{HKPV06} and usually referred to as the HKPV algorithm, after the initials of the authors.
\begin{itemize}
    \item 
    On the one hand, the law of $T$ given in Corollary~\ref{corol:law_of_T} indicates that if the cycles are very consistent, i.e., $\cos \theta(c)\approx 1$, the intensity $m_\alpha([\gamma])$ of the Poisson process of loops is large.
    In other words, the number of popped loops is expected to be large, which tends to slow down \cyclepopping{}.
    In the determinantal case, we can derive from Proposition~\ref{prop:mean_var_using_cumulants} the simple upper bound $\mathbb{E}[T] \leq n/ \lambda_{\min}(\mathsf{\Delta_{N}})$ with $\mathsf{\Delta_{N}} = \mathsf{D}^{-1/2}\mathsf{\Delta}\mathsf{D}^{-1/2}$.
    It is interesting to observe at this point the appearance of the least eigenvalue of the normalized magnetic Laplacian, which plays a central role in the Cheeger inequality \citep{Bandeira} and controls the frustration of the connection graph.
    \item 
    On the other hand, as discussed in \citep{FanBar22}, if the cycles are very consistent, the least eigenvalue of the Laplacian $\lambda_{\min}(\mathsf{\Delta})$ -- or of its normalized version $\lambda_{\min}(\mathsf{\Delta_{N}})$ -- is expected to be close to zero. 
    In this case, the \textsc{HKPV} sampler is expected to be error prone since the computation of the correlation kernel of the determinantal process essentially necessitates to solve a linear system of the type $\mathsf{\Delta} \bm{x} = \bm{b}$. 
    This matrix being ill-conditioned, we expect numerical errors on the correlation kernel to affect \textsc{HKPV}.
    In particular, the output of HKPV with a slightly perturbed kernel might not be a CRSF, in contrast with \cyclepopping{}, which outputs a CRSF by construction.
\end{itemize}

As we discussed under Proposition~\ref{prop:law_of_T}, the characterization by \cite{Marchal99} of the law of the running time of Wilson's algorithm for spanning trees has practical consequences on the choice of the root.
We now examine a similar consequence of Corollary~\ref{corol:law_of_T} for cycle-rooted spanning forests.
\subsection{Expectation of the number of steps and cycle times}
We formalize here the intuition that \cyclepopping{} is fast when cycles of large weights are well-spread in the graph.
In view of \eqref{eq:ratio_det_Poisson}, the expectation of the number of steps to complete \cyclepopping{} is
\begin{equation}
    \mathbb{E}[T] = \sum_{x\in \calV} G_\alpha(1,x,x;\emptyset).\label{e:sum_G_xx}
\end{equation}
Expression \eqref{e:sum_G_xx} -- which naturally reduces to $\E[T] =  \Tr((\I -\mathsf{\Pi})^{-1})$ in the determinantal case --  has a clear probabilistic interpretation as the sum over all the nodes $x$ of the inverse of the probability that \emph{the walker starting from $x$ accepts a cycle no later than its first return to $x$}, or in other words, an \emph{escape-to-a-cycle} probability.
This is in complete analogy with the rooted spanning tree case of Section~\ref{s:CyclePopping_for_STs} where the role of the roots is now played by the inconsistent cycles of the graph.
\begin{proposition}[Escape-to-a-cycle probability]\label{prop:inverse_cycle probability}
    Let $x\in \calV$. Under the assumptions of Corollary~\ref{corol:law_of_T},
    it holds that $G_\alpha(1,x,x;\emptyset) = 1/\rho_x$, where
    \begin{align}
        \rho_x &= \mathbb{Q}^x\left( \min_{c}\tau_{\circlearrowleft c} \leq \tau^{ (\text{after }1)}_{\to \{x\}}\right)
        = 1 - \sum_{\substack{\gamma \in\BLoops(\calV,x) : d(\gamma)=1}}\mu_\alpha(\gamma).
    \end{align}
    Here, $\BLoops(\calV,x)$ is the set of based loops based at $x$, $d(\gamma)$ is the number of visits of the base point of $\gamma$ after its start, and $\mu_\alpha(\gamma)$ is the probability \eqref{eq:mu_alpha} that the SRW $(X^x_n)$ walks along the based loop $\gamma$ and pops all cycles in $\cycles(\gamma)$.
\end{proposition}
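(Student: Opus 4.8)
The plan is to obtain this as an essentially immediate corollary of the computation already carried out inside the proof of \cref{lem:logG}. Recall that equation~\eqref{e:Green_inverse_success_proba}, established there for every $\calA\subset\calV$ and every $t\in(0,1]$, reads
\[
    G_\alpha(t,x,x;\calA) = \left(1 - \sum_{\substack{\gamma\in\BLoops(\overline{\calA},x)\,:\,d(\gamma)=1}} t^{|\gamma|}\mu_\alpha(\gamma)\right)^{-1}.
\]
Specializing to $\calA=\emptyset$ and $t=1$ gives $G_\alpha(1,x,x;\emptyset) = (1-S_x)^{-1}$, where $S_x \triangleq \sum_{\gamma\in\BLoops(\calV,x)\,:\,d(\gamma)=1}\mu_\alpha(\gamma)$ and this denominator is strictly positive (hence finite $G_\alpha$) by \cref{lem:Green_function_loops}. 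So the whole proof reduces to identifying $1-S_x$ with $\rho_x$, i.e.\ to reading $S_x$ as the probability that the walker returns to $x$ strictly before accepting any cycle.

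That identification is exactly \eqref{eq:proba_back_once} with $\calA=\emptyset$, where $\tau_{\to\emptyset}=+\infty$ drops out of the minimum: it yields $S_x = \mathbb{Q}_x\big(\tau^{(\text{after }1)}_{\to\{x\}} < \min_{c}\tau_{\circlearrowleft c}\big)$. I would spell out the underlying bijection for clarity: a based loop $\gamma$ at $x$ with $d(\gamma)=1$ is precisely an excursion of $(X^x_n)$ leaving $x$ and returning to $x$ with no intermediate visit to $x$; by \eqref{eq:mu_alpha} its weight $\mu_\alpha(\gamma)=q(\gamma)\prod_{c\in\cycles(\gamma)}(1-\alpha(c))$ is the probability that the SRW traces that excursion and that every cycle it forms along the way is rejected, each cycle $c$ being rejected independently with probability $1-\alpha(c)=\mathbb{Q}(B_{c,\cdot}=0)$; summing over all such $\gamma$ gives $\mathbb{Q}_x\big(\tau^{(\text{after }1)}_{\to\{x\}}<\min_c\tau_{\circlearrowleft c}\big)$. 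The second displayed identity in the statement, $\rho_x = 1-S_x$, is then the definition of $\rho_x$, and the first, $\rho_x = \mathbb{Q}^x\big(\min_c\tau_{\circlearrowleft c}\le \tau^{(\text{after }1)}_{\to\{x\}}\big)$, follows by complementation.

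The only step requiring a genuine (if short) argument is that complementation, i.e.\ that the events $\{\tau^{(\text{after }1)}_{\to\{x\}}<\min_c\tau_{\circlearrowleft c}\}$ and $\{\min_c\tau_{\circlearrowleft c}\le\tau^{(\text{after }1)}_{\to\{x\}}\}$ are $\mathbb{Q}_x$-almost surely complementary, which amounts to $\mathbb{Q}_x\big(\tau^{(\text{after }1)}_{\to\{x\}}<\infty\big)=1$ and $\mathbb{Q}_x\big(\min_c\tau_{\circlearrowleft c}<\infty\big)=1$. The former holds since $(X^x_n)$ is a recurrent Markov chain on a finite connected graph. For the latter, fix $c_\star$ with $\alpha(c_\star)>0$ (\cref{ass:non-trivial-weights}): by recurrence the edges of $c_\star$ are traversed in cyclic order infinitely often, and at each such completion $c_\star$ is accepted independently with probability $\alpha(c_\star)>0$, so the second Borel--Cantelli lemma forces $\tau_{\circlearrowleft c_\star}<\infty$ a.s., hence $\min_c\tau_{\circlearrowleft c}<\infty$ a.s. Everything else in the proof is bookkeeping around \eqref{e:Green_inverse_success_proba} and \eqref{eq:proba_back_once}, so I expect no real obstacle beyond this a.s.-finiteness remark.
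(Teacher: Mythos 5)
Your proof is correct and takes essentially the same route as the paper's: both rest on \eqref{e:Green_inverse_success_proba} together with the identification \eqref{eq:proba_back_once} of $\sum_{\gamma : d(\gamma)=1}\mu_\alpha(\gamma)$ with the probability of returning to $x$ before accepting a cycle, the only difference being that the paper additionally re-derives $G_\alpha(1,x,x;\emptyset)=1/\rho_x$ probabilistically by showing that the number of visits to $x$ before a cycle is accepted is geometric with parameter $\rho_x$. Your closing almost-sure-finiteness argument is superfluous: for extended-real-valued times the events $\{a<b\}$ and $\{b\le a\}$ are complementary with no finiteness assumption, so $\rho_x=1-S_x$ is immediate.
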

An intuitive consequence of \eqref{e:sum_G_xx} and Proposition~\ref{prop:inverse_cycle probability} is that \cyclepopping{} is expected to be fast in a $\Uone$-connection graph where the set of inconsistent cycles can be \emph{hit} from any node in the graph with a large probability.
\begin{proof}
    By Definitions \eqref{eq:Green_k} and \eqref{e:G_as_a_sum_over_k}, and dominated convergence, the Green function reads
    \[
        G_\alpha(1,x,x;\emptyset) =  \sum_{k=0}^{+\infty} \mathbb{Q}^x\left(X_{k}=x\text{ and } k < \min_{c}\tau_{ \circlearrowleft c}\right) = \mathbb{E} R_x,
    \]
    where 
    $$
        R_x = \sum_{k=0}^{+\infty} 1_{\left\{X^x_{k}=x\text{ and } k < \min_{c}\tau^x_{\circlearrowleft c}\right\}}
    $$
    is the number of returns of $(X_k^x)$ to its starting point before a cycle is accepted.
    At this point, we recall from Section~\ref{s:defining_the_markov_chain} the definition of the Markov chain 
    $$
        Z^x_{n} = (X^x_{n}, B^x_{c_1,n}, \dots, B^x_{c_{d}, n}).
    $$
    Remark that $\mathbb{Q}^x(  \min_{c}\tau_{\circlearrowleft c} > \tau^{(\text{after }1)}_{\to \{x\}})$ is the probability of first return before a cycle is accepted, i.e., a special case of \eqref{eq:proba_back_once} with $\mathcal{A}= \emptyset$.
    Denote the probability of the complementary event by $\rho_x = \mathbb{Q}^x(  \min_{c}\tau_{\circlearrowleft c} \leq \tau^{(\text{after }1)}_{\to \{x\}})$. 
    Using the Markov property for $Z^x_{n}$, we have the following law
    \[
        \mathbb{P}(R_x = k) = (1 - \rho_x)^{k} \rho_x,\quad k=0,1,\dots,
    \]
    so that $R_x$ is a geometric random variable with success parameter $ \rho_x $, whose expectation is $ (1-\rho_x)/\rho_x$.
    Thus, we find $G_\alpha(1,x,x;\emptyset) = 1 + (1-\rho_x)/\rho_x = 1/\rho_x$.
    Now, by using \eqref{e:Green_inverse_success_proba} with $\calA = \emptyset$ and $t=1$, we also have 
    \[
        G_\alpha(1,x,x;\emptyset) = \left( 1 - \sum_{\substack{\gamma \in\BLoops(\calV,x) : d(\gamma)=1}}\mu_\alpha(\gamma)\right)^{-1}.
    \]
    Recalling the definition of $q(\gamma)$ in \eqref{eq:loop_weight} in terms of the product of the transition probabilities along the edges of $\gamma$, it is easy to see that 
    \begin{align*}
        \mathbb{Q}^{x}\left(\tau^{(\text{after }1)}_{\to \{x\}}  <  \min_{c}\tau_{\circlearrowleft c}\right) &= \sum_{\substack{\gamma \in\BLoops(\calV,x) : d(\gamma)=1}}q(\gamma) \prod_{c\in \cycles(\gamma)}(1 - \alpha(c))\nonumber\\
         &= \sum_{\substack{\gamma \in\BLoops(\calV,x) : d(\gamma)=1}}\mu_\alpha(\gamma),
    \end{align*}
    in the light of the definition of $\mu_\alpha(\gamma)$ in \eqref{eq:mu_alpha}.
    This completes the proof.
\end{proof}
\subsection{A verbose \cyclepopping{} also samples a Poisson point process on loops}
In this subsidiary section, we seize the opportunity to write down the details of an intriguing construction that \cite[Section 8.3]{LeJan11} defined with fewer details and in the absence of $\Uone$-connection; see also \citep[Remark 18]{lejan:hal-03655583} for continuous loops. 
This is not directly related to the analysis of the running time of Wilson's algorithm, but it is more of a side observation that, as hinted by Corollary~\ref{corol:law_of_T} and upon suitably randomizing the popped based loops, \cyclepopping{} also implicitly gives a sample of a Poisson process of unbased loops.
The key to understand this construction is how the based loops popped by \cyclepopping{} are randomly partitioned to yield a Poisson sample with intensity \eqref{eq:exact_measure}.

Consider a sample output of Algorithm~\ref{a:cyclepopping} and let $x_1, \dots, x_n$ be the vertices of the graph as they are visited by the algorithm.
The loops erased by Algorithm~\ref{a:cyclepopping} form a tuple of -- possibly empty -- based loops
\[
    (\gamma_{1}, \gamma_{2},  \dots, \gamma_{n}) \in \BLoops(\overline{\mathcal{B}_1},x_1) \times \BLoops(\overline{\mathcal{B}_2},x_2) \times \dots \times \BLoops(\overline{\mathcal{B}_{n}},x_{n})
\]
where $\mathcal{B}_1 = \emptyset$, $\mathcal{B}_2 = \{x_1\}$, $\mathcal{B}_3 = \{x_1,x_2\}$, \dots, $\mathcal{B}_{n} = \{x_1, \dots, x_{n-1}\}$ in the notations of \eqref{e:reprise}.
Note that the $n$-th entry $\gamma_n$ is always the trivial loop containing only $x_n$, which satisfies $\mu_\alpha(\gamma_n) = 1$.
It is easy to see that collecting the popped loops and the output of Algorithm~\ref{a:cyclepopping} defines a random pair $(\Gamma,\calT)$, where 
\begin{align}
    \Gamma = (\Gamma_1, \dots, \Gamma_{n})\label{eq:Gamma_tuple}
\end{align}
is a random tuple of loops in  $\BLoops(\overline{\mathcal{B}_1},x_1) \times \dots \times \BLoops(\overline{\mathcal{B}_{n}},x_{n})$ and $\calT$ is a random CRSF.
Their joint law\footnote{In reference to physics, \citet[Section 8]{LeJan11} describes this coupling as a \emph{supersymmetry}.} is
\begin{align}
    \mathbb{P}(\Gamma = (\gamma_{1}, \dots, \gamma_{n}), \calT = \calF) = \mu_\alpha(\gamma_{1}) \dots \mu_\alpha(\gamma_{n}) \times \prod_{e\in \calF} p_e \prod_{c\in \calF} \alpha(c). \label{eq:coupling}
\end{align}
In analogy with the so-called \emph{verbose mode} in computing, we define a verbose \cyclepopping{} which keeps tracks of more details of the run by outputting the erased loops in addition to a CRSF.
\begin{definition}
    We call \verbosecyclepopping{} the variant of \cyclepopping{} which outputs the random pair $(\Gamma,\calT)$ constituted of a tuple of popped based loops $\Gamma$ as given in \eqref{eq:Gamma_tuple} and a CRSF $\calT$.
\end{definition}
By marginalizing, i.e., by summing the first factor of \eqref{eq:coupling} over all possible based loops, we recover the normalization \eqref{eq:proba_CRSF_non_det_full}, namely,
\begin{align*}
    \sum_{\gamma_1\in \BLoops(\overline{\mathcal{B}_1},x_1)} \dots \sum_{\gamma_n \in \BLoops(\overline{\mathcal{B}_{n}},x_{n})}\mu_\alpha(\gamma_{1}) \dots \mu_\alpha(\gamma_{n})  &=  \prod_{i=1}^{n}G_\alpha(1,x_i,x_i;\mathcal{B}_i)\\
    &= \exp m_\alpha(\Loops) = Z_\alpha^{-1},
\end{align*}
where we used Lemma~\ref{lem:Green_function_loops} to express the Green function as a sum over based loops, and  the notation for the normalization  \eqref{eq:proba_CRSF_non_det_full} of the measure on CRSFs.
In other words, by forgetting the erased loops, we indeed find that Algorithm~\ref{a:cyclepopping} yields $\calF$ with probability $\mathbb{P}(\calT = \calF)$ as given in \eqref{e:reprise}.

Now, by inserting $e^{- m_\alpha(\Loops)} e^{ m_\alpha(\Loops)}=1$ in \eqref{eq:coupling} and by using the partition of the set of unbased loops in \eqref{eq:chain_rule}, we write
\begin{align}
    \mathbb{P}(\Gamma = (\gamma_{1}, \dots, \gamma_{n-1}), \calT = \calF) &= \mu_\alpha(\gamma_{1}) e^{-m_{\alpha}(\rmE_1)} \dots \mu_\alpha(\gamma_{n}) e^{-m_{\alpha}(\rmE_{n})}\times\nonumber\\
    &\times  Z_\alpha^{-1} \prod_{e\in \calF} p_e \prod_{c\in \calF} \alpha(c), \label{eq:coupling_with_based_loops}
\end{align}
where one factor is trivial since $\mu_\alpha(\gamma_{n})=1$ by definition of trivial loops and $m_{\alpha}(\rmE_{n})=0$ by \eqref{eq:exact_measure}.
We shall see that a randomization of $\Gamma$ actually yields a Poisson point process of loops with intensity measure $m_\alpha$.
\begin{proposition}[Coupling between Poisson process and CRSFs]\label{prop:coupling_Poisson_CRSFs}
    Under Assumption~\ref{ass:non-trivial-weights}, let $(\Gamma, \calT)$ be the output of \verbosecyclepopping{}.
    Denote by $\mathcal{L} = \{[\delta_1]^{\nu_1}$, \dots, $[\delta_\ell]^{\nu_\ell}\}$ a set of loops with multiplicity and $\calF$ a CRSF.
    Let $\calX \sim \text{\rm Poisson}(m_\alpha, \Loops)$
    We have 
    \begin{align*}
        &\mathbb{P}(\forgetorder(\forgetbase(\randomsplit(\Gamma))) = \mathcal{L}, \calT = \calF)\\
        &= \mathbb{P}(\calX =\mathcal{L}) \times
        \mu_{\mathrm{CRSF},\alpha}(\calF),
    \end{align*}
    where the measure over CRSFs is given in \eqref{eq:proba_CRSF_non_det_oriented}.
\end{proposition}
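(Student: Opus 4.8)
The plan is to first decouple the two components output by \verbosecyclepopping{}, then Poissonize vertex by vertex. By the factorized form \cref{eq:coupling}, $\mathbb{P}(\Gamma=\gamma,\calT=\calF)$ splits as $\big(\prod_{i=1}^n\mu_\alpha(\gamma_i)\big)\big(\prod_{e\in\calF}p_e\prod_{c\in\calF}\alpha(c)\big)$; summing out the second factor over CRSFs gives $\calT\sim\mu_{\mathrm{CRSF},\alpha}$ (already obtained through \cref{e:reprise}--\cref{eq:proba_CRSF_non_det_full}), while summing out the first factor shows $\Gamma$ is independent of $\calT$ with product law $\mathbb{P}(\Gamma=(\gamma_1,\dots,\gamma_n))=\prod_{i=1}^n\mu_\alpha(\gamma_i)/G_\alpha(1,x_i,x_i;\mathcal{B}_i)$. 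Since $\randomsplit$ is applied with auxiliary randomness independent of $(\Gamma,\calT)$, the transformed object $\forgetorder(\forgetbase(\randomsplit(\Gamma)))$ remains independent of $\calT$; so it suffices to show it has law $\text{Poisson}(m_\alpha,\Loops)$.

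Because $\Gamma_1,\dots,\Gamma_n$ are independent, $\Gamma_i\in\BLoops(\overline{\mathcal{B}_i},x_i)$, and $\randomsplit$ treats each coordinate separately, the sub-outputs $\calX_i:=\forgetorder(\forgetbase(\randomsplit(\Gamma_i)))$ are independent point processes, $\calX_i$ supported on the set $\rmE_i$ of \cref{eq:partition_of_unbased_loops}. As $\rmE_1,\dots,\rmE_n$ partition $\Loops$, superposition of Poisson processes reduces everything to $\calX_i\sim\text{Poisson}(m_\alpha|_{\rmE_i})$ for each $i$. Fix $i$, set $x=x_i$, $\calA=\mathcal{B}_i$, let $\mathcal{R}_x$ be the first-return loops based at $x$ in $\overline{\calA}$ and $R=\sum_{e\in\mathcal{R}_x}\mu_\alpha(e)$; by \cref{lem:Green_function_loops} and \cref{eq:proba_back_once}, $R<1$ and $G_\alpha(1,x,x;\calA)=1/(1-R)$. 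Splitting $\Gamma_i$ at its successive returns to $x$ and using \cref{lem:concat} and \cref{lem:power_sum}, the number $D$ of returns is geometric, $\mathbb{P}(D=k)=(1-R)R^k$, and given $D=k$ the excursions $\eta_1,\dots,\eta_k$ are i.i.d.\ with law $\mu_\alpha(\cdot)/R$ on $\mathcal{R}_x$. The map $\randomsplit$ draws a uniform $\pi\in S_D$ and emits, for each cycle $C=(j_1\!\to\!\cdots\!\to\!j_\ell)$ of $\pi$, the based loop $\eta_{j_1}\circ\cdots\circ\eta_{j_\ell}$; then $\forgetbase$ and $\forgetorder$ turn this tuple into the multiset of the corresponding unbased loops $L_C=[\eta_{j_1}\circ\cdots\circ\eta_{j_\ell}]$.

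The core is the Laplace functional. Fix $f\ge0$ on $\rmE_i$. Conditioning on $D$ and $(\eta_j)$ and averaging over $\pi$, the coordinates of $(\eta_j)$ indexed by distinct cycles of $\pi$ are independent, so after also taking the expectation over $(\eta_j)$ one gets $\mathbb{E}\big[\prod_{C}e^{-f(L_C)}\mid D\big]=\mathbb{E}\big[\prod_{\ell\ge1}g(\ell)^{a_\ell(\pi)}\mid D\big]$, where $a_\ell(\pi)$ counts the $\ell$-cycles of $\pi$ and $g(\ell)=\mathbb{E}\big[e^{-f([\eta_1\circ\cdots\circ\eta_\ell])}\big]$ for $\eta_1,\dots,\eta_\ell$ i.i.d.\ $\propto\mu_\alpha$ on $\mathcal{R}_x$. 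Now invoke the classical fact (the Feller coupling, equivalently the exponential formula for the cycle index of $S_k$) that when $\mathbb{P}(D=k)=(1-R)R^k$ and $\pi\in S_D$ is uniform, the counts $a_\ell$ are independent with $a_\ell\sim\text{Poisson}(R^\ell/\ell)$; this yields $\mathbb{E}\big[\prod_{C}e^{-f(L_C)}\big]=\exp\!\big(\sum_{\ell\ge1}\frac{R^\ell}{\ell}(g(\ell)-1)\big)$. It remains to recognize the right-hand side. Expanding $g(\ell)$ over $\mathcal{R}_x^\ell$ and grouping the tuples $(e_1,\dots,e_\ell)$ by the unbased loop $[\delta]=[e_1\circ\cdots\circ e_\ell]$ they produce, a fixed $[\delta]\in\rmE_i$ with $d_x([\delta])=\ell$ arises from exactly $\#_{[\delta],x}$ tuples, each of $\mu_\alpha$-weight $\mu_\alpha(\delta)$; using $\#_{[\delta],x}=\mathcal{N}_{[\delta]}d_x([\delta])/|\delta|$ from \cref{lem:number_loops}, $\mu_\alpha([\delta])=\mathcal{N}_{[\delta]}\mu_\alpha(\delta)$ from \cref{eq:measure_forget_base}, and the definition \cref{eq:exact_measure} of $m_\alpha$, one obtains $\sum_{\ell\ge1}\frac{R^\ell}{\ell}g(\ell)=\sum_{[\delta]\in\rmE_i}m_\alpha([\delta])e^{-f([\delta])}$, while $\sum_{\ell\ge1}\frac{R^\ell}{\ell}=-\log(1-R)=\log G_\alpha(1,x,x;\calA)=m_\alpha(\rmE_i)$ by \cref{lem:logG}. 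Hence the Laplace functional of $\calX_i$ is $\exp\!\big(\sum_{[\delta]\in\rmE_i}(e^{-f([\delta])}-1)\,m_\alpha([\delta])\big)$, which by \cref{e:laplace} characterizes $\text{Poisson}(m_\alpha|_{\rmE_i})$. Superposing over $i$ and recalling the independence from $\calT$ completes the proof.

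The main obstacle is the combinatorial identity in the last paragraph: passing from ordered tuples of first-return excursions to unbased loops while tracking the three multiplicity factors $\mathcal{N}_{[\delta]}$, $d_x([\delta])$ and $\#_{[\delta],x}$ so that they collapse into the single factor $1/|\delta|$ defining $m_\alpha$, together with a clean use of the permutation-cycle Poissonization; by contrast the independence and superposition steps are routine bookkeeping.
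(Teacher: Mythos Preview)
Your proof is correct and reaches the same conclusion, but via a genuinely different construction of $\randomsplit$ and a different computational tool.

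The paper splits the based loop $\gamma=\eta_1\circ\cdots\circ\eta_{n_x}$ into \emph{consecutive} blocks, drawing the tuple of block sizes from the exchangeable law $\mathbb{P}(M=(m_1,\dots,m_k))=\frac{1}{k!}\frac{1}{m_1\cdots m_k}$ (equivalently, the Chinese-restaurant partition on $\{1,\dots,n_x\}$), and then computes the probability mass function of the resulting multiset of unbased loops directly. You instead draw a uniform $\pi\in S_D$ and output one loop per cycle of $\pi$, then identify the law through the Laplace functional, invoking the classical Poissonization of permutation cycle counts under a geometric size. Both routes ultimately hinge on \cref{lem:number_loops} to collapse the three factors $\#_{[\delta],x}$, $d_x([\delta])$, $\mathcal{N}_{[\delta]}$ into the single weight $m_\alpha([\delta])$. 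Your approach is arguably more conceptual---once one knows the exponential formula for the symmetric group, the emergence of the Poisson process is automatic---whereas the paper's is more hands-on and produces the finite-dimensional distributions explicitly without appealing to a Laplace-functional characterization.

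One imprecision to flag: $\Gamma$ and $\calT$ are \emph{not} independent as you assert, because the visiting order $x_1,\dots,x_n$, and hence the support $\BLoops(\overline{\mathcal{B}_i},x_i)$ of each $\Gamma_i$, is a deterministic function of $\calT$ (via \cref{rem:shape}). What is true, and what your computation actually establishes, is that the conditional law of $\Gamma$ given $\calT=\calF$ is the product $\prod_i\mu_\alpha(\cdot)/G_\alpha(1,x_i,x_i;\mathcal{B}_i)$, and that after your transformation the conditional law of the output multiset is $\text{Poisson}(m_\alpha,\Loops)$ \emph{regardless of} $\calF$ (because the partition $\rmE_1,\dots,\rmE_n$ changes with $\calF$ but its union $\Loops$ does not). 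This is precisely what makes the joint law factor as claimed; the argument is unaffected, but the phrasing should be tightened.
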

The precise definition of $\randomsplit$ is given in the proof of Proposition~\ref{prop:coupling_Poisson_CRSFs}.
\begin{proof}
Consider the first factor in \eqref{eq:coupling_with_based_loops} with the based loop $\gamma_1 \in \BLoops(\overline{\mathcal{B}_1},x_1)$.
To avoid trivialities, we assume $|\gamma_1|\geq 2$.
We now describe how the based loop $\gamma_1$ can be randomly split into a set of unbased loops that provides a Poisson sample of unbased loops in $\rmE_1$, as suggested by \citet[Section 8.3]{LeJan11} or \citet[Section 9.4]{LaLi10} in a slightly different setting.

Suppressing indices for a moment,  say that a loop $\gamma$ based at $x$ visits its base point exactly $n_{x}$ times after its start, i.e., $d(\gamma)= n_{x}$ as defined in Notation~\ref{not:d(gam)}.
Thus, this oriented loop is the concatenation of $n_{x}$ loops,
\begin{equation}
    \gamma=\eta_1 \circ \dots \circ \eta_{n_{x}},\label{eq:decomp_gamma}
\end{equation}
each of them visiting $x$ only once of its start: $d(\eta_i) = 1$ for $1\leq i \leq n_x$. \\

\noindent\emph{Trivial multiplicities.}
For simplicity, we assume that these loops are all distinct; the case of general multiplicities is treated in the next paragraph.
The main idea that we describe below is to define a decomposition of $\gamma$ into loops thanks to a random partition of the set of basic $n_x$ loops which is identified to $\{1, \dots, n_x\}$.

Recall that a partition of $\{1, \dots, n_x\}$ is an unordered collection of disjoint sets, called blocks, such that their union is $\{1, \dots, n_x\}$.
Let $\{\calS_1, \dots, \calS_k\}$ be a partition of $\{1, \dots, n_x\}$. 
Naturally, the block sizes satisfy $|\calS_1|+\dots+|\calS_k|=n_x$.
The blocks can be put in \emph{order of appearance}: the smallest element of $\calS_1$ is smaller than the smallest element of $\calS_2$, etc.
Let $\bm\calS$ be a random partition with law
\begin{equation}
    \mathbb{P}(\bm\calS = \{\calS_1, \dots, \calS_k\}) = \frac{(|\calS_1|-1)!  \dots (|\calS_k|-1)!}{n_x!},\label{eq:partition_proba}
\end{equation}
and denote by $|\bm\calS| = k$ its number of blocks.\footnote{
    The reader may recognize the law on partitions that arises in the so-called \emph{Chinese restaurant process}. 
}
Define the random ordered partition $\bm{\calS^{o}}$ by taking the blocks in order of appearance, and then uniformly permuting the $k$ blocks.
Formally, the probability to obtain a given ordered partition is
\begin{equation}
    \mathbb{P}(\bm{\calS^{o}} = (\calS_1^{o}, \dots, \calS_k^{o})) = \frac{1}{k!}\mathbb{P}(\bm\calS = \{\calS^{o}_1, \dots, \calS^{o}_k\}).
    \label{eq:partition_proba_ordered}
\end{equation}
At this point, we define the ``exchangeable random order'' of block sizes in the terminology of \cite[Section 2]{pitman2006combinatorial}. 
Let $M$ be a random tuple whose entries are the sizes of the blocks of $\bm{\calS^{o}}$.
Its law is obtained by summing \eqref{eq:partition_proba_ordered} over all possible ordered partitions, say of given sizes $(m_1,\dots,m_k)$.
By inspection of \eqref{eq:partition_proba}, all these partitions have the same probability. 
Hence, this summation amounts to multiply \eqref{eq:partition_proba_ordered} by a multinomial factor as follows
\begin{align}
    \mathbb{P}(M = (m_1,\dots,m_k)) &= \frac{n_x!}{m_1! \dots m_k!}\frac{1}{k!} \frac{(m_1-1)!  \dots (m_k-1)!}{n_x!} \nonumber\\
    &= \frac{1}{k!} \frac{1}{m_1 \cdots m_k}.\label{eq:law_of_M}
\end{align}
That \eqref{eq:law_of_M} is well-normalized is a consequence of \citep[Exersise 9.1]{LaLi10}. 
Next, draw $M$ according to \eqref{eq:law_of_M} and concatenate the based loops $\eta_1$, \dots, $\eta_{n_{x}}$ in \eqref{eq:decomp_gamma} to obtain
\[
     \gamma = \overbrace{\delta_1}^{\text{$m_1$ visits of $x$}} \circ \dots \circ \overbrace{\delta_k}^{\text{$m_k$ visits of $x$}},
\]
with number of visits of $x$ respectively equal to $d(\delta_1) = m_1$, \dots, $d(\delta_k) = m_k$; e.g., the first loop being 
$
    \delta_1 = \eta_1 \circ \dots \circ \eta_{m_1}.
$
The corresponding random function reads $\randomsplit: \gamma\mapsto (\delta_1, \dots, \delta_k)$.
Considering a nontrivial based loop $\gamma$, \eqref{eq:law_of_M} gives
\[
    \mathbb{P}(\randomsplit(\gamma) = (\delta_1, \dots, \delta_k) ) = \frac{1}{k!} \frac{1}{m_1 \cdots m_k}.
\]
For any ordered sequence of based loops $(\delta_1, \dots, \delta_k)$, we also define the function 
$$
    \forgetbase: (\delta_1, \dots, \delta_k) \mapsto ([\delta_1], \dots, [\delta_k]),
$$ 
which simply maps any list of based loops to the most of their unbased equivalence class.

Coming back to the measure \eqref{eq:coupling_with_based_loops}, we are ready to write the law of 
$$
\forgetbase(\randomsplit(\Gamma_1)).
$$
Let $\gamma_1$ be a loop based at the common base $x_1$ of every loop in $\Gamma_1$.
Using the definition of the unbased loop measures \eqref{eq:measure_forget_base} and \eqref{eq:exact_measure}, as well as the random tuple of popped based loops \eqref{eq:Gamma_tuple}, we find
\begin{align}
    \mathbb{P}\big(\forgetbase &(\randomsplit(\Gamma_1)) = ([\delta_1], \dots, [\delta_k])\big)\nonumber\\
    &= \mathbb{P}\left( \randomsplit(\Gamma_1) \in {\forgetbase}^{-1}\left(([\delta_1], \dots, [\delta_k])\right) \right) \nonumber\\
    &= \sum_{\delta'_1, \dots, \delta'_k} \frac{ e^{-m_{\alpha}(\rmE_1)}}{k!} \prod_{i = 1}^{k} \frac{\mu_\alpha(\delta_i)}{d(\delta_i)} 1_{\delta'_i \in \BLoops(\calV,x_1)} 1_{[\delta'_i]=[\delta_i]}\nonumber\\    
    &= \frac{ e^{-m_{\alpha}(\rmE_1)}}{k!} \prod_{i = 1}^{k}\#_{[\delta_i],x_1} \frac{\mu_\alpha(\delta_i)}{d(\delta_i)}\nonumber\\ 
    &= \frac{ e^{-m_{\alpha}(\rmE_1)}}{k!} \prod_{i = 1}^{k} m_\alpha([\delta_i]),
    \label{eq:proba_split_before_forgetting_order}
\end{align}
where $\#_{[\gamma],x_1}$ is be the number of \emph{representatives of $[\gamma]$ based at $x_1$}; as given in  Lemma~\ref{lem:number_loops}.
Hence, if $\calX_1\sim\text{\rm Poisson}(m_\alpha, \rmE_1)$, the final probability is obtained by multiplying by the number of permutations of $k$ distinct objects,
\begin{align}
    &\mathbb{P}\Big(\forgetorder(\forgetbase(\randomsplit(\Gamma_1))) = \{[\delta_1], \dots, [\delta_k]\} \Big) \nonumber\\
    &= m_\alpha([\delta_1])\dots m_\alpha([\delta_k])e^{-m_{\alpha}(\rmE_1)}\nonumber\\
     &= \mathbb{P}(\calX_1 = \{[\delta_1], \dots, [\delta_k]\} ).\label{eq:proba_split_after_forgetting_order}
\end{align}
Repeating the construction for each factor in the joint distribution \eqref{eq:coupling_with_based_loops}, we recognize the law of the Poisson process $\text{\rm Poisson}(m_\alpha, \Loops)$.\\

\noindent\emph{General multiplicities.}
    In order to derive the previous results, we assumed that the loops in the decomposition \eqref{eq:decomp_gamma} where pairwisely distinct.
    We can relax this assumption.
    Denote by $\{[\delta_1]^{\nu_1}, \dots, [\delta_\ell]^{\nu_\ell}\}$ a multiset where $[\delta_1]$, \dots, $[\delta_\ell]$ appear respectively $\nu_1$, \dots, $\nu_\ell$ times with $\nu_1 + \dots  + \nu_\ell = k$.
    Now, the change in the derivation happens at the ``$\forgetorder$'' step between \eqref{eq:proba_split_before_forgetting_order} and \eqref{eq:proba_split_after_forgetting_order}, where one has to multiply \eqref{eq:proba_split_before_forgetting_order} by the number of ordered sequences yielding the same unordered sequence, i.e.,
    $
        \frac{k!}{\nu_1! \dots \nu_\ell!}.
    $
    Thus, the corresponding generalization of \eqref{eq:proba_split_before_forgetting_order} reads
    \begin{align*}
        &\mathbb{P}\Big(\forgetorder(\forgetbase(\randomsplit(\Gamma_1))) = \{[\delta_1]^{\nu_1}, \dots, [\delta_\ell]^{\nu_\ell}\} \Big)\nonumber \\ &= \frac{m_\alpha([\delta_1])^{\nu_1}\dots m_\alpha([\delta_\ell])^{\nu_\ell}}{\nu_1 ! \dots \nu_\ell !}e^{-m_{\alpha}(\rmE_1)}\\
        &=\mathbb{P}(\calX_1 = \{[\delta_1]^{\nu_1}, \dots, [\delta_\ell]^{\nu_\ell}\} ).
    \end{align*}
    The last equality can be shown as follows. Denote by $\mathcal{N}_{\calS}(\calX)$ the random number of points (loops) 
    of $\calX \sim \text{\rm Poisson}(m_\alpha, \Loops)$ in the set $\calS\subseteq \Loops$.
    By using the properties of Poisson Point Processes, we have
    \begin{align*}
        &\mathbb{P}\Big(\mathcal{N}_{\{[\delta_1]\}}(\calX_1) = \nu_1 \text{ and } \dots \text{ and } \mathcal{N}_{\{[\delta_\ell]\}}(\calX_1) = \nu_k \text{ and } \mathcal{N}_{\rmE_1\setminus\{[\delta_1], \dots, [\delta_\ell]\}}(\calX_1) = 0\Big)\\
        & = \frac{m_\alpha([\delta_1])^{\nu_1}e^{-m_\alpha([\delta_1])}}{\nu_1!} \dots \frac{m_\alpha([\delta_\ell])^{\nu_k}e^{-m_\alpha([\delta_\ell])}}{\nu_\ell!} e^{-m_\alpha(\rmE_1\setminus\{[\delta_1], \dots, [\delta_\ell]\})},
    \end{align*}
    which yields the desired result since $\{[\delta_1]\} \cup  \dots \cup \{[\delta_\ell]\} \cup \rmE_1\setminus\{[\delta_1], \dots, [\delta_\ell]\} = \rmE_1$.
    This concludes the proof of Proposition~\ref{prop:coupling_Poisson_CRSFs}.
\end{proof}
\subsection{Interpretation of $\cyclepopping{}$ with heaps of cycles \label{sec:heaps}}
One of our initial motivations was to understand the normalization constant \eqref{eq:proba_CRSF_non_det_full} as expressed in somewhat compact form ``$\sum_{\calC} \mathbb{P}\left(\mathrm{ pop }\ \mathcal
{C}\right)$'' in \citep[p938]{KK2017}, where $\calC$ denotes all possible collections of cycles which can be popped.
In this section, we explain how one can fall back onto the expression of \cite{KK2017} using the combinatorial notion of \emph{heaps of pieces} \citep{viennot2006heaps}, which are here simply \emph{heaps of oriented cycles} \citep[Chapter 5b]{ViennotCourse} popped by $\cyclepopping{}$.
\begin{definition}[concurrent cycles]\label{def:concurrent}
    Let $[c]$ and $[c^\prime]$ be two unbased cycles. 
    We say that $[c]$ is concurrent to $[c^\prime]$, namely $[c] \mathcal{R} [c^\prime]$, if $[c]$ and $[c']$ share at least one node. 
\end{definition}
The definition of a heap abstracts the definition of $\cycles(\gamma)$ in Notation~\ref{not:pieces_loop}.
Loosely speaking, \emph{node-disjoint consecutive} cycles commute in a heap.
Although, as emphasized by Viennot, one possible definition follows from the construction of the partially commutative monoid of \citet{cartier1969problemes}, we give below a less formal definition.
\begin{definition}[labelled heap]
    A labelled heap of cycles $H$ is a partially ordered set of unbased cycles endowed with an order index, namely $\{([c_1],k_1)$, \dots, $([c_h],k_h)\}$ with $k_1,\dots, k_h$ positive integers and $\{[c_1], \dots, [c_h]\}$ a multiset of cycles.
    The set of pieces of $H$ \citep[Definition 2.2]{krattenthaler2006theory} is simply the set of distinct cycles in $\{[c_1], \dots, [c_h]\}$.
    The partial order is defined as follows.
    If $[c] \mathcal{R} [c^\prime]$ and $k \leq k^\prime$, we write $([c],k) \preceq_0 ([c^\prime],k^\prime)$.
    This partial order is then extended by transitivity for nonconcurrent cycles.
    First, when $[c] \mathcal{R} [c^\prime]$, then $\preceq$ is given by $\preceq_0$.
    Second, for $[c] \slashed{\mathcal{R}} [c^\prime]$,
    we write $([c],k) \preceq ([c^\prime],k^\prime)$ if there is a sequence $([c_i],k_i)$ for $1\leq i \leq j$ for some integer $j\geq 1$ such that $([c],k) \preceq_0 ([c_1],k_1) \preceq_0 \dots \preceq_0  ([c_j],k_j) \preceq_0  ([c^\prime],k^\prime)$.
    For compactness, we often leave the index of $([c],k)$ implicit and simply write $[c]$.
\end{definition}
The empty heap, associated with the empty set of pieces, is denoted by $\emptyset$.
The following definitions formalize the intuition that several labellings correspond to the same heap.
\begin{definition}[isomorphism of labelled heaps]\label{def:isom_heap}
    Two labelled heaps of cycles $H_1 = \{([c_1],k_1)$, \dots, $([c_h],k_h)\}$ and $H_2 = \{([c_1],\ell_1)$, \dots, $([c_h],\ell_h)\}$ are isomorphic if they share the same multiset of cycles and for all cycles $[c]$ and $[c^\prime]$ such that $[c]\mathcal{R}[c^\prime]$, we have $k \leq k^\prime$ if and only if $\ell \leq \ell^\prime$.
\end{definition}
\begin{definition}[heap]
    A heap of cycles is an equivalence class of labelled heaps of cycles under isomorphism.
\end{definition}
The construction of Viennot also includes a formal definition of \emph{putting a heap on top of another heap} that we briefly sketch here for self-containedness.
Given two heaps $H_1$ and $H_2$, the superposition $H_1 \circ H_2$ of $H_2$ on $H_1$ has for set of pieces the union of the pieces of $H_1$ and $H_2$, and the partial order is given by an intuitive composition rule inheriting the partial order of $H_1$ and $H_2$; see \citep[Definition 2.5]{krattenthaler2006theory} or \citep[Definition 2.7]{viennot2006heaps} for a rigorous definition.
\begin{definition}[pyramid]
    The element $[c]$ of a heap is \emph{maximal} if there is no $[c^\prime]$ such that $[c]\preceq [c^\prime]$. 
     A heap is a pyramid if it contains exactly one maximal piece.
\end{definition}
We refer to Figure~\ref{fig:heap_from_loop} for an illustration of a pyramid where $c_3$ is the maximal piece.
For further use, denote by $\mathcal{P}(\overline{\calA},x)$ the set of pyramids with pieces in the complement of $\calA$ and such that node $x$ belongs to the maximal piece.
Note that an empty heap is not a pyramid. 
It will be convenient to define 
\begin{equation}
    \mathcal{P}_\emptyset(\overline{\calA},x) = \{\emptyset\} \cup \mathcal{P}(\overline{\calA},x),\label{eq:set_pyramids_cup_empty_heap}
\end{equation}
where $\emptyset$ is the empty heap.
\subsubsection{MGF of $T$ as a ratio of sums over heaps}
Define the weight of a heap as 
\begin{equation}
    w(H) = \prod_{[c]\in H} w([c]),\label{eq:weight_of_heap}
\end{equation}
where $w([c])\geq 0$ is a weight function over unbased oriented cycles. 
Furthermore, we set the weight of the empty heap to be $w(\emptyset)=1$.
\begin{proposition}\label{prop:conv_sum_over_heaps}
    Consider a (finite) graph and heaps of cycles with weights as given in \eqref{eq:weight_of_heap}.
    If $|w([c])| < 1$ for every cycle $[c]$, the series $\sum_{H}w(H)$ summing all heap weights is finite.
\end{proposition}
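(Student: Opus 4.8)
The plan is to recognise $\sum_H w(H)$ as the expansion of a rational function with a \emph{finite} denominator, and then to control convergence. First I would reduce to nonnegative weights: replacing each $w([c])$ by $|w([c])|\in[0,1)$ only increases $\sum_H|w(H)|$, so it is enough to treat $w([c])\ge 0$, in which case the partial sums $\sum_{|H|\le N}w(H)$ are nondecreasing and finiteness of the series is equivalent to their boundedness. Since $G$ is finite it has only finitely many oriented cycles $[c_1],\dots,[c_{n_c}]$, hence only finitely many \emph{trivial} heaps (heaps all of whose pieces are pairwise non-concurrent, equivalently finite sets of pairwise vertex-disjoint cycles in the sense of \cref{def:concurrent}), so that
\[
    P(w) \;=\; \sum_{T \text{ trivial heap}} (-1)^{|T|}\,w(T) \;=\; 1 - \sum_{i} w([c_i]) + \cdots
\]
is a genuine polynomial in the $w([c_i])$ with constant term $1$. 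Because the concurrency relation $\mathcal R$ is precisely the reflexive dependence relation of the partially commutative monoid underlying heaps of cycles \citep{cartier1969problemes}, Viennot's inversion lemma \citep{viennot2006heaps,krattenthaler2006theory} yields the formal power series identity $\sum_H w(H) = 1/P(w)$.

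To turn this formal identity into an analytic one I would exploit the canonical layered decomposition of heaps: every heap $H$ writes uniquely as $H=T_1\circ T_2\circ\cdots\circ T_\ell$, where $T_1$ is the trivial heap of minimal pieces of $H$, $T_2$ the trivial heap of minimal pieces of $H\setminus T_1$, and so on, each $T_j$ nonempty and each piece of $T_{j+1}$ concurrent with some piece of $T_j$. Encoding the admissible transitions by the finite nonnegative matrix $A$ indexed by nonempty trivial heaps, with $A_{T,T'} = w(T')\,\mathbf{1}\{\text{every piece of }T'\text{ is concurrent with a piece of }T\}$, and writing $\bm{u}=(w(T))_T$ and $\bm{1}$ for the all-ones vector, one gets the refined counting $\sum_{H} w(H) = 1 + \sum_{\ell\ge 1}\bm{u}^\top A^{\ell-1}\bm{1}$, a series which converges, to $1+\bm{u}^\top(\I-A)^{-1}\bm{1}$, as soon as the spectral radius $\rho(A)<1$; this is exactly the assertion that $P(w)\ne 0$ and that $1/P(w)$ is the sought Taylor series. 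In the concrete situation of this paper the cycle weights are sub-stochastic in nature — a product of transition probabilities along the cycle, times a popping factor — and $P(w)$ is then a determinant $\det(\I-\mathsf{M})$ of a sub-stochastic matrix $\mathsf{M}$, whose non-vanishing already follows from \cref{prop:Pi} and \cref{lem:Green_function_loops}.

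The main obstacle is precisely this last point: bounding $\rho(A)$, equivalently excluding zeros of $P$, which is where the hypothesis $|w([c])|<1$ has to be used. The relevant structural fact is that, in any heap, all cycles through a fixed vertex $v$ are pairwise concurrent and hence lie on a single chain; pushing this observation layer by layer lets one control the contribution of the pieces incident to each vertex and dominate $\sum_{|H|\le N}w(H)$ by a finite product over the vertices of $G$, whose finiteness under $|w([c])|<1$ is the delicate verification, and which visibly degrades as some $|w([c])|\uparrow 1$. When the weights are moreover small enough that $n_c\max_{[c]}|w([c])|<1$, a lighter route suffices: bound the number of heaps with $h$ pieces by the number $n_c^{\,h}$ of length-$h$ words over the cycle alphabet (Cartier--Foata normal form) and dominate by a geometric series. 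I would nonetheless keep the inversion-lemma line as the main argument, since it also produces the compact form $\sum_H w(H)=1/P(w)$, which is the heap counterpart of the expression ``$\sum_{\calC}\mathbb{P}(\mathrm{pop}\,\calC)$'' of \citet[p.~938]{KK2017} that motivated \cref{sec:heaps}.
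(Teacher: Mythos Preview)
Your instinct that the ``main obstacle'' is bounding $\rho(A)$, equivalently ruling out zeros of $P$, is exactly right, and you are honest in not resolving it under the bare hypothesis $|w([c])|<1$. In fact the proposition as stated is false: take two concurrent cycles $[c_1],[c_2]$ (for instance the two orientations of a triangle, all other cycle weights set to zero) with $w([c_1])=w([c_2])=x\in[\tfrac12,1)$. Since $[c_1]\,\mathcal R\,[c_2]$, the heap monoid on these two pieces is free, heaps are words, and $\sum_H w(H)=\sum_{k\ge 0}(2x)^k$ diverges; equivalently, the only trivial heaps are $\emptyset,\{[c_1]\},\{[c_2]\}$, so $P(w)=1-2x\le 0$. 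The hypothesis $|w([c])|<1$ for each individual cycle is simply too weak.

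The paper's own proof is a short overcounting argument, bounding $\bigl|\sum_H w(H)\bigr|$ by $\prod_{i=1}^{d}(1-|w([c_i])|)^{-1}$, the sum over all \emph{multisets} of cycles. But this inequality goes the wrong way: a given multiset of pieces can correspond to several heaps whenever some pieces are concurrent (they must then be totally ordered, and the different orderings give distinct heaps), so summing once per multiset is an \emph{undercount}, not an overcount. In the triangle example the paper's bound is $(1-x)^{-2}<\infty$ at $x=0.6$ while the true sum is infinite. Hence neither your argument nor the paper's closes the gap, and none can under the stated hypothesis alone. Your fallback to the sub-stochastic weights $w_{t,\alpha}([c])=t^{|c|}\mu_\alpha(c)$ actually used in \cref{sec:heaps} is the correct salvage: there the finiteness of $\sum_H w_{t,\alpha}(H)$ is exactly \cref{eq:Green_heap} combined with \cref{lem:Green_function_loops}, and your crude sufficient condition $n_c\max_{[c]}|w([c])|<1$ (dominating heaps by words in the free monoid on all pieces) is also valid.
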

\begin{proof}
    In a finite graph, there is a finite number of oriented cycles $\{[c_1], \dots, [c_d]\}$, and therefore the set of available pieces for heaps has finite cardinality. 
    Each of these oriented cycles can of course appear several times in a heap.
    We now find a finite upper bound on $\left|\sum_{H}w(H)\right|$ by summing over all possible subsets of oriented cycles and over all cycle multiplicities $\nu_1, \dots, \nu_d \geq 0$.
    This overcounting amounts to neglecting the concurrency of cycles.
    As a consequence of the triangle inequality, we have
    \begin{align*}
        \left|\sum_{H}w(H)\right| &\leq \sum_{\nu_1 = 0, \dots, \nu_d = 0 }^{+\infty} |w([c_1])|^{\nu_1} \dots |w([c_d])|^{\nu_d}\\
        &= \left(1 - |w([c_1])|\right)^{-1} \dots \left(1 - |w([c_d])|\right)^{-1}.
    \end{align*}
\end{proof}
In what follows, we are going to set e.g.\ $w([c]) = \mu_\alpha(c)$ with $c$ any based representative of $[c]$ and where the based loop measure is given in \eqref{eq:mu_alpha}.
This is licit since the expression of $\mu_\alpha(c)$ does not depend on the base point.
\begin{proposition}[decomposition of a heap in pyramids]\label{prop:pyramids_from_heap}
    Let $H$ be a heap of cycles in a graph of $n$ nodes.
    Let $x_1, \dots, x_{n}$ be an ordering of the nodes of the graph.
    There exists a unique sequence $(P_{x_1}, \dots, P_{x_{n-1}})$, indexed by the nodes, of heaps with disjoint sets of elements -- which are either the empty heap or a pyramid -- such that $H= P_{x_1} \circ \dots \circ  P_{x_{n-1}}$ and such that if $P_{x_i}$ $(1\leq i \leq n-1)$ is a pyramid then $x_i$ belongs to its maximal piece.
\end{proposition}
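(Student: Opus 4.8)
The plan is to reproduce, at the level of heaps, the chronological decomposition of a based loop into loops that each visit a fixed vertex once, as in \cref{eq:decomp_gamma}: I would peel principal order ideals off $H$, one vertex at a time, following the ordering $x_1,\dots,x_{n-1}$. Here the indexing of $P_{x_i}$ by the node $x_i$ should be read as carrying the constraint that the pieces of $P_{x_i}$ lie in the complement $\overline{\{x_1,\dots,x_{i-1}\}}$ (exactly as the popped loops satisfy $\Gamma_i\in\BLoops(\overline{\mathcal{B}_i},x_i)$); this is what pins down the sequence. Two elementary heap facts will be used throughout: (a) all pieces of a heap $H$ that contain a given node $x$ are pairwise concurrent (\cref{def:concurrent}), hence form a chain, so if $H$ has an $x$-piece it has a unique topmost one, which I denote $[c^\sharp_x(H)]$; (b) if $I$ is an order ideal of a heap $H$ then $H=I\circ (H\setminus I)$, and if moreover $I$ has a unique maximal element $[c]$ then $I$ is the principal ideal $\{[d]\in H:[d]\preceq [c]\}$. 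Both follow directly from the definition of superposition \citep[Definition 2.5]{krattenthaler2006theory}; in particular the first factor of any superposition is an order ideal of the result.

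\textbf{Existence.} Set $H^{(0)}=H$ and run the following for $i=1,\dots,n-1$: if $H^{(i-1)}$ has no piece containing $x_i$, put $P_{x_i}=\emptyset$ and $H^{(i)}=H^{(i-1)}$; otherwise let $P_{x_i}$ be the principal ideal of $H^{(i-1)}$ generated by $[c^\sharp_{x_i}(H^{(i-1)})]$ and set $H^{(i)}=H^{(i-1)}\setminus P_{x_i}$. Each $P_{x_i}$ is then either empty or a pyramid with unique maximal piece $[c^\sharp_{x_i}(H^{(i-1)})]\ni x_i$; the $P_{x_i}$ have pairwise disjoint element multisets by construction; and fact~(b) gives $H^{(i-1)}=P_{x_i}\circ H^{(i)}$, so $H=P_{x_1}\circ\cdots\circ P_{x_{n-1}}\circ H^{(n-1)}$. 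Moreover, by fact~(a) every piece of $H^{(i-1)}$ containing $x_i$ lies below $[c^\sharp_{x_i}(H^{(i-1)})]$, hence in $P_{x_i}$; thus $H^{(i)}$ contains no piece touching $x_i$, and inductively $H^{(i)}$ has no piece meeting $\{x_1,\dots,x_i\}$. In particular the pieces of $P_{x_i}\subseteq H^{(i-1)}$ avoid $x_1,\dots,x_{i-1}$, and $H^{(n-1)}$ has no piece meeting $\{x_1,\dots,x_{n-1}\}$. Since a cycle has at least two distinct nodes it cannot be supported on the single vertex $x_n$, so $H^{(n-1)}=\emptyset$, giving $H=P_{x_1}\circ\cdots\circ P_{x_{n-1}}$.

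\textbf{Uniqueness.} I would argue by induction on $n$. Let $H=Q_{x_1}\circ\cdots\circ Q_{x_{n-1}}$ be any sequence satisfying the requirements (each $Q_{x_i}$ empty or a pyramid with $x_i$ in its maximal piece and pieces in $\overline{\{x_1,\dots,x_{i-1}\}}$). As the first factor of a superposition, $Q_{x_1}$ is an order ideal of $H$; if it is nonempty it is a pyramid, hence by fact~(b) it is the principal ideal of its maximal piece $[c_1^\ast]\ni x_1$. Since $Q_{x_2},\dots,Q_{x_{n-1}}$ have pieces avoiding $x_1$, \emph{all} $x_1$-pieces of $H$ lie in $Q_{x_1}$; thus $[c^\sharp_{x_1}(H)]\preceq [c_1^\ast]$, while $[c_1^\ast]\preceq [c^\sharp_{x_1}(H)]$ because $[c_1^\ast]$ contains $x_1$, so $[c_1^\ast]=[c^\sharp_{x_1}(H)]$ and $Q_{x_1}=P_{x_1}$ (and $Q_{x_1}=\emptyset=P_{x_1}$ is forced when $H$ has no $x_1$-piece, since a nonempty $Q_{x_1}$ would contain one). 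Removing this common ideal, $H\setminus P_{x_1}=Q_{x_2}\circ\cdots\circ Q_{x_{n-1}}$ is a heap of cycles none of which meets $x_1$, i.e.\ a heap on the graph with node ordering $x_2,\dots,x_n$, and the induction hypothesis yields $Q_{x_i}=P_{x_i}$ for $i\geq 2$.

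\textbf{Main obstacle.} The existence part and the two heap facts are routine; the delicate point is uniqueness, and precisely the fact that $P_{x_1}$ is forced to be the principal ideal of the \emph{topmost} $x_1$-piece of $H$. An arbitrary pyramidal order ideal of $H$ with $x_1$ in its maximal piece is \emph{not} determined by $H$ alone: a later factor could contain another piece through $x_1$. It is the complement condition implicit in the node-indexing — $P_{x_i}$ supported on $\overline{\{x_1,\dots,x_{i-1}\}}$, mirroring $\Gamma_i\in\BLoops(\overline{\mathcal{B}_i},x_i)$ — that rules this out, and I would make that condition explicit at the outset of the proof.
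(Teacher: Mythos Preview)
Your existence argument is essentially the paper's: peel off the principal ideal below the topmost $x_i$-piece, note that all $x_i$-pieces lie in it because they form a chain, and iterate. The paper phrases this as ``push $P_{x_1}$ downwards'' in the sense of \citet[proof of Thm~4.4]{krattenthaler2006theory}, but the content is identical.

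Where you go beyond the paper is on uniqueness. The paper's proof only gives the construction and does not argue uniqueness separately; you supply an induction that actually pins down each $P_{x_i}$. More importantly, you correctly flag that uniqueness \emph{as literally stated} is fragile: without the complement constraint that pieces of $P_{x_i}$ avoid $\{x_1,\dots,x_{i-1}\}$, one can stop peeling $x_1$-pieces early and pick up the remainder in a later $P_{x_j}$ whose maximal piece happens to contain $x_j$ (and also $x_1$). Your reading of ``indexed by the nodes'' as encoding $P_{x_i}\in\mathcal{P}_\emptyset(\overline{\{x_1,\dots,x_{i-1}\}},x_i)$ is exactly what the paper uses downstream in \cref{eq:Green_pyramid}, so this is the right interpretation, and making it explicit strengthens the proof.
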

\begin{proof}
    The proof simply gives a partition of $H$ as a composition of disjoint pyramids or empty heaps.
    Consider the first node in the ordering $x_1$.  
    If no cycle in the heap contains $x_1$, then $P_{x_1} = \emptyset$.
    Otherwise, denote the maximum element in the heap containing $x_1$ by $[c(x_1)]$, which is necessarily unique up to multiplicity since two elements of the heap containing $x_1$ are always related by the partial order ``$\preceq$'', i.e., cycles containing $x_1$ are concurrent; see Definition~\ref{def:concurrent}.
    There exists a pyramid in the heap $H$ with $[c(x_1)]$ as its maximal element, which includes all $[c^\prime]$ so that $[c^\prime] \preceq [c(x_1)]$.
    Denote by $P_{x_1}$ this pyramid.
    By design, any piece containing $x_1$ should be in this pyramid, since this cycle is related by the partial order to  $[c(x_1)]$.
    Now, in the wording of \citet[proof of Theorem 4.4]{krattenthaler2006theory}, we push  $P_{x_1}$ downwards, namely, we subtract it from $H$ to yield $H_{1}$.
    Next, consider the second node $x_2$ in the ordering. 
    If no cycle of the heap contains $x_2$, then $P_{x_2}$ is the  empty heap.
    Otherwise, let $[c(x_2)]$ be the unique maximum piece (up to multiplicity) containing $x_2$ in $H_{1}$.
    Let $P_{x_2}$ be the pyramid with $[c(x_2)]$ as maximum element and containing all pieces in $H_2$ dominated by $[c(x_2)]$.
    By construction, this pyramid does not contain $x_1$.
    Going on over the nodes in the ordering, we build the desired ordered sequence of pyramids.
\end{proof}
There is a connection between the cycles popped by \cyclepopping{} and heaps of cycles à la Viennot, which was presumably already summarized in an oral presentation of \cite{Marchal_heaps} in the case of Wilson's algorithm, i.e., when all the cycles are popped with probability $1$. 
To our knowledge, a similar connection between pyramids and loops was pointed out by \citet{GISCARD2021112305} and \citet[Appendix A]{helmuth2016loop}.
An illustration is given in Figure~\ref{fig:loops_cycles_pyramid}.
\begin{definition}[chronological labelling]\label{def:chrono_labelling}
    Consider a loop based at $x$. 
    We define the corresponding \emph{labelled heap of cycles with chronological labelling} as follows.
    The oriented loop is visited by starting at $x$ along the oriented edges.
    The first encountered (based) cycle $c$ defines element $([c],1)$, i.e., the first element of the heap. 
    Now, let $([\tilde{c}], \ell)$ be the last added element to the growing heap. 
    Let $c^\prime$ be the next cycle in the visit of the loop. 
    If $c^\prime$ has no common node with any cycle of the heap, then we add the element $([c^\prime], 1)$ to the heap. 
    Otherwise, let $([c_0], \ell_0)$ be the maximum element with a common node with $c^\prime$. Define the index of $c^\prime$ to be $\ell^\prime = \ell_0 + 1$. 
    Put $([c^\prime], \ell^\prime)$ into the heap.
    The construction terminates when the loop is entirely visited. 
\end{definition}
\begin{lemma}[Pyramid and based loops]\label{lem:pyramid_equiv_loop}
    Let $\gamma$ be a nontrivial loop based at $x$. 
    \begin{itemize}
        \item[(i)] The cycles determined by the loop erasure defined in Notation~\ref{not:pieces_loop} form a labelled pyramid where the maximal element contains $x$ and for which the labelling is given by the chronological order of popping in Definition~\ref{def:chrono_labelling}.
        \item[(ii)] A pyramid together with a node $x$ in its maximal element determines a loop based at $x$.
    \end{itemize}
\end{lemma}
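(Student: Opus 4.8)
The plan is to prove the two directions of the bijection separately, both by careful bookkeeping of the loop-erasure procedure from \cref{not:pieces_loop} against the chronological labelling of \cref{def:chrono_labelling}. The conceptual content is small; the work is in checking that the definitions line up, so I would keep the exposition tight.

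For part $(i)$, I would start from a nontrivial based loop $\gamma = (x_0=x, x_1, \dots, x_k = x)$ and run the chronological loop-erasure: walk along $\gamma$, maintaining a stack of currently-open vertices, and whenever the walker revisits an open vertex $y$, pop the cycle $c$ formed since the last visit to $y$, record it, and continue. Each popped cycle receives a label as in \cref{def:chrono_labelling}. I first observe that this is exactly the procedure defining $\cycles(\gamma)$, so the multiset of cycles is the heap's multiset of pieces. Next I must check two things: (a) the partial order $\preceq$ of \cref{def:isom_heap}/the labelled-heap definition is respected, i.e.\ if $[c]\mathcal{R}[c']$ and $c$ was popped before $c'$, then $c$'s label is $\le c'$'s label — this is immediate from the labelling rule, since a concurrent later cycle gets label (current max over concurrent cycles) $+1$; and (b) the resulting heap is a \emph{pyramid} whose unique maximal piece contains $x=x_0$. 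For (b): the very last cycle popped must contain $x_0$, because after the walker returns to $x_0$ for the final time, the segment since the previous visit to $x_0$ is erased as a cycle through $x_0$ — and since $\gamma$ is nontrivial, at least one cycle is popped, so there is a last one; moreover every cycle popped during the run is, by the stack discipline, nested inside (hence $\preceq$, after transitive closure through concurrency) the cycle containing $x_0$ that swallows it, so the $x_0$-cycle dominates everything, giving a unique maximal element. I would spell out the nesting argument: cycles are popped in a last-in-first-out manner relative to the open-vertex stack, and the stack always contains $x_0$ at the bottom, so the final $x_0$-cycle contains (in the $\preceq$ sense) every previously popped cycle.

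For part $(ii)$, given a labelled pyramid $H$ with distinguished node $x$ in its unique maximal piece $[c_{\max}]$, I would reconstruct a based loop at $x$ by a ``linear extension readout'': choose a linear extension of the partial order compatible with the labels (any compatible total order of the pieces puts $c_{\max}$ last), then concatenate the based representatives of the cycles in that order, each based at an appropriate shared vertex, threading them so that the walk starts and ends at $x$ — concretely, place $c_{\max}$ based at $x$ last, and recursively splice each lower piece into the walk at the first moment the current partial walk visits one of its vertices. I must then verify that the chronological loop-erasure of the loop so produced returns $H$ (up to isomorphism of labelled heaps), and that the result is independent of the chosen linear extension — both follow because reordering two $\preceq$-incomparable (hence node-disjoint) consecutive cycles in the concatenation does not change which vertex-revisits occur, and does not change the recorded multiset or the induced order among concurrent cycles. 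I would also note that $(i)$ and $(ii)$ are mutually inverse on equivalence classes, so the correspondence is a bijection between nontrivial $x$-based loops and pyramids-with-a-marked-node-in-the-maximal-piece; this is the statement that makes $\mathcal{P}(\overline{\calA},x)$ the right index set.

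The main obstacle I anticipate is purely expository: making the nesting/stack argument in part $(i)$ rigorous without drowning in notation, and in part $(ii)$ pinning down the splicing construction precisely enough that the ``well-defined, independent of linear extension'' claim is clearly true rather than merely plausible. A clean way to sidestep some pain is to phrase both directions via the partially commutative monoid of \citet{cartier1969problemes} already invoked earlier: a based loop maps to a \emph{word} in cycles (its chronological popping sequence), the heap is the image of that word in the trace monoid, and the pyramid condition is exactly the statement that the word has a unique ``last'' letter up to commutation whose cycle meets $x$ — reducing $(ii)$ to the standard fact that every trace lifts to a word and that commuting adjacent letters does not change the trace. I would present the stack argument for intuition but lean on the monoid formalism (with a reference to \citep{viennot2006heaps,krattenthaler2006theory}) to keep the proof short.
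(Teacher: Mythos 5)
Your proposal is correct in substance but takes a genuinely different route for part $(ii)$, while for part $(i)$ you are more explicit than the paper, which simply declares it "trivial in the light of Definition~\ref{def:chrono_labelling}." Your stack-discipline argument for $(i)$ — that the last cycle popped must pass through $x$, and every earlier-popped cycle is tied to a later one through its (still open) base node, giving a $\preceq$-chain up to the final $x$-cycle — is a welcome elaboration of what the paper leaves implicit, and it is the right idea; just beware that "nested" is not quite the correct word, since incomparable sub-cycles need not be geometrically nested, only connected to the top by transitive chains of concurrency. For $(ii)$ the paper does something simpler and more deterministic: it grows the loop edge-by-edge from $x$, at each vertex $x_i$ following the outgoing edge of the \emph{minimal} not-yet-fully-traversed cycle of the pyramid containing $x_i$, and notes that this minimal cycle is well-defined because all cycles through a given vertex are pairwise concurrent and hence totally ordered by $\preceq$. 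This avoids any choice and therefore any well-definedness issue. Your linear-extension-and-splice construction is also correct and reconnects nicely to the Cartier--Foata/trace-monoid formalism the paper cites but does not lean on, but it carries the burden of proving independence of the chosen linear extension — a burden you acknowledge and propose to discharge via the monoid. If you want to stay closer to the paper's spirit of elementary, self-contained arguments, the greedy minimal-cycle readout is the cleaner choice; if you want to emphasize the Viennot/Cartier--Foata structure for later reuse, your version buys that at the price of the extra verification.
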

\begin{proof}
    Part $(i)$ of the lemma is trivial in the light of Definition~\ref{def:chrono_labelling}. 
    For $(ii)$, we construct the based loop by adding edges iteratively to a growing path.
    We start this path from $x$ and follow the oriented edge emanating from $x$ in a \emph{minimal} cycle containing $x$ (corresponding to the first edge visited by the loop). 
    This cycle is unique up to multiplicity since any two concurrent cycles are ordered.
    The orientation of this cycle determines the edge $xx_1$ which is added to the path; see Figure~\ref{fig:pyramid_with_growing_loop} for an illustration.
    We go on with $x_1$, and we consider the minimal cycle in the pyramid which has not yet been completely visited. The orientation of this cycle determines the edge $x_1x_2$.
     This procedure will eventually visit all the cycles since all concurrent cycles are ordered. The maximal cycle started at $x$ will be visited last.
\end{proof}
\begin{figure}[t]
    \begin{subfigure}[b]{0.8\linewidth}
        \centering
    \begin{tikzpicture}[scale =1.5]
        \def\x{0.7};
        \def\y{0.7};
        \def\s{0.03};

        \node (P11) at (0,0) {\textbullet};
        \node  at (-0.3,0) {$x$};
        \node (P12) at (1,0) {\textbullet};
        \node (P13) at (2,0) {\textbullet};
        \node (P21) at (0+\x,0+\y) {\textbullet};
        \node (P22) at (1+\x,0+\y) {\textbullet};
        \node (P23) at (2+\x,0+\y) {\textbullet};
        \node (P32) at (1+2*\x,0+2*\y) {\textbullet};

        \node[orange]  at (1.5,-0.3) {$c_1$};
        \node[teal]  at (1+\x+0.1,\y +0.4) {$c_2$};
        \node[magenta]  at (0.8,0.4) {$c_3$};

        \node  at ($(P13) + (0.3,0)$) {$x_2$};
        \node  at ($(P12) + (0,-0.3)$) {$x_1$};
        \node  at ($(P22) + (0.3,0)$) {$x_4$};
        \node  at ($(P32) + (0.3,0)$) {$x_5$};
        \node  at ($(P21) + (-0.3,0)$) {$x_7$};

        \node (P31) at (0+2*\x,0+2*\y) {\textbullet};
        \node (P32) at (1+2*\x,0+2*\y) {\textbullet};
        \node (P33) at (2+2*\x,0+2*\y) {\textbullet};
        \draw [very thick, draw=gray, opacity=0.2] (P11.center)--(P12.center);
        \draw [very thick, draw=gray, opacity=0.2] (P12.center)--(P13.center);
        \draw [very thick, draw=gray, opacity=0.2] (P11.center)--(P21.center);
        \draw [very thick, draw=gray, opacity=0.2] (P12.center)--(P22.center);
        \draw [very thick, draw=gray, opacity=0.2] (P13.center)--(P23.center);
        \draw [very thick, draw=gray, opacity=0.2] (P21.center)--(P22.center);
        \draw [very thick, draw=gray, opacity=0.2] (P22.center)--(P23.center);
        \draw [very thick, draw=gray, opacity=0.2] (P21.center)--(P31.center);
        \draw [very thick, draw=gray, opacity=0.2] (P22.center)--(P32.center);
        \draw [very thick, draw=gray, opacity=0.2] (P23.center)--(P33.center);
        \draw [very thick, draw=gray, opacity=0.2] (P31.center)--(P32.center);
        \draw [very thick, draw=gray, opacity=0.2] (P32.center)--(P33.center);
        \draw [magenta,->,very thick] (P11)--(P12);
        \draw [magenta,->,very thick] (P12)-- (P22);
        \draw [magenta,->,very thick] (P22)--(P21);
        \draw [magenta,->,very thick] (P21)--(P11);
        \draw [orange,<->,very thick] (P12)--(P13);
        \draw [teal,<->,very thick] (P22)--(P32);
    \end{tikzpicture}
    \caption{Loop $\gamma = (x,x_1,x_2,x_1,x_4,x_5,x_4,x_7,x)$. }
    \end{subfigure}
    \hfill
    \begin{subfigure}[b]{0.8\linewidth}
        \centering
        \begin{tikzpicture}[scale = 1.5]
            \def\x{0.7};
            \def\y{0.7};
            \def\s{1.};
            \node (P11) at (0,0) {};
            \node  at ($(-0.3,0)+(0,0+\s)$) {$x$};
            \node (P12) at (1,0) {\textbullet};
            \node (P13) at (2,0) {\textbullet};
            \node (P21) at (0+\x,0+\y) {};
            \node (P22) at (1+\x,0+\y) {\textbullet};
            \node (P32) at (1+2*\x,0+2*\y) {\textbullet};
            \node (P21) at (0+\x,0+\y) {};
            \node (P22) at (1+\x,0+\y) {\textbullet};
            \node (P23) at (2+\x,0+\y) {};
            \node (P31) at (0+2*\x,0+2*\y) {};
            \node (P33) at (2+2*\x,0+2*\y) {};
            \node (P11_shift2) at ($(P11)+(0,0+\s)$) {\textbullet};
            \node  at ($(P12)+(0,0+\s) + (0.3,0)$) {$x_1$};
            \draw [dotted, draw=blue, opacity=0.5] (P11.center)--(P11_shift2.center);
            \node (P12_shift2) at ($(P12)+(0,0+\s)$) {\textbullet};
            \draw [dotted, draw=blue, opacity=0.5] (P12.center)--(P12_shift2.center);
            \node (P22_shift2) at ($(P22)+(0,0+\s)$) {\textbullet};
            \draw [dotted, draw=blue, opacity=0.5] (P22.center)--(P22_shift2.center);
            \node (P21_shift2) at ($(P21)+(0,0+\s)$) {\textbullet};
            \draw [dotted, draw=blue, opacity=0.5] (P21.center)--(P21_shift2.center);
            \node  at ($(P13) + (0.3,0)$) {$x_2$};
            \node  at ($(P12) + (0,-0.3)$) {$x_3 = x_1$};
            \node  at ($(P22) + (0,\s) + (0.3,0)$) {$x_4$};
            \node  at ($(P32) + (0.3,0)$) {$x_5$};
            \node  at ($(P22) + (0.6,0)$) {$x_6 = x_4$};
            \node  at ($(P21)+ (0,\s) + (-0.3,0)$) {$x_7$};
            \node[magenta]  at ($(P22)+ (0,\s) + (-0.6,0.3)$) {\small Maximal element};
            \node[magenta]  at ($(0.8,0.35) + (0,\s)$) {$(c_3,2)$};
            \node[orange]  at (1.7,0.2) {$(c_1,1)$};
            \node[teal]  at (1+\x+0.8,\y +0.3) {$(c_2,1)$};
            \draw [magenta,->,very thick] (P11_shift2)--(P12_shift2);
            \draw [magenta,->,very thick] (P12_shift2)-- (P22_shift2);
            \draw [magenta,->,very thick] (P22_shift2)--(P21_shift2);
            \draw [magenta,->,very thick] (P21_shift2)--(P11_shift2);
            \draw [orange,<->,very thick] (P12)--(P13);
            \draw [teal,<->,very thick] (P22)--(P32);

            \draw [very thick, draw=gray, opacity=0.2] (P11.center)--(P12.center);
            \draw [very thick, draw=gray, opacity=0.2] (P12.center)--(P13.center);
            \draw [very thick, draw=gray, opacity=0.2] (P11.center)--(P21.center);
            \draw [very thick, draw=gray, opacity=0.2] (P12.center)--(P22.center);
            \draw [very thick, draw=gray, opacity=0.2] (P13.center)--(P23.center);
            \draw [very thick, draw=gray, opacity=0.2] (P21.center)--(P22.center);
            \draw [very thick, draw=gray, opacity=0.2] (P22.center)--(P23.center);
            \draw [very thick, draw=gray, opacity=0.2] (P21.center)--(P31.center);
            \draw [very thick, draw=gray, opacity=0.2] (P22.center)--(P32.center);
            \draw [very thick, draw=gray, opacity=0.2] (P23.center)--(P33.center);
            \draw [very thick, draw=gray, opacity=0.2] (P31.center)--(P32.center);
            \draw [very thick, draw=gray, opacity=0.2] (P32.center)--(P33.center);
            \
        \end{tikzpicture}
        \caption{Pyramid associated with $\gamma$ based at $x$.\label{fig:pyramid_with_growing_loop}}
    \end{subfigure}
    \caption{Loop  $\gamma$ based at $x$ (Left-hand side) and the corresponding pyramid (Right-hand side) with $x$ contained in the maximal element. Following the oriented edges of the loops, the orange backtrack $c_1$ is popped first and the green backtrack $c_2$ is popped second, whereas the magenta cycle $c_3$ is popped third.
    The labelled heap given by the chronological order is $\{(c_3,2), (c_2, 1), (c_1,1)\}$.    
    In this heap, the two consecutive backtracks are not concurrent and, 
    thus, $(c_2, 1)$ cannot be compared to $(c_1,1)$.
    Nonetheless, the maximum element $(c_3,2)$ is such that $(c_3,2) \succ (c_1,1)$ and $(c_3,2) \succ (c_2,1)$.
    An isomorphic labelled heap is $\{(c_3,4), (c_2, 1), (c_1,1)\}$; see Definition~\ref{def:isom_heap}.
\label{fig:loops_cycles_pyramid}}
\end{figure}
Let $\calA$ be a proper subset of the nodes and let $x\in \overline{\calA}$. 
We define the weight of an unbased cycle as
$
    w_{t,\alpha}([c]) = t^{|c|}\mu_\alpha(c),
$
for $t\in (0,1]$ and with $c$ any representative of $[c]$.
Similarly the weight of a heap reads $w_{t,\alpha}(H) = \prod_{[c]\in H}w_{t,\alpha}([c])$.
Thus, by Lemma~\ref{lem:pyramid_equiv_loop}, the Green generating function of Lemma~\ref{lem:Green_function_loops} can be rewritten as a sum over pyramids
\[
    G_{\alpha}(t,x, x;\calA) = \sum_{P \in \mathcal{P}_\emptyset(\overline{\calA},x)
    }  w_{t,\alpha}(P),
\]
where the sum is over the empty heap (corresponding to the trivial loop at $x$ with $\mu_\alpha$-measure $1$) and any pyramid with pieces in the complement of $\calA$ and $x$ in its maximal piece; see \eqref{eq:set_pyramids_cup_empty_heap}.
Note that this series converges as a consequence of Proposition~\ref{prop:conv_sum_over_heaps}.
Now, let $x_0, \dots, x_{n-1}$ be an ordering of the nodes and define $\mathcal{B}_0 = \emptyset$, $\mathcal{B}_1 = \{x_0\}$, \dots, $\mathcal{B}_{n-1}= \{x_0,x_1, \dots, x_{n-2}\}$.
We calculate the product of Green generating functions as in \eqref{eq:chain_rule}
 \begin{equation*}
    \prod_{\ell=0}^{n-1}G_\alpha(t,x_{\ell},x_{\ell};\mathcal{B}_{\ell}) = \sum_{
        P_0 \in \mathcal{P}_\emptyset(\overline{\mathcal{B}_{0}},x_0)} \dots \sum_{
            P_{n-1} \in \mathcal{P}_\emptyset(\overline{\mathcal{B}_{n-1}},x_{n-1})} w_{t,\alpha}(P_0\circ \dots \circ P_{n-1}).\label{eq:Green_pyramid}
 \end{equation*}
By Proposition~\ref{prop:pyramids_from_heap}, for a fixed node ordering, any heap can be written as a unique superposition of pyramids and empty heaps $P_0\circ \dots \circ P_{n-1}$.
Hence, the sum at the right-hand side of the sum above has exactly one term for each possible heap.
Therefore,
\begin{equation}
    G_\alpha(t,x_{0},x_{0};\mathcal{B}_{0})\dots G_\alpha(t,x_{n-1},x_{n-1};\mathcal{B}_{n-1}) = \textstyle\sum\nolimits_{\text{heap }H}  w_{t,\alpha}(H),\label{eq:Green_heap}
\end{equation}
which is independent of the chosen ordering. 
By taking $t=1$ in \eqref{eq:Green_heap}, we recover the expression of the normalization of the measure \eqref{eq:proba_CRSF_non_det}, which was written ``$\sum_{\calC} \mathbb{P}\left(\mathrm{ pop }\ \calC\right)$'' in \citep[p938]{KK2017}.
Now we use the same strategy as for the derivation of \eqref{eq:ratio_det_Poisson}. 
The moment generating function (MGF) of the sampling time $T$ is then given by
\begin{equation}
    \mathbb{E}[t^T] = t^n \frac{\sum_{\text{heap }H}  w_{t,\alpha}(H)}{\sum_{\text{heap }H}  w_{1,\alpha}(H)}.\label{eq:ratio_det_heap}
\end{equation}
\subsubsection{MGF of $T$ as a ratio of determinants}
At this point, we want to show that expression \eqref{eq:ratio_det_heap} can be written as a ratio that generalizes the ratio of determinants appearing in \eqref{e:MGF_Pi} in the determinantal case.
To do so, we first need to define trivial heaps.
\begin{definition}[Trivial heap of cycles]
    A trivial heap of cycles is a heap whose pieces are all unbased oriented cycles which are not concurrent.
\end{definition}
In other words, a trivial heap of cycles $\calC$ is a set of non-concurrent cycles, and the cardinality of this set is necessarily finite.
Note that the empty heap is a trivial heap.
Now, we leverage a key result relating the generating function of heaps of pieces to trivial heaps, namely Corollary 4.5 in \citep{krattenthaler2006theory}, Lemma 5 in \citep{jerrum2021fundamentals} or Remark 9 in \citep{fredes2023combinatorial},
\begin{equation}
    \sum_{\text{heap }H}  w_{t,\alpha}(H) = \left(\sum_{\substack{\text{trivial heap} \\ \calC}} (-1)^{|\calC|}  w_{t,\alpha}(\calC)\right)^{-1},\label{eq:heap_generating_fct}
\end{equation}
where $|H|$ denotes the total number of pieces of $H$.
The reader can find an example of derivation of the identity \eqref{eq:heap_generating_fct} in the case of heaps with a set of pieces of cardinality $4$ at page 6 of \citep{jerrum2021fundamentals}.
By a direct substitution of \eqref{eq:heap_generating_fct} into \eqref{eq:ratio_det_heap}, we have
\begin{equation}
    \mathbb{E}[t^T] = t^n \frac{\sum_{\calC} (-1)^{|\calC|}  \prod_{[c]\in \calC}  \mu_\alpha(c)}{\sum_{\calC} (-1)^{|\calC|} \prod_{[c]\in \calC} t^{|c|}\mu_\alpha(c)},\label{eq:ratio_det_heap_}
\end{equation}
where $|c|$ denotes here the number of edges in the cycle $c$. 
Noticeably, when $\mu_\alpha$ corresponds to the determinantal case, the expression \eqref{eq:ratio_det_heap_} reduces to the ratio of determinants \eqref{eq:ratio_det}, as a consequence of the expansion of the determinant over permutations and the factorization of permutations over cycles.
\subsubsection{MGF of $T$ as a Poisson process of pyramids}
In order to draw a parallel with the Poisson point process of loops of Section~\ref{sec:Poisson_over_loops}, we rephrase \eqref{eq:ratio_det_heap_} thanks to a combinatorial result of \citet[Proposition 5.10]{viennot2006heaps}, namely 
\[
    \log\sum_{\text{heap }H}  w(H) = \sum_{\substack{
        P\text{ pyramid }}} \frac{1}{|P|} w(P).
\]
for any weight $w(H) = \prod_{c \in H} w(c)$ such that $w(c)\geq 0$, and where $|P|$ is the number of elements of the pyramid $P$.
Consequently, \eqref{eq:ratio_det_heap} reads
\begin{equation}
    \mathbb{E}[t^T] = t^n \exp\left(\sum_{\substack{
        P\text{ pyramid }}}(t^{n_e(P)}-1)\frac{w_{1,\alpha}(P)}{|P|}\right),\label{eq:pyramid_Poisson}
\end{equation}
with $n_e(P)= \sum_{c\in P} |c|$ being the number of edges in the pyramid of cycles.
The same reasoning as in Section~\ref{sec:Poisson_over_loops} yields
\begin{equation*}
    T \stackrel{\text{(law)}}{=} n + \sum_{\substack{P\in \mathcal{P}}}n_e(P)
\end{equation*}
where $\mathcal{P}$ is a Poisson process over pyramids of cycles with intensity $P\mapsto w_{1,\alpha}(P)/|P|$.
\section{\cyclepopping{} as Partial Rejection Sampling\label{sec:prs}}
%
We present a slight variant of \cyclepopping{}, formulated as an instance of the Partial Rejection Sampling (PRS) framework in \citep*[Algorithm 1]{guo2019uniform} and \citep*{jerrum2021fundamentals}.
This approach extends the original stacks-of-cards approach of \citep{Wilson96} for sampling uniform spanning trees. 
It only differs from \cyclepopping{} in the sense that the order in which cycles are popped is here deterministic.
Advantages of this new formulation of Wilson's algorithm as an instance of PRS are that 1) we obtain a compact expression for the law of the number of cycles popped during the algorithm, using heaps of cycles, and 2) it sheds light on the similarity of cycle-rooted spanning forests to other graph-based structures such as sink-free edge orientations \citep{Cohn2002}, which can all be sampled using (extremal) PRS.

Consider a vector $\bv = (v_1, \dots, v_n)$ associated with the \emph{vertices} of $G$ ordered arbitrarily from $1$ to $n$, and where the entry $v_i$ is valued in the finite set $\calN_i$ of neighbours of node $i$, for all $1\leq i \leq n$.
Note that because we only have one outgoing edge from each node, $\bv$ necessarily determines an \emph{oriented} CRSF, which we denote by $\calF(\bv)$.
We now add variables associated to the cycles in the graph, remembering that we include backtracks as licit oriented cycles.
Consider the \emph{oriented cycles} $c_1, \dots, c_p$ of the graph $G$, arbitrarily ordered from $1$ to $p$, and denote by $\bb = (b_1, \dots, b_p)$ a vector with entries in $\{0,1\}$.
We anticipate that $b_\ell = 1$ will be interpreted as ``$c_\ell$ is accepted'' as in Section~\ref{s:formal_def_cycle_popping}.
Now, define $\calD$ as the set of all possible vectors of the type $\bu = (\bv,\bb)$, namely,discrete set
\begin{equation}
        \calD = \calN_1 \times \dots \times \calN_n \times \{0,1\}^p. 
        \label{eq:product_space}
\end{equation}

In the \emph{stacks-of-cards} narrative, we imagine a vector $\bu$ in $\calD$ to be a set of visible face-up cards, one card on top of each node and cycle.
We shall stack several such sets of cards on top of each other, and describe a sampling algorithm as iteratively discarding constraint-violating subsets of visible cards, revealing the cards immediately below.
Formally, we add $p$ constraints on the tuples in $\calD$, one for each oriented cycle.
For $1\leq \ell\leq p$, the $\ell$-th constraint is
\begin{equation}
    \varphi_\ell(\bu) = (c_\ell  \notin \calF(\bv))\text{ or }( c_\ell  \in \calF(\bv) \text{ and } b_\ell = 1 ) \in \{\textsc{true}, \textsc{false}\}. \label{eq:constraint_ell}
\end{equation}
In words, $\varphi_\ell(\bv,\bb) = \textsc{true}$ if, and only if, either the oriented cycle $c_\ell$ is absent or $c_\ell$ is present with $b_\ell = 1$.

Omit temporarily the subscript $\ell$ for simplicity. By examining each constraint $\varphi$ in \eqref{eq:constraint_ell}, we observe that this constraint \emph{effectively} depends on a subset of the 
$$n_{\rm{var}} \triangleq n + p$$
variables, denoted by $\{u_{i_1}, \dots, u_{i_a}\}$ for some integer $1\leq a \leq n_{\rm{var}}$.
Define the scope of $\varphi$ by the tuple of distinct indices $$\scp(\varphi) = (i_1, \dots, i_a),$$ where $i_j\in \{1,\dots, n_{\rm{var}}\}$ for all $1\leq j \leq a$.
Let $\calS = (i_1, \dots, i_a)$ be an arbitrary tuple of distinct indices where each of them is valued in $\{1, \dots, n_{\rm{var}}\}$.
For compactness, we use the following notation
\begin{equation}
    \bu_\calS = (u_{i_1}, \dots, u_{i_a}), \label{eq:subset_tuple}
\end{equation}
to denote the tuple obtained from $\bu$ by selecting the entries belonging to $\calS$.
Thus, by examining \eqref{eq:constraint_ell}, we see that there exists a $\widetilde{\varphi}$ such that 
$$
    \varphi(\bu) = \widetilde{\varphi}(\bu_{\scp(\varphi)}).
$$
Also, in this particular case of the constraint over cycles \eqref{eq:constraint_ell}, for a given constraint $\widetilde{\varphi_\ell}$, there is thus a unique assignement of variables which violates $\widetilde{\varphi_\ell}$.
At this point, we are ready to state the conjunction of constraints
\begin{equation}
    \Phi(\bu) = \widetilde{\varphi_{1}}(\bu_{\scp(\varphi_1)}) \text{ and } \dots \text{ and } \widetilde{\varphi_{p}}(\bu_{\scp(\varphi_p)}), \label{eq:conjunction}
\end{equation}
that the oriented CRSF associated with $\bu$ has to satisfy.
Importantly, as a consequence of the definition of the $\varphi_\ell$'s in \eqref{eq:constraint_ell}, this conjunction of constraints has an important property called \emph{extremality}; see  \citep{jerrum2021fundamentals} and references therein.
\begin{definition}[Extremal formula]
    A formula of the form \eqref{eq:conjunction} is called \emph{extremal} if: for all pairs of constraints $1\leq \ell < k \leq p$, if $\scp(\varphi_\ell)$ has a non-empty intersection with $\scp(\varphi_k)$, then $\varphi_\ell$ and $\varphi_k$ cannot be both false.
\end{definition}
In other words, in the extremal case, when two constraints are both violated, then their scopes are disjoint.
This is clearly the case in \eqref{eq:conjunction} when $\varphi_\ell$ is given in \eqref{eq:constraint_ell} since all cycles in a oriented CRSF are node-disjoint.
We are now ready to describe how the sampling algorithm works.

Define $n_{\rm{var}}$ mutually \emph{independent} random variables -- associated with nodes and oriented cycles -- 
$$
    \mathbf{U} \triangleq (V_1,\dots,V_n,C_1,\dots,C_p) \sim \mathbb{P}_{U_1\times \dots \times U_{n_{\rm{var}}}},
$$
valued in the Cartesian product $\calD \triangleq \calN_1 \times \dots \times \calN_n \times \{0,1\}^p$, as follows.
Let
$$
\mathbb{P}_{V_i}(v) = p_{i v} \text{ for all } v\in \calN_i \text{ (categorical)}\text{ and } \mathbb{P}_{C_\ell} (1) = \alpha(c_\ell) \text{ (Bernoulli)},
$$ where $p_{\cdot}$ and $\alpha(\cdot)$ are the probabilities appearing in the measure $\mu_{\mathrm{oCRSF},\alpha}$ given in \eqref{eq:proba_CRSF_non_det_oriented}.
The law of $\mathbf{U}$ is the \emph{product distribution}
$$
\mathbb{P}_{U_1 \times \dots \times U_{n_{\rm{var}}}}(\mathbf{u}) \triangleq \mathbb{P}_{U_1}(u_1) \dots \mathbb{P}_{U_{n_{\rm{var}} }}(u_{n_{\rm{var}} }),
$$
where the product index in $\mathbb{P}_{U_1 \times \dots \times U_{n_{\rm{var}}}}$ is a reminder of the product nature of the law.
Restating our goal with the new notation, we aim to sample from the probability measure \eqref{eq:proba_CRSF_non_det_oriented}, which reads
\begin{equation}
    \mu(\bu) = Z^{-1}  \mathbb{P}_{U_1 \times \dots \times U_{n_{\rm{var}}}}(\mathbf{u})\cdot 1(\Phi(\bu)= \textsc{true}). \label{eq:target_distribution}
\end{equation}
Note that the normalization constant reads 
$$
    Z = \mathbb{P}_{U_1 \times \dots \times U_{n_{\rm{var}}}} (\Phi(\mathbf{U})=\textsc{true}).
$$
The PRS algorithm, stated as Algorithm~\ref{a:ordered_prs}, samples from \eqref{eq:target_distribution} under a condition on the constraints known as \emph{extremality}.
\begin{algorithm}[b!]
    \begin{itemize}
            \item  Sample $\mathbf{u}$ from $\mathbb{P}_{U_1 \times \dots \times U_{n_{\rm{var}}}}$.
            \item While $\Phi(\mathbf{u})= \textsc{false}$,
            \begin{itemize}
            \item For all $1\leq \ell \leq p$ such that  $\varphi_{\ell}(\bu)= \textsc{false}$, choose $\ell^\star_\sigma$ such that $\sigma(\ell^\star_\sigma)$ is minimal. 
            Resample all variable in $\scp(\varphi_{\ell^\star_\sigma})$ w.r.t.\ the product measure $\prod_{i \in \scp(\varphi_{\ell^\star_\sigma})} \mathbb{P}_{U_i}$.
            \end{itemize}
            \item Output $\mathbf{u}$.
    \end{itemize}
    \caption{Extremal PRS$(\Phi,U_1, \dots, U_{n_{\rm{var}}}; \sigma)$}
    \label{a:ordered_prs}
\end{algorithm}
\subsection{The PRS algorithm in the extremal case}\label{sec:PRS_algo_extremal}
In short, given an ordering $\sigma$ of the $p$ constraints  in the conjunction \eqref{eq:conjunction}, the algorithm resamples -- at each step -- only the variables in the scope of the first violated constraint w.r.t.\ $\sigma$, until all the constraints are satisfied.
%
To formalize this, we define a \emph{random} function which samples \emph{independently} the variables in the scope of a violated constraint.
This sampling function takes as input a tuple $\bu$ and returns a random tuple valued in the cartesian product $\calD$ (see \eqref{eq:product_space}). 
Now, it is convenient to introduce a notation for the indices of the violated constraints. 
To do so, we define 
\begin{equation}
    \bad(\bu) = \{\ell \in \{1, \dots, p\} \text{ such that } \varphi_\ell(\bu)= \textsc{false}\}.
\end{equation}
A direct consequence of the extremality of $\Phi$ is that, the scopes $\scp(\varphi_{\ell})$ for all $\ell\in \bad(\bu)$ are pairwise disjoint.
Now, for all $j\in \{1,\dots ,{n_{\rm{var}}} \}$, we define the $j$-th entry of $\res(\bu,\sigma)$ as follows
\begin{equation}
    \res(\bu, \sigma)_j = \begin{cases}
        U_j \sim \mathbb{P}_{U_j} &\text{if } j \in \scp(\varphi_{\ell^\star_\sigma}) \text{ with } \ell^\star_\sigma = \mathrm{argmin}\{\sigma(\ell):  \ell \in  \bad(\bu)\},\\
        u_j &\text{otherwise},
    \end{cases}\label{eq:resample}
\end{equation}
and so that the $U_j$'s are sampled \emph{independently}.

At this point, it is convenient to define an \emph{auxiliary graph}.
\begin{definition}\label{def:extremal}
    We call the \emph{auxiliary graph} associated with $\res(\cdot,\sigma)$ the following directed graph.
    The nodes of the graph are the points in the discrete set $\mathcal{D}$.
    For any pair of (not necessarily distinct) nodes $\bu,\bu'$, there is a \emph{directed edge} $\bu \to_\sigma \bu^\prime$ if $\bu^\prime$ is in the range of $\res(\bu,\sigma)$.
\end{definition} 
Note that this auxiliary graph can have self-loops and that its nodes are card configurations or equivalently oriented CRSFs of the form $\calF(\bu)$.

Thus, for $\bu^\prime$ such that $\bu \to_\sigma \bu^\prime$, we have
\begin{equation}
    \mathbb{P}(\bu^\prime = \res(\bu,\sigma)) =  \prod_{i\in \scp(\varphi_{\ell^\star_\sigma(\bu)})} \mathbb{P}_{U_i}(u^\prime_i).
\end{equation}
To describe the run of Algorithm~\ref{a:ordered_prs}, we define a Markov chain $(\bE^k_\sigma)_{k\geq 0}$ with state space $\calD$.
Let the initialization be
\begin{equation}
    \bE^0_\sigma \sim \mathbb{P}_{U_1 \times \dots \times U_{n_{\rm{var}}}}. \label{eq:init_MC}
\end{equation}
Then, for all $k\geq 1$, we define 
\begin{equation}
    \bE^k_\sigma = \res(\bE^{k-1}_\sigma, \sigma). \label{eq:step_k_MC}
\end{equation}
The chain $(\bE^k_\sigma)_{k\geq 0}$ stops at $N = \min\{k\geq 0 \text{ such that } \bad(\bE^k_\sigma) = \emptyset\}$, that is, upon reaching a point in the set $\{\bu^\star \text{ such that } \Phi(\bu^\star)=\textsc{true}\}$.
Remark that this Markov chain depends on $\sigma$ -- as indicated by the subscript $_\sigma$ -- due to \eqref{eq:step_k_MC}.
\begin{lemma}
    The random variable $N$ is finite almost surely.
\end{lemma}
\begin{proof}
    The number of resamplings in the classical rejection sampling is finite with probability one. Therefore, since the latter dominates $N$, the random variable $N$ is finite almost surely. 
\end{proof}

\begin{figure}[h]
    \begin{subfigure}[b]{0.45\linewidth}
        \centering
    \begin{tikzpicture}[>={Classical TikZ Rightarrow[]}]

        \node (V1) at (0,-1) {\textbullet};
        \node (V2) at (1*0.5,-1) {\textbullet};
        \node (V3) at (2*0.5,-1) {\textbullet};
        \node (V4) at (3*0.5,-1) {\textbullet};
        \node (V5) at (4*0.5,-1) {\textbullet};
        \node (V6) at (5*0.5,-1) {\textbullet};
        \node (V7) at (6*0.5,-1) {\textbullet};
        \node (V8) at (7*0.5,-1) {\textbullet};

        \node (w1) at (0,-1+0.2) {$v_1$};
        \node (w2) at (1*0.5,-1+0.2) {$v_2$};
        \node (w3) at (2*0.5,-1+0.2) {$v_3$};
        \node (w4) at (3*0.5,-1+0.2) {$v_4$};
        \node (w5) at (4*0.5,-1+0.2) {$v_5$};
        \node (w6) at (5*0.5,-1+0.2) {$v_6$};
        \node (w7) at (6*0.5,-1+0.2) {$v_7$};
        \node (w8) at (7*0.5,-1+0.2) {$v_8$};

        \node[right] at (8*0.5,-1) {\text{graph}};

        \draw [very thick, draw=gray, opacity=0.2] (V1.center)--(V2.center);
        \draw [very thick, draw=gray, opacity=0.2] (V2.center)--(V3.center);        
        \draw [very thick, draw=gray, opacity=0.2] (V3.center)--(V4.center);
        \draw [very thick, draw=gray, opacity=0.2] (V4.center)--(V5.center);
        \draw [very thick, draw=gray, opacity=0.2] (V5.center)--(V6.center);
        \draw [very thick, draw=gray, opacity=0.2] (V6.center)--(V7.center);
        \draw [very thick, draw=gray, opacity=0.2] (V7.center)--(V8.center);

        \node[gray] (P10) at (0,0) {};
        \node[gray] (P20) at (1*0.5,0) {};
        \node[gray] (P30) at (2*0.5,0) {};
        \node[gray] (P40) at (3*0.5,0) {};
        \node[gray] (P50) at (4*0.5,0){} ;
        \node[gray] (P60) at (5*0.5,0) {};
        \node[gray] (P70) at (6*0.5,0) {};
        \node[gray] (P80) at (7*0.5,0) {};

        \node[gray] at (1.5*0.5,0.3) {$0$};
        \node[gray] at (4.5*0.5,0.3) {$0$};
        \node[gray] at (6.5*0.5,0.3) {$1$};

        \node[right] (A0) at (8*0.5,0) {$\mathbf{e}^0$};

        \draw [-latex, thick,draw=gray] (P10.center) -- (P20.center);
        \draw [latex-latex, thick,draw=gray,style=double,blue] (P20.center) -- (P30.center);
        \draw [-latex, thick,draw=gray] (P40.center) -- (P30.center);
        \draw [latex-latex, thick,draw=gray,style=double] (P50.center) -- (P60.center);
        \draw [latex-latex, thick,draw=gray,style=double] (P70.center) -- (P80.center);

    \node (P11) at (0,1) {};
    \node (P21) at (1*0.5,1) {};
    \node (P31) at (2*0.5,1) {};
    \node (P41) at (3*0.5,1) {};
    \node (P51) at (4*0.5,1){} ;
    \node (P61) at (5*0.5,1) {};
    \node (P71) at (6*0.5,1) {};
    \node (P81) at (7*0.5,1) {};
    \node[right]  (A1) at (8*0.5,1) {$\mathbf{e}^1$};

    \draw [latex-latex, thick,draw=gray,style=double,magenta] (P11.center) -- (P21.center);
    \draw [latex-latex, thick,draw=gray,style=double] (P31.center) -- (P41.center);
    \draw [latex-latex, thick,draw=gray,style=double] (P51.center) -- (P61.center);
    \draw [latex-latex, thick,draw=gray,style=double] (P71.center) -- (P81.center);

    \node[gray] at (0.5*0.5,1+0.3) {$0$};
    \node[gray] at (2.5*0.5,1+0.3) {$0$};
    \node[gray] at (4.5*0.5,1+0.3) {$0$};
    \node[gray] at (6.5*0.5,1+0.3) {$1$};

    \node (P12) at (0,2) {};
    \node (P22) at (1*0.5,2) {};
    \node (P32) at (2*0.5,2) {};
    \node (P42) at (3*0.5,2) {};
    \node (P52) at (4*0.5,2){} ;
    \node (P62) at (5*0.5,2) {};
    \node (P72) at (6*0.5,2) {};
    \node (P82) at (7*0.5,2) {};
    \node[right] (A2) at (8*0.5,2) {$\mathbf{e}^2$};

    \draw [-latex, thick,draw=gray] (P12.center) -- (P22.center);
    \draw [-latex, thick,draw=gray] (P22.center) -- (P32.center);
    \draw [latex-latex, thick,draw=gray,style=double,teal] (P32.center) -- (P42.center);
    \draw [latex-latex, thick,draw=gray,style=double] (P52.center) -- (P62.center);
    \draw [latex-latex, thick,draw=gray,style=double] (P72.center) -- (P82.center);

    \node[gray] at (2.5*0.5,1+1+0.3) {$0$};
    \node[gray] at (4.5*0.5,1+1+0.3) {$0$};
    \node[gray] at (6.5*0.5,1+1+0.3) {$1$};

        \node (P13) at (0,3) {};
        \node (P23) at (1*0.5,3) {};
        \node (P33) at (2*0.5,3) {};
        \node (P43) at (3*0.5,3) {};
        \node (P53) at (4*0.5,3){} ;
        \node (P63) at (5*0.5,3) {};
        \node (P73) at (6*0.5,3) {};
        \node (P83) at (7*0.5,3) {};
        \node[right] (A3) at (8*0.5,3) {$\mathbf{e}^3$};

        \draw [-latex, thick,draw=gray] (P13.center) -- (P23.center);
        \draw [-latex, thick,draw=gray] (P23.center) -- (P33.center);
        \draw [-latex, thick,draw=gray] (P33.center) -- (P43.center);
        \draw [-latex, thick,draw=gray] (P43.center) -- (P53.center);
        \draw [latex-latex, thick,draw=gray,style=double,red] (P53.center) -- (P63.center);
        \draw [latex-latex, thick,draw=gray,style=double] (P73.center) -- (P83.center);

        \node[gray] at (4.5*0.5,3+0.3) {$0$};
        \node[gray] at (6.5*0.5,3+0.3) {$1$};

        \node (P14) at (0,4) {};
        \node (P24) at (1*0.5,4) {};
        \node (P34) at (2*0.5,4) {};
        \node (P44) at (3*0.5,4) {};
        \node (P54) at (4*0.5,4){} ;
        \node (P64) at (5*0.5,4) {};
        \node (P74) at (6*0.5,4) {};
        \node (P84) at (7*0.5,4) {};
        \node[right] (A4) at (8*0.5,4) {$\mathbf{e}^4$};

        \draw [-latex, thick,draw=gray] (P14.center) -- (P24.center);
        \draw [-latex, thick,draw=gray] (P24.center) -- (P34.center);
        \draw [-latex, thick,draw=gray] (P34.center) -- (P44.center);
        \draw [-latex, thick,draw=gray] (P44.center) -- (P54.center);
        \draw [-latex, thick,draw=gray] (P54.center) -- (P64.center);
        \draw [-latex, thick,draw=gray] (P64.center) -- (P74.center);

        \draw [latex-latex, thick,draw=gray,style=double] (P74.center) -- (P84.center);

        \draw [->, thick] (A0) -- (A1);
        \draw [->, thick] (A1) -- (A2);
        \draw [->, thick] (A2) -- (A3);
        \draw [->, thick] (A3) -- (A4);

        \node[gray] at (6.5*0.5,4+0.3) {$1$};

    \end{tikzpicture}
    \caption{Path in auxilliary graph. }
    \end{subfigure}
    \hfill
    \begin{subfigure}[b]{0.45\linewidth}
        \centering
        \begin{tikzpicture}[>={Classical TikZ Rightarrow[]}]

            \node (C1) at (0,-1.) {};
            \node (C2) at (1*0.5,-1.) {};
            \node (C3) at (2*0.5,-1.) {};
            \node (C4) at (3*0.5,-1.) {};
            \node (C5) at (4*0.5,-1.){} ;
            \node (C6) at (5*0.5,-1.) {};
            \node (C7) at (6*0.5,-1.) {};
            \node (C8) at (7*0.5,-1.) {};

            \node[gray]  at (0+0.25,-1.+0.2) {$\varphi_1$};
            \node[gray]  at (1*0.5+0.25,-1.+0.2) {$\varphi_2$};
            \node[gray]  at (2*0.5+0.25,-1.+0.2) {$\varphi_3$};
            \node[gray]  at (3*0.5+0.25,-1.+0.2) {$\varphi_4$};
            \node[gray]  at (4*0.5+0.25,-1.+0.2){$\varphi_5$} ;
            \node[gray]  at (5*0.5+0.25,-1.+0.2) {$\varphi_6$};
            \node[gray]  at (6*0.5+0.25,-1.+0.2) {$\varphi_7$};

            \node[gray]  at (0+0.25,-1.-0.2) {$1$};
            \node[gray]  at (1*0.5+0.25,-1.-0.2) {$2$};
            \node[gray]  at (2*0.5+0.25,-1.-0.2) {$3$};
            \node[gray]  at (3*0.5+0.25,-1.-0.2) {$4$};
            \node[gray]  at (4*0.5+0.25,-1.-0.2){$5$} ;
            \node[gray]  at (5*0.5+0.25,-1.-0.2) {$6$};
            \node[gray]  at (6*0.5+0.25,-1.-0.2) {$7$};
            
            \node[right,gray] at (7*0.5,0.15) {\text{heap $\mathcal{H}$}};

            \draw [latex-latex, thick,draw=gray,style=double] (C1.center) -- (C2.center);
            \draw [latex-latex, thick,draw=gray,style=double] (C2.center) -- (C3.center);
            \draw [latex-latex, thick,draw=gray,style=double] (C3.center) -- (C4.center);
            \draw [latex-latex, thick,draw=gray,style=double] (C4.center) -- (C5.center);
            \draw [latex-latex, thick,draw=gray,style=double] (C5.center) -- (C6.center);
            \draw [latex-latex, thick,draw=gray,style=double] (C6.center) -- (C7.center);
            \draw [latex-latex, thick,draw=gray,style=double] (C7.center) -- (C8.center);
            \node[right,gray] at (7*0.5-0.1,-1.) {\text{ $2$-cycles}};
            \node[right,gray] at (7*0.5-0.1,-1.2) {\text{ $\sigma$}};

            \node[gray] (A2) at (1*0.5,0) {\textbullet};
            \node[gray] (A3) at (2*0.5,0) {\textbullet};
            \draw [latex-latex, thick,draw=gray,style=double,blue] (A2.center) -- (A3.center);

            \node[gray] (B1) at (0*0.5,0.2) {\textbullet};
            \node[gray] (B2) at (1*0.5,0.2) {\textbullet};
            \draw [latex-latex, thick,draw=gray,style=double,magenta] (B1.center) -- (B2.center);

            \node[gray] (C3) at (2*0.5,0.2*1) {\textbullet};
            \node[gray] (C4) at (3*0.5,0.2*1) {\textbullet};
            \draw [latex-latex, thick,draw=gray,style=double,teal] (C3.center) -- (C4.center);

            \node[gray] (D5) at (4*0.5,0) {\textbullet};
            \node[gray] (D6) at (5*0.5,0) {\textbullet};
            \draw [latex-latex, thick,draw=gray,style=double,red] (D5.center) -- (D6.center);
    
            \draw[gray] (-0.05,-0.2) -- (3.5,-0.2) -- (3.5,0.5) -- (-0.05,0.5) -- (-0.05,-0.2);
        \end{tikzpicture}
        \caption{Corresponding heap.\label{fig:pyramid_with_growing_loop}}
    \end{subfigure}
    \caption{Example of a path in the auxilliary graph $\be^0\to_{\sigma} \be^1$, \dots, $\be^{3} \to _\sigma \be^{4}$ in the case of the line graph with $8$ nodes.
    By inspection, $\Phi(\be^{4})$ is true.
    Here, there are $7$ oriented cycles which are ordered by $\sigma$ from left to right (increasing order).
    On the left-hand side, each cycle is depicted with the associated value $b\in \{0,1\}$; see \eqref{eq:constraint_ell}, namely when a $2$-cycle comes with a value $1$, it satisfies the constraint. 
    On the right-hand side, we depict the heap of resampled scopes. Colors help visualizing the construction of the heap during the resampling process. 
\label{fig:split}}
\end{figure}
\subsection{Defining a heap by splitting a path in an auxiliary graph}
Let $\be^0\to_{\sigma} \be^1$, \dots, $\be^{k^\star - 1} \to _\sigma \be^{k^\star}$ be a deterministic path in the auxiliary graph with the endpoint being a valid assignement, i.e., $\be^{k^\star} = \bu^\star$ with $\bad(\bu^\star) = \emptyset$.
Thus, by definition, $\Phi(\bu^\star)$ is true.
It is now convenient to reorganise this path by highlighting the sequence of resampled scopes.
Hence, we introduce a splitting function which splits a deterministic path with a valid assignement as endpoint into a heap of violated scopes \citep{viennot2006heaps,krattenthaler2006theory} and a valid assignment, namely,
\begin{equation}
    \spt(\be^0, \dots, \be^{k^\star} = \bu^\star) = (\calH, \bu^\star).\label{eq:split}
\end{equation}
Here, the heap $\calH$ is a combinatorial object which collects the resampled scopes in the run of Algorithm~\ref{a:ordered_prs}; we refer the reader back to Section~\ref{sec:heaps} for more details about heaps. 
The basic idea is that $\calH$ is a heap whose pieces are of the form
\begin{equation}
    \bm{\pi}=\be_{\scp(\varphi_{\ell})}, \label{eq:piece_of_heap}
\end{equation}
for some $1\leq \ell\leq p$ and for some $ \be \in \calD$  such that  $\widetilde{\varphi_{\ell}}(\be_{\scp(\varphi_{\ell})})=\textsc{false}$.
A piece associated with the variables $\be_{\scp(\varphi)}$ (violating $\widetilde{\varphi}$)  can be put on top of a piece associated with $\be_{\scp(\varphi^\prime)}$ (violating $\widetilde{\varphi}^\prime$) in a heap only if $\scp(\varphi) \cap \scp(\varphi^\prime) \neq \emptyset$.
The main difference with Definition~\ref{def:concurrent} is that our definition is here a bit more general since it can be used for the case of any extremal formula.

The splitting function \eqref{eq:split} is simply constructed as follows. 
Let us start by an empty heap and construct the heap step by step.
Consider the $j$-th step of the stopped path for $0 \leq j \leq k^\star -1$. 
By examining the entries of $\be^{j}$ which differ from $\be^{j+1}$, we can determine the piece $\bm{\pi}^j$ to be added to the heap $\calH$. 
The last configuration of the path $\be^{k^\star}$ is equal to the valid configuration $\bu^\star$. 
We refer the reader to Figure~\ref{fig:split} for an illustration.

Before going further, it is natural to associate a weight to a heap $\calH$ given by the product of the probabilities of each of its pieces (resampled scopes) as follows:
\begin{equation}
    w(\calH) = \prod_{\substack{\bm{\pi}\in \calH }} \prod_{i}\mathbb{P}_{U_{s_i}}(\pi_i).    \label{eq:heap_weight}
\end{equation}
The following proposition is a straightforward consequence of the definition of a heap and of extremality as given in Definition~\ref{def:extremal}.
\begin{proposition}\label{prop:bijective}    
    Consider the extremal formula \eqref{eq:conjunction}.
    Let $\calH$ be a heap of violated constraints and denote by $k^\star$ the number of pieces in $\calH$. Let $\bu^\star\in\calD$ satisfying $\Phi(\bu^\star) = \textsc{true}$.
    Let $\sigma$ be an ordering of the constraints and let $(\bE^{k}_\sigma)_{k\geq 0}$ be a Markov chain initialized by \eqref{eq:init_MC} and with the transition rule \eqref{eq:step_k_MC} using $\res(\cdot,\sigma)$.
    Denote by $\mathbb{P}$ the law of this Markov chain. 
    There exists a unique path $\be^0\to_{\sigma} \be^1$, \dots, $\be^{k^\star - 1} \to _\sigma \bu^\star$ in the auxiliary graph induced by $\sigma$ such that  
    \begin{itemize}
        \item[(i)]$\spt(\be^0, \dots, \bu^\star) = (\calH, \bu^\star)$ and,
        \item[(ii)]  $\mathbb{P}(\bE^0_\sigma = \be^0 \text{ and }  \dots  \text{ and }\bE^{k^\star}_\sigma = \bu^\star)  = w(\calH) \prod_{i=1}^{n_{\rm var}} \mathbb{P}_{U_i}(u^\star_{i})\neq 0.$
    \end{itemize}
\end{proposition}
\begin{proof} 
    To begin, we prove (i) by reconstructing the path in the auxiliary graph defined in  Section~\ref{sec:PRS_algo_extremal}.

    First, we recover the order in which violated scopes are resampled by doing a \emph{forward pass} on $\calH$.
    Consider the minimal pieces of $\calH^0 \triangleq \calH$, i.e., the pieces which have no other piece below them.
    By extremality (Definition~\ref{def:extremal}), each piece is uniquely associated to the scope of a violated constraint. 
    Thus, all minimal pieces correspond to violated constraints which have disjoint scopes.
    Among the minimal pieces of $\calH^0$, we take the piece corresponding with the violated constraint having the smallest $\sigma$-value. 
    This determines the first scope resampled which is of the form \eqref{eq:piece_of_heap}.
    For convenience, we name this unique piece 
    $$
        \bm{\pi}^1 = \bff^{0}_{\scp(\varphi_{q^1})} \text{ for some index } 1 \leq q^1 \leq p \text{ and some tuple } \bff^{0}\in \calD.
    $$
    Now, remove this piece from $\calH^0$, namely, define $\calH^1$ so that $\calH^0 = \bm{\pi}^1 \circ \calH^1$.  
    Let us pauze a moment for a remark.
    By extremality, removing $\bm{\pi}^1$ does not affect the presence of the other minimal pieces of $\calH^0$ which therefore  are also minimal in $\calH^1$.
    However, $\calH^1$ can contain other minimal pieces which were not minimal in $\calH^0$.
    This closes our remark.
    We then proceed similarly with $\calH^1$.
    Thus, we select the piece of $\calH^1$ with minimal $\sigma$-value among its minimum pieces.
    This determines a unique piece that we name 
    $$
    \bm{\pi}^2 = \bff^{1}_{\scp(\varphi_{q^2})} \text{ for some index } 1 \leq q^2 \leq p \text{ and some tuple } \bff^{1}\in \calD.
    $$
    We next repeat this operation until the remaining heap is empty.    
    Note that the number of pieces in $\calH$ is necessarily equal to $k^\star$.
    At this point, we have retrieved the order in which the scopes were resampled, namely $\varphi_{q^1}$, \dots, $\varphi_{q^{k^\star}}$ and the values taken by the variables in the scope of the violated constraints. Note that the indices $q^1, \dots , q^{k^\star}$ may not all be distinct.

    Second, to determine the sequence $\be^0, \be^1, \dots, \be^{k^\star} = \bu^\star$, we go over the pieces in the reverse order, i.e., from the top to the bottom of the heap: $\bm{\pi}^{k^\star}, \dots, \bm{\pi}^1$.
    Naturally, we initialize $\be^{k^\star} = \bu^\star$.
    Consider $\bm{\pi}^{t}$ for some $1\leq t \leq k^\star$.
    Define $\be^{t}$ such that for all $1\leq i \leq n_{\rm var}$
    \begin{equation}
        e^{t}_i =
        \begin{cases}
            p^t_i & \text{ if } i \in \scp(\varphi_{q^{t}}),\\
            e^{t+1}_{i} & \text{ if } i \in \{1,\dots,n_{\rm var}\}\setminus \scp(\varphi_{q_{t}}).
        \end{cases}   
        \label{eq:backward_substitution}
    \end{equation}
    Importantly, the \emph{backward substitution} \eqref{eq:backward_substitution} guarantees that $\be^{t}$ violates $\varphi_{q^{t}}$ and the constraints which were also violated by $\be^{t+1}$, but that no other extra constraint is violated, as a consequence of extremality\footnote{Here is a key difference with what might happen in a non-extremal formula such as in the case of sampling independent sets in graphs in its first formulation of \citep[Section 4.1]{jerrum2021fundamentals} as conjunction of constraints on the edges.
    In that case, the \emph{backward substitution} \eqref{eq:backward_substitution} could make other constraints false}.
    This makes sure that the probability that $\res(\be^{t}, \sigma) = \be^{t+1}$ is not zero.
    This defines $\be^0, \be^1, \dots, \be^{k^\star}$ unambiguously, proving $(i)$.

    We now prove (ii).  Let the path $\be^0\to_{\sigma} \be^1$, \dots, $\be^{k^\star - 1} \to _\sigma \bu^\star$ be as determined in (i).
    By the Markov property, we have
    \begin{align}
         &\mathbb{P}\left(\bE^0_\sigma = \be^0 ,  \dots , \bE^{k^\star}_\sigma = \bu^\star \right) \nonumber\\
        &  = \mathbb{P}(\bE^0_\sigma = \be^0) \mathbb{P}(\bE^1_\sigma = \be^1|\bE^0_\sigma = \be^0)  \dots \mathbb{P}(\bE^{k^\star}_\sigma = \bu^\star|\bE^{k^\star - 1}_\sigma = \be^{k^\star - 1}). \label{eq:Markov_on_run_prs} 
    \end{align}
    At this point, we recall two facts. 
    First, the initialization \eqref{eq:init_MC} gives
    $
        \mathbb{P}(\bE^0_\sigma = \be^0) = \prod_{i=1}^{n_{\rm var}} \mathbb{P}_{U_i}(e^0_{i}).
    $
    Second, the transition of the Markov chain \eqref{eq:step_k_MC} yields
    \begin{align*}
        \mathbb{P}(\bE^l_\sigma = \be^l|\bE^{l-1}_\sigma = \be^{l-1})& =     \mathbb{P}(\be^l = \res(\be^{l-1},\sigma))\\
        & = \prod_{i\in \scp(\varphi_{\ell^\star_\sigma(\be^{l-1})})} \mathbb{P}_{U_i}(e^{l}_i),
    \end{align*}
    for $l\geq 1$.
    By grouping the factors on the right-hand side of \eqref{eq:Markov_on_run_prs}  belonging to the valid assignment $\bu_{\star}$ and the factors corresponding to resampled scopes in the weight of the heap, we find 
    \begin{align}
        \mathbb{P}\left(\bE^0_\sigma = \be^0 ,  \dots , \bE^{k^\star}_\sigma = \bu^\star \right)= w(\calH)  \prod_{i=1}^{n_{\rm var}} \mathbb{P}_{U_i}(u^\star_{i}),\label{eq:coupling_heap_valid}
    \end{align}
    where $w(\calH)$ is defined in \eqref{eq:heap_weight}.
\end{proof}

\subsection{Putting it all together}
Let $\calH$ be a heap of resampled scopes containing $k^\star$ pieces and let $\bu^\star$ be such that $\Phi(\bu^\star)= \textsc{true}$.
Now, we calculate
\begin{align}
   \mathbb{P}\left(\spt(\bE^0_\sigma, \dots, \bE^{N}_\sigma) = (\calH, \bu^\star)\right).\label{eq:proba_split_preimage}
\end{align}
Since $\spt$ simply adds up resampled scopes and since one scope is resampled at the time, it is clear that \eqref{eq:proba_split_preimage} vanishes on the complement of the event $\{N = k^\star\}$.
Thus, recalling that the Markov chain realizes paths in the auxiliary graph associated with $\sigma$, we have
\begin{align*}
    &\mathbb{P}\left(\spt(\bE^0_\sigma, \dots, \bE^{k^\star}_\sigma) = (\calH, \bu^\star)\right) \\
    &=  \smashoperator{\sum_{\substack{\bff^0, \dots, \bff^{k^\star-1}\text{ s.t. }\\ \bff^0\to_{\sigma} \bff^1, \dots, \bff^{k^\star - 1} \to _\sigma \bu^\star}}}\mathbb{P}\left(\bE^0_\sigma = \bff^0 , \dots, \bE^{k^\star}_\sigma = \bu^\star \right)\cdot 1\left(\spt(\bff^0, \dots,\bu^\star) = (\calH, \bu^\star)\right)\\
    &= \mathbb{P}\left(\bE^0_\sigma = \be^0 , \dots,\bE^{k^\star - 1}_\sigma = \be^{k^\star - 1},  \bE^{k^\star}_\sigma = \bu^\star \right)= w(\calH)  \prod_{i=1}^{n_{\rm var}} \mathbb{P}_{U_i}(u^\star_{i}),
\end{align*}
where we used Proposition~\ref{prop:bijective} at the next-to-last step, i.e., there exists a unique path $\be^0\to_{\sigma} \be^1$, \dots, $\be^{k^\star - 1} \to _\sigma \bu^\star$ such that  $\spt(\be^0, \dots, \bu^\star) = (\calH, \bu^\star)$.
Now, by marginalizing over the heaps in \eqref{eq:coupling_heap_valid}, 
we find 
\begin{align}
    \sum_{\calH} \mathbb{P}\left(\spt(\bE^0_\sigma, \bE^1_\sigma, \dots, \bE^{N}_\sigma) = (\calH, \bu^\star)\right) = \frac{1}{Z}  \prod_{i=1}^{n_{\rm var}} \mathbb{P}_{U_i}(u^\star_{i}) \triangleq \mu(\bu^\star),\label{eq:marginalization_H}
\end{align}
where
\begin{equation}
    Z^{-1} = \sum\nolimits_{\calH} w(\calH)\label{eq:normalization}
\end{equation}
is the normalization.

Note that the sum over all heap weights, namely the series $\sum_{\calH} w(\calH)$, converges.
This is a consequence of the fact that $\calD$ is a finite set and that that $w(\calH)\in (0,1)$ for all $\calH$; this is proved in the same way as Proposition~\ref{prop:conv_sum_over_heaps}.
The identity \eqref{eq:marginalization_H} shows that: for any ordering of the constraints $\sigma$, running the Markov chain \eqref{eq:step_k_MC} until all the constraints are satisfied indeed samples correctly from the product measure conditioned to $\Phi$ being true.
This shows the correctness of Algorithm~\ref{a:ordered_prs}.

\subsection{Number of resampled bad assignements}
We complete our discussion of Algorithm~\ref{a:ordered_prs} by restating a result of \citep{jerrum2021fundamentals} about the law of the number of resampling of a scope (namely, here a cycle).
A celebrate formula of \citet{viennot2006heaps} allows for rewriting the series \eqref{eq:normalization} as a sum over trivial heaps; see also \citep[Lemma 5]{jerrum2021fundamentals}.
A trivial heap is a heap of pieces for which all elements are maximal, i.e., in our context, it is a set of violated constraints with disjoint scopes.
We have
\begin{equation}
    Z = \sum_{\calT \text{ trivial }} (-1)^{|\calT|} w(\calT),\label{eq:alternating_formula}
\end{equation}
where $|\calT|$ is the number of elements of the (trivial) heap.
As pointed out in \citep[Eq. (4)]{jerrum2021fundamentals}, $Z$ can be interpreted as the probability that $\bU\sim \mathbb{P}_{U_1} \times \dots \times \mathbb{P}_{U_{n_{\rm{var}}}}$ satisfies the constraint $\Phi$, i.e.,
\[
    Z = \mathbb{P}_{U_1 \times \dots \times U_{n_{\rm{var}}}}(\Phi(\bU)= \textsc{true}),
\]
as announced in \eqref{eq:target_distribution}.
The alternating sum \eqref{eq:alternating_formula} is simply proved by applying the inclusion-exclusion principle on the event 
$$
    \{\varphi_{1}(\bU_{\scp(\varphi_1)}) = \textsc{true} \text{ and } \dots \text{ and } \varphi_{p}(\bU_{\scp(\varphi_p)}) = \textsc{true}\}.
$$
%
The moment generating function of the number of resampled scopes can be obtained as a simple consequence of \eqref{eq:coupling_heap_valid}.
\begin{theorem}\label{thm:moment_gen_fct}
    Consider the set of pieces  of the form \eqref{eq:piece_of_heap}.
    Let $\mathbf{t} = (t_1, \dots, t_p)$ be a tuple parameters in $(0,1)^{p}$.
    Denote by $\sharp_j(\calH)$ the number of times the scope of the $j$-th violated constraint appears in the heap $\calH$ for all $j=1,\dots, p$.
    Consider $(\calH,U^\star)$ to be a random pair with law \eqref{eq:coupling_heap_valid}.
    We have
    \begin{equation}
                \mathbb{E}_{(\calH,U^\star)}[t_1^{\sharp_1(\calH)} \dots t_k^{\sharp_k(\calH)}] = \frac{\sum_{\calT \text{ trivial }} (-1)^{|\calT|} w(\calT)}{\sum_{\calT \text{ trivial }} (-1)^{|\calT|}w(\calT)\prod_{i: c_i\in \calT} t_i } .\label{eq:MGF_nbs}
    \end{equation}
\end{theorem}
\begin{proof}
    By using \eqref{eq:coupling_heap_valid}, we have
    \begin{align*}
        \mathbb{E}_{(\calH,U^\star)}[t_1^{\sharp_1(\calH)} \dots t_k^{\sharp_k(\calH)}] &= \sum_{\calH} t_1^{\sharp_1(\calH)} \dots t_k^{\sharp_k(\calH)} w(\calH)\sum_{\bu^\star: \Phi(\bu^\star)}\prod_{i=1}^{n_{\rm{var}}} \mathbb{P}_{U_i}(u^\star_{i})\\ 
        &= \frac{\sum_{\calH} t_1^{\sharp_1(\calH)} \dots t_k^{\sharp_k(\calH)}w(\calH)}{\sum_{\calH}w(\calH)}.        
    \end{align*}
    The result follows by using the formula for the normalization \eqref{eq:normalization}.
\end{proof}
Formally, the MGF of the number of times a cycle is resampled in \eqref{eq:MGF_nbs} resembles the formula \eqref{eq:ratio_det_heap_} for the MGF of the number of \emph{jumps} of the random walker in Algorithm~\ref{a:cyclepopping}. 
We emphasize that \eqref{eq:MGF_nbs} -- which concerns \emph{cycles} -- yields a variant of \eqref{eq:ratio_det_heap_} if $t_i$ is replaced by $t^{|c_i|}$ where $|c_i|$ is the number of edges in the cycle $c_i$.

We conclude with the following result, taken from \citep{jerrum2021fundamentals}, which gives an elegant formula for the expected number of scopes which are resampled in the run of Algorithm~\ref{a:ordered_prs}.
\begin{corollary}
    \label{c:law_of_number_of_resampled_scopes}
    Consider the setting of Theorem~\ref{thm:moment_gen_fct}.
    The expected number of times the oriented cycle $c_1$ is resampled is
    \begin{align*}
            \mathbb{E}_{(\calH,\bU^\star)}[\sharp_1(\calH)] &= \frac{\sum_{\calT \text{ trivial }: c_1 \in \calT} (-1)^{|\calT|-1}w(\calT)}{\sum_{\calT \text{ trivial }} (-1)^{|\calT|} w(\calT)}\\
             &= \frac{\mathbb{P}_{U_1 \times \dots \times U_n}(\bad(\bU)= \{1\})}{\mathbb{P}_{U_1 \times \dots \times U_n}(\Phi(\bU) = \textsc{true})},
    \end{align*}
    where $\bU\sim\mathbb{P}_{ U_1 \times \dots \times U_{n_{\rm{var}}}}$ and $\{\bad(\bU)= \{1\}\} = \{\text{only }\varphi_1(\bU) = \textsc{false}\}$.
    Furthermore, the expected number of resampled scopes is 
    $$
    \mathbb{E}_{(\calH,\bU^\star)}\left[\sum_{\ell=1}^p \sharp_\ell(\calH)\right]  = \frac{\mathbb{P}_{U_1 \times \dots \times U_{n_{\rm{var}}}}(|\bad(\bU)| = 1)}{\mathbb{P}_{U_1 \times \dots \times U_{n_{\rm{var}}}}(\Phi(\bU) = \textsc{true})},
    $$
    where $\{|\bad(\bU)| = 1\} = \{\exists\text{ a unique $\ell$ s.t. $\varphi_\ell(\bU) = \textsc{false}$}\}$.
\end{corollary}
\begin{proof}
    First, we obtain the first part of the result by differentiating \eqref{eq:MGF_nbs} w.r.t. $t_1$.
    The second part of the result is obtained by using the inclusion-exclusion formula.
\end{proof}
From a computational point of view, Corollary~\ref{c:law_of_number_of_resampled_scopes} yields  another measure of the complexity of running Wilson's algorithm.
Note that while the (PRS) variant of Wilson's algorithm studied in this section requires a deterministic order $\sigma$ on cycles, the law in Corollary~\ref{c:law_of_number_of_resampled_scopes} does not depend on that order, so that the result applies to the original, random-order Wilson's algorithm described as Algorithm~\ref{a:cyclepopping}.

\section{Numerical simulations\label{sec:numerics}}

The mean and variance of the number of steps to complete \cyclepopping{} can simply be estimated in a simple $\Uone$-connection graph.
We define here the following random graph.
The Erd\H{o}s-Rényi unicycle model, denoted by  $\mathrm{ER_u}(n,p,\eta)$, is a $\Uone$-connection graph of $n$ nodes where there is an edge $e=uv$ with probability $p$, independently from other edges, and where only one noisy edge sampled uniformly comes an angle $\vartheta = \eta \pi/2$ whereas all the other edges are endowed with a vanishing angle. 
Hence, for $0 < \eta \leq 1$, all cycles containing this noisy edge will have an holonomy satisfying $\cos \theta(c) \geq 0$, whereas all cycles not containing the noisy edge have a unit holonomy.
Assumption~\ref{ass:non-trivial} and Assumption~\ref{ass:weak} are then guaranteed by construction. 
Also, any CRSF sample from \eqref{eq:proba_CRSF} in this random graph will be connected with probability one.

Thus, we sample a $\mathrm{ER_u}(n,p,\eta)$ graph with $n=100$, $p=0.8$ and for each $\eta \in \{0.5, 0.6, \dots, 1\}$. For each random graph, we sample $1000$ CRSFs thanks to \cyclepopping{}.
\footnote{
    \url{https://github.com/For-a-few-DPPs-more/MagneticLaplacianSparsifier.jl/tree/counting_steps}
}
In Figure~\ref{fig:number_of_steps_CRSF}, we observe that the empirical estimates of $\E[T]$ and $\var[T]$ given in Proposition~\ref{prop:mean_var_using_cumulants}, are good approximations.
Notice that we represent the standard deviation as an error bar in Figure~\ref{fig:number_of_steps_CRSF}.
The mean and variance of $T$ naturally decrease as the noise parameter increases since a large $\eta$ intuitively promotes a large value of the least eigenvalue of $\mathsf{\Delta}$.
\begin{figure}
    \centering
    \begin{subfigure}[b]{0.3\linewidth}
        \centering
        \includegraphics[scale=0.25]{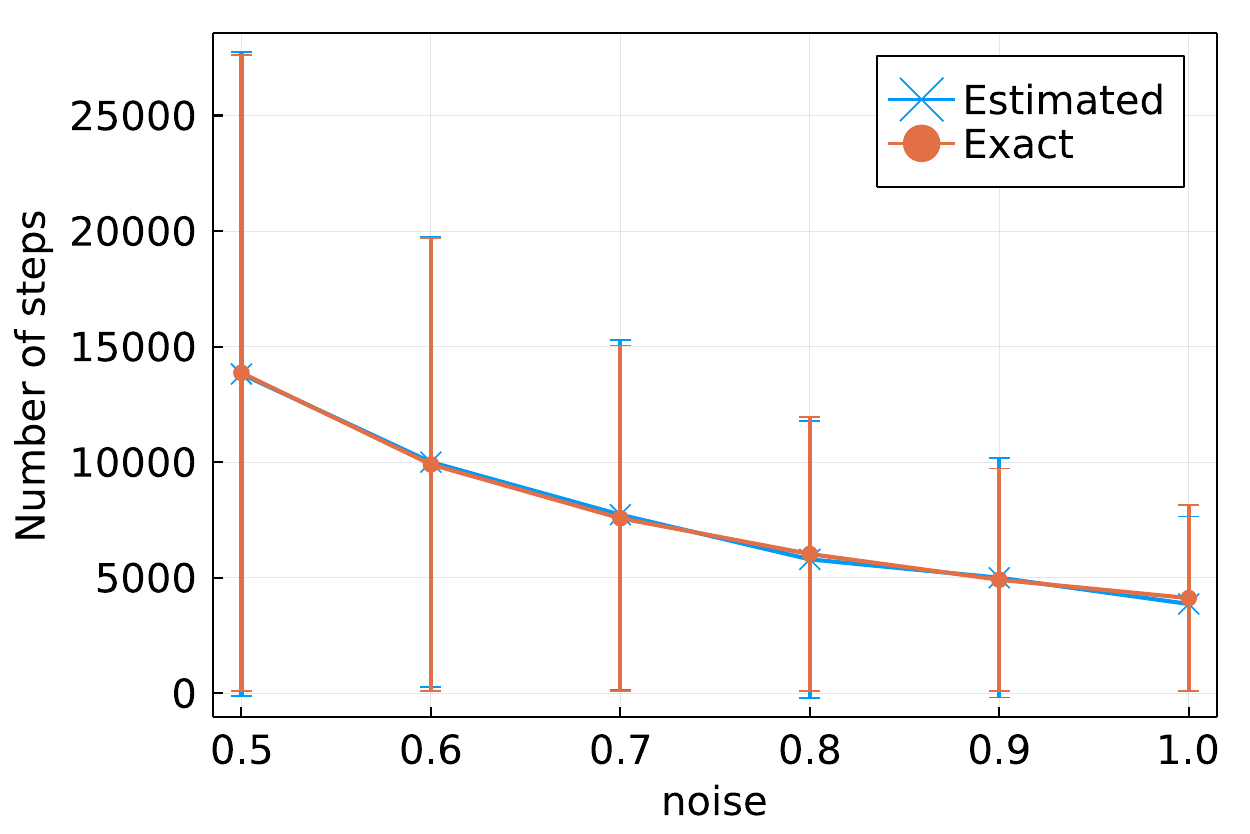}
        \caption{CRSF sampling.
        \label{fig:number_of_steps_CRSF}}
    \end{subfigure}
    \hfill
    \begin{subfigure}[b]{0.3\linewidth}
        \centering
        \includegraphics[scale=0.25] {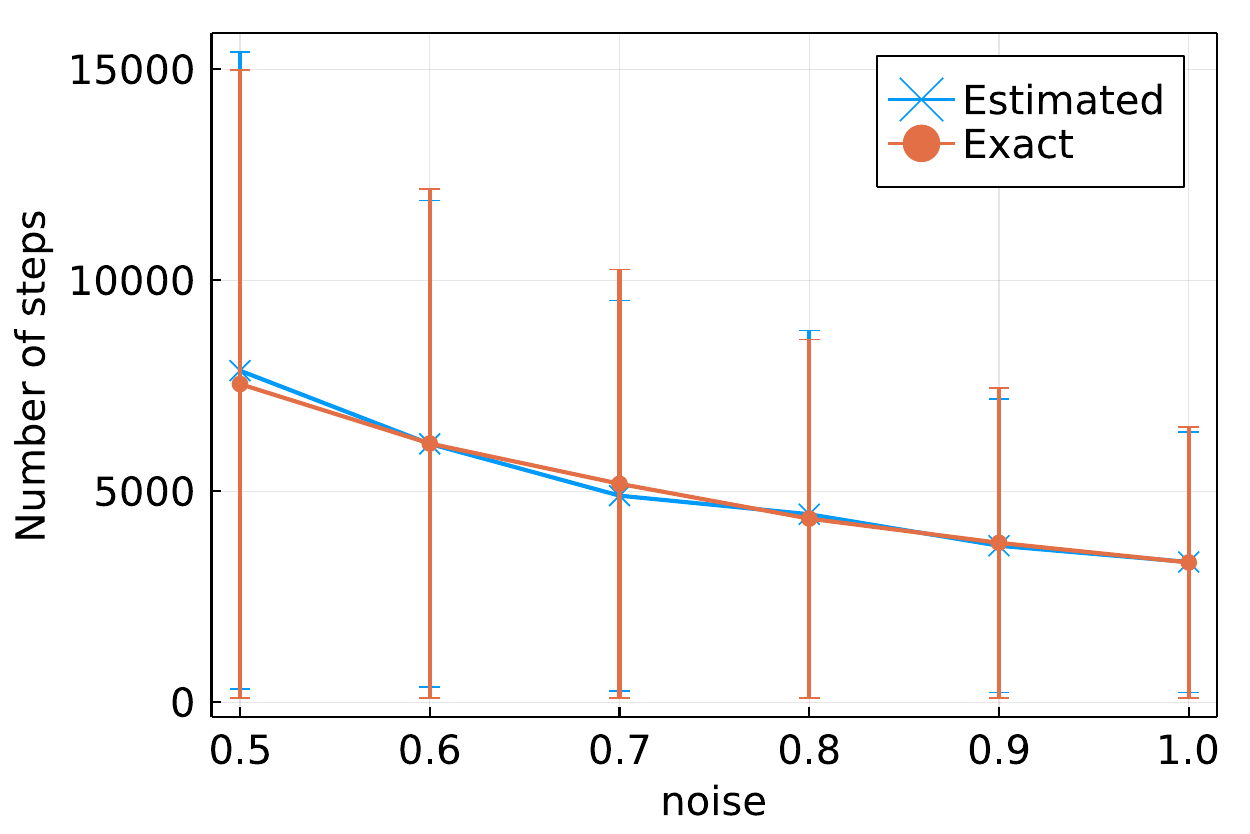}
        \caption{MTSF sampling.\label{fig:number_of_steps_MTSF}}
    \end{subfigure}
    \hspace{2cm}
    \caption{
        Number of steps ($T$) to complete \cyclepopping{}$_q$ for sampling CRSFs (left, $q =0$) and MTSFs (right, $q =5\cdot 10^{-3}$) w.r.t.\ \eqref{eq:MTSF_measure}   in an $\mathrm{ER_u}(n,p,\eta)$ random $\Uone$-connection graph with $n=100$, $p=0.8$ as a function of the noise parameter $\eta$.
        We report $\E[T]$ and $\sqrt{\var[T]}$ (error bar) in red whereas their estimators over $1000$ Monte Carlo runs are displayed in blue.
        The red and blue curves almost overlap.
    }
\end{figure}
The same simulation is repeated by sampling $1000$ MTSFs with \cyclepopping{}$_q$ for $q = 5 \cdot 10^{-3}$ and the comparison of $\E[T]$ and $\var[T]$ with their empirical estimates is given in Figure~\ref{fig:number_of_steps_MTSF}.
\section{Conclusion}
In this paper, we gave a pedestrian proof of the correctness of \cyclepopping{} for sampling measures of the form \eqref{eq:proba_CRSF_non_det}, using only elementary random walk arguments, as \cite{Marchal99} did for Wilson's original algorithm. 
From a computer scientist's point of view, on top of being easy to follow and adapt to more sophisticated variants of the algorithm, the proof yields the distribution of the running time of the algorithm. 
From a more probabilistic point of view, the construction sheds light on other point processes built while running the algorithm, and in particular an intriguing coupling already discussed by \cite{LeJan11} between cycle-rooted spanning forests and a Poisson point process of popped loops, and a related Poisson point process on Viennot pyramids.

\section*{Acknowledgments}
  We are grateful to Simon Barthelmé, Guillaume Gautier and David Dereudre for instructive discussions about Partial Rejection Sampling, and to Martin Rouault for commenting on an early version of the manuscript. 
  Special thanks are given to Adrien Kassel for his comments on a preliminary version of the paper, as well as drawing our attention to the references \citep{PiTa18,GJ2021}.

  We acknowledge support from ERC grant BLACKJACK ERC-2019-STG-851866 and ANR AI chair
  BACCARAT ANR-20-CHIA-0002. 

\appendix
  \section*{Proof of Proposition~\ref{prop_incidence_prob_cycles}  \label{a:proofs}}
  Recall that we want to prove the expression of the incidence probability of cycles in a random CRSF $\calT$ distributed according to the determinantal measure \eqref{eq:proba_CRSF}.
  We begin with the following well-known variant of the matrix tree theorem: let $\calA$ be a subset of nodes ($\calA \subset \calV$), we have
  \begin{equation}
      \det (\Delta_{\overline{\calA}}) = \sum_{\calU\in \calS(\overline{\calA})} \prod_{e\in \calU} w_e \prod_{c\in \calU} (2-2 \cos \theta(c))\label{general_matrix_tree_thm}
  \end{equation}
  where $\calS(\overline{\calA})$ is the set of forests spanning $\overline{\calA}$ whose connected components are either cycle-rooted trees in $\overline{\calA}$ or trees in $\overline{\calA}$ with exactly one root node in $\calA$.    
  This result can be proved by decomposing the left-hand side with the help of the magnetic incidence matrix (see \eqref{eq:mag_incidence}) and Cauchy-Binet identity, as in \citep{kenyon2011}.
  Now, let $\calC$ be a set of non-concurrent cycles.
  By using the expression \eqref{eq:proba_CRSF}, a simple decomposition gives
  \begin{align*}
      \mathbb{P}(\calC \subseteq \cycles(\calU)) &= \frac{1}{\det (\Delta)} \sum_{\calU : \calC \subseteq \calU}\prod_{e\in \calU} w_e \prod_{c\in \calU} (2-2 \cos \theta(c))\\
      & = \frac{\nu(\calC)}{\det (\Delta)} \sum_{{\calU^\prime} \in \calS(\overline{\nodes(\calC)})}\prod_{e\in \calU^\prime} w_e \prod_{c\in \calU^\prime} (2-2 \cos \theta(c)),
  \end{align*}
  where $ \nu(\calC) = \prod_{e\in \calC} w_e \prod_{c \in \calC}(2-2 \cos \theta(c))$.
  At this point, we leverage \eqref{general_matrix_tree_thm} to yield the following compact expression
  \begin{align*}
      \mathbb{P}(\calC \subseteq \cycles(\calF)) = \nu(\calC)\frac{\det (\Delta_{\overline{\nodes(\calC)}})}{\det (\Delta)} = \nu(\calC) \det (\Delta^{-1})_{\nodes(\calC)}.
  \end{align*}
  The last equality is obtained by using the customary formula for the inverse of a matrix with four blocks. 
  This ends the proofs.

\bibliography{References,stats}       

\begin{thebibliography}{50}
\providecommand{\natexlab}[1]{#1}
\providecommand{\url}[1]{\texttt{#1}}
\expandafter\ifx\csname urlstyle\endcsname\relax
  \providecommand{\doi}[1]{doi: #1}\else
  \providecommand{\doi}{doi: \begingroup \urlstyle{rm}\Url}\fi

\bibitem[Avena and Gaudillière(2018)]{AvGaud2018}
L.~Avena and A.~Gaudillière.
\newblock {Two Applications of Random Spanning Forests}.
\newblock \emph{Journal of Theoretical Probability}, 31:\penalty0 1975–2004,
  2018.
\newblock URL
  \url{https://link.springer.com/article/10.1007/s10959-017-0771-3#citeas}.

\bibitem[Bandeira et~al.(2013)Bandeira, Singer, and Spielman]{Bandeira}
A.~S. Bandeira, A.~Singer, and D.~A. Spielman.
\newblock {A Cheeger Inequality for the Graph Connection Laplacian}.
\newblock \emph{SIAM Journal on Matrix Analysis and Applications}, 34\penalty0
  (4):\penalty0 1611--1630, 2013.
\newblock URL \url{https://doi.org/10.1137/120875338}.

\bibitem[Berkolaiko(2013)]{Berkolaiko}
G.~Berkolaiko.
\newblock {Nodal Count of Graph Eigenfunctions via Magnetic Perturbation}.
\newblock \emph{Analysis \& PDE}, 6:\penalty0 1213–1233, 2013.
\newblock URL \url{https://msp.org/apde/2013/6-5/p08.xhtml}.

\bibitem[Cartier and Foata(1969)]{cartier1969problemes}
P.~Cartier and D.~Foata.
\newblock {Probl{\`e}mes Combinatoires de Commutation et
  R{\'e}arrangements/Commutation and Rearrangements}.
\newblock \emph{Lecture Notes in Mathematics}, 85, 1969.
\newblock URL \url{https://doi.org/10.1007/BFb0079468}.

\bibitem[Cohn et~al.(2002)Cohn, Pemantle, and Propp]{Cohn2002}
H.~Cohn, R.~Pemantle, and J.~Propp.
\newblock Generating a random sink-free orientation in quadratic time.
\newblock \emph{The Electronic Journal of Combinatorics [electronic only]},
  9\penalty0 (1):\penalty0 Research paper R10, 13 p.--Research paper R10, 13
  p., 2002.
\newblock URL \url{http://eudml.org/doc/122058}.

\bibitem[{Colin de Verdi\`ere}(1998)]{zbMATH01219775}
Y.~{Colin de Verdi\`ere}.
\newblock \emph{{Spectres de Graphes}}, volume~4.
\newblock Paris: Soci\'et\'e Math\'ematique de France, 1998.
\newblock ISBN 2-85629-068-X.
\newblock URL
  \url{http://www-fourier.univ-grenoble-alpes.fr/~ycolver/All-Articles/98a.pdf}.

\bibitem[Colin~de Verdi\`ere et~al.(2011)Colin~de Verdi\`ere, Torki-Hamza, and
  Truc]{AFST_2011_6_20_3_599_0}
Y.~Colin~de Verdi\`ere, N.~Torki-Hamza, and F.~Truc.
\newblock {Essential Self-Adjointness for Combinatorial {Schr\"odinger}
  Operators {III-} {Magnetic} Fields}.
\newblock \emph{Annales de la Facult\'e des sciences de Toulouse :
  Math\'ematiques}, Ser. 6, 20\penalty0 (3):\penalty0 599--611, 2011.
\newblock URL
  \url{https://afst.centre-mersenne.org/articles/10.5802/afst.1319/}.

\bibitem[Constantin(2023)]{Constantin23}
H.~Constantin.
\newblock \emph{{Spanning Forests and Phase Transition}}.
\newblock Phd thesis, {\'Ecole normale sup{\'e}rieure de Lyon - ENS LYON},
  2023.
\newblock URL \url{https://theses.hal.science/tel-04197144}.

\bibitem[Cori and Le~Borgne(2003)]{cori2003sand}
R.~Cori and Y.~Le~Borgne.
\newblock The sand-pile model and {T}utte polynomials.
\newblock \emph{Advances in Applied Mathematics}, 30\penalty0 (1-2):\penalty0
  44--52, 2003.
\newblock URL \url{https://doi.org/10.1016/S0196-8858(02)00524-9}.

\bibitem[Fanuel and Bardenet(2025)]{FanBar22}
M.~Fanuel and R.~Bardenet.
\newblock Sparsification of the regularized magnetic laplacian with multi-type
  spanning forests.
\newblock \emph{Applied and Computational Harmonic Analysis}, 78:\penalty0
  101766, 2025.
\newblock ISSN 1063-5203.
\newblock URL \url{https://doi.org/10.1016/j.acha.2025.101766}.

\bibitem[Fanuel et~al.(2018)Fanuel, Alaiz, Fernandez, and
  Suykens]{Fanuel2018MagneticEigenmapsVisualization}
M.~Fanuel, C.~M. Alaiz, A.~Fernandez, and J.~A.~K. Suykens.
\newblock {Magnetic Eigenmaps for the Visualization of Directed Networks}.
\newblock \emph{Appl. Comp. Harm. Anal.}, 44:\penalty0 189--199, 2018.
\newblock URL
  \url{https://www.sciencedirect.com/science/article/abs/pii/S1063520317300052}.

\bibitem[Forman(1993)]{FORMAN199335}
R.~Forman.
\newblock {Determinants of Laplacians on Graphs}.
\newblock \emph{Topology}, 32\penalty0 (1):\penalty0 35--46, 1993.
\newblock URL \url{https://doi.org/10.1016/0040-9383(93)90035-T}.

\bibitem[Fredes and Marckert(2023)]{fredes2023combinatorial}
L.~Fredes and J.-F. Marckert.
\newblock {A Combinatorial Proof of Aldous--Broder Theorem for General Markov
  Chains}.
\newblock \emph{Random Structures \& Algorithms}, 62\penalty0 (2):\penalty0
  430--449, 2023.
\newblock URL \url{https://arxiv.org/pdf/2102.08639.pdf}.

\bibitem[Fung et~al.(2011)Fung, Hariharan, Harvey, and Panigrahi]{Fung11}
W.~S. Fung, R.~Hariharan, N.~J. Harvey, and D.~Panigrahi.
\newblock {A General Framework for Graph Sparsification}.
\newblock In \emph{Proceedings of the Forty-Third Annual ACM Symposium on
  Theory of Computing}, STOC '11, page 71–80, New York, NY, USA, 2011.
  Association for Computing Machinery.
\newblock URL \url{https://doi.org/10.1145/1993636.1993647}.

\bibitem[Giscard(2021)]{GISCARD2021112305}
P.-L. Giscard.
\newblock {Counting Walks by their Last Erased Self-Avoiding Polygons Using
  Sieves}.
\newblock \emph{Discrete Mathematics}, 344\penalty0 (4):\penalty0 112305, 2021.
\newblock URL \url{https://doi.org/10.1016/j.disc.2021.112305}.

\bibitem[Guo and Jerrum(2021)]{GJ2021}
H.~Guo and M.~Jerrum.
\newblock {Approximately Counting Bases of Bicircular Matroids}.
\newblock \emph{Combinatorics, Probability and Computing}, 30\penalty0
  (1):\penalty0 124–135, 2021.
\newblock URL \url{https://doi.org/10.1017/S0963548320000292}.

\bibitem[Guo et~al.(2019)Guo, Jerrum, and Liu]{guo2019uniform}
H.~Guo, M.~Jerrum, and J.~Liu.
\newblock {Uniform Sampling Through the Lov{\'a}sz Local Lemma}.
\newblock \emph{Journal of the ACM (JACM)}, 66\penalty0 (3):\penalty0 1--31,
  2019.
\newblock URL \url{https://doi.org/10.1145/3310131}.

\bibitem[Helmuth(2016)]{helmuth2016loop}
T.~Helmuth.
\newblock {Loop-Weighted Walk}.
\newblock \emph{Annales de l’Institut Henri Poincar{\'e} D}, 3\penalty0
  (1):\penalty0 55--119, 2016.
\newblock URL \url{https://arxiv.org/pdf/1410.3119}.

\bibitem[Hough et~al.(2006)Hough, Krishnapur, Peres, and Vir\'ag]{HKPV06}
J.~Hough, M.~Krishnapur, Y.~Peres, and B.~Vir\'ag.
\newblock {Determinantal Processes and Independence}.
\newblock \emph{Probab. Surveys}, 3:\penalty0 206--229, 2006.
\newblock URL \url{https://doi.org/10.1214/154957806000000078}.

\bibitem[Jerrum(2024)]{jerrum2021fundamentals}
M.~Jerrum.
\newblock Fundamentals of partial rejection sampling.
\newblock \emph{Probability Surveys}, 21:\penalty0 171--199, 2024.
\newblock URL \url{https://arxiv.org/pdf/2106.07744.pdf}.

\bibitem[Kassel(2015)]{Kassel15}
A.~Kassel.
\newblock {Learning About Critical Phenomena from Scribbles and Sandpiles}.
\newblock \emph{ESAIM: Proc.}, 51:\penalty0 60--73, 2015.
\newblock URL \url{https://doi.org/10.1051/proc/201551004}.

\bibitem[Kassel and Kenyon(2017)]{KK2017}
A.~Kassel and R.~Kenyon.
\newblock {Random Curves on Surfaces Induced from the Laplacian Determinant}.
\newblock \emph{Ann. Probab.}, 45\penalty0 (2):\penalty0 932--964, 03 2017.
\newblock URL \url{https://doi.org/10.1214/15-AOP1078}.

\bibitem[Kassel and L{\'e}vy(2021)]{kassel2021covariant}
A.~Kassel and T.~L{\'e}vy.
\newblock {Covariant Zymanzik Identities}.
\newblock \emph{Probability and Mathematical Physics}, 2\penalty0 (3):\penalty0
  419--475, 2021.
\newblock URL \url{https://doi.org/10.2140/pmp.2021.2.419}.

\bibitem[Kaufman et~al.(2022)Kaufman, Kyng, and Sold\'a]{KKS22}
T.~Kaufman, R.~Kyng, and F.~Sold\'a.
\newblock {Scalar and Matrix Chernoff Bounds from $\ell_\infty$-Independence}.
\newblock In \emph{Proceedings of the 2022 Annual ACM-SIAM Symposium on
  Discrete Algorithms (SODA)}, pages 3732--3753, 2022.
\newblock URL \url{https://arxiv.org/abs/2111.02177}.

\bibitem[Kenyon(2011)]{kenyon2011}
R.~Kenyon.
\newblock {Spanning Forests and the Vector Bundle Laplacian}.
\newblock \emph{Ann. Probab.}, 39\penalty0 (5):\penalty0 1983--2017, 09 2011.
\newblock URL \url{https://doi.org/10.1214/10-AOP596}.

\bibitem[Kenyon(2019)]{kenyon2019}
R.~Kenyon.
\newblock {Determinantal Spanning Forests on Planar Graphs}.
\newblock \emph{Ann. Probab.}, 47\penalty0 (2):\penalty0 952--988, 03 2019.
\newblock URL \url{https://doi.org/10.1214/18-AOP1276}.

\bibitem[Krattenthaler(2006)]{krattenthaler2006theory}
C.~Krattenthaler.
\newblock {The Theory of Heaps and the Cartier-Foata Monoid}.
\newblock \emph{{Appendix of the electronic edition of Probl\`emes
  combinatoires de commutation et r{\'e}arrangements}}, 2006.
\newblock URL \url{https://www.mat.univie.ac.at/~kratt/artikel/heaps.pdf}.

\bibitem[Kyng and Song(2018)]{KyngSong}
R.~Kyng and Z.~Song.
\newblock {A Matrix Chernoff Bound for Strongly Rayleigh Distributions and
  Spectral Sparsifiers from a few Random Spanning Trees}.
\newblock In \emph{2018 IEEE 59th Annual Symposium on Foundations of Computer
  Science (FOCS)}, pages 373--384, 2018.
\newblock URL \url{https://arxiv.org/abs/1810.08345}.

\bibitem[Last and Penrose(2017)]{LP2017}
G.~Last and M.~Penrose.
\newblock \emph{{Lectures on the Poisson Process}}.
\newblock Institute of Mathematical Statistics Textbooks. Cambridge University
  Press, 2017.
\newblock URL \url{https://doi.org/10.1017/9781316104477}.

\bibitem[Lawler and Limic(2010)]{LaLi10}
G.~Lawler and V.~Limic.
\newblock \emph{Random Walk: A Modern Introduction}.
\newblock Cambridge University Press, 2010.
\newblock URL \url{https://www.math.uchicago.edu/~lawler/srwbook.pdf}.

\bibitem[Lawler and Werner(2004)]{lawler2004brownian}
G.~F. Lawler and W.~Werner.
\newblock {The Brownian Loop Soup}.
\newblock \emph{Probability theory and related fields}, 128\penalty0
  (4):\penalty0 565--588, 2004.
\newblock URL \url{https://doi.org/10.1007/s00440-003-0319-6}.

\bibitem[Le~Jan(2011)]{LeJan11}
Y.~Le~Jan.
\newblock {Markov Paths, Loops and Fields}.
\newblock \emph{Lect. Notes Math.}, 2026:\penalty0 124, 2011.
\newblock URL \url{https://doi.org/10.1007/978-3-642-21216-1}.

\bibitem[Le~Jan(2022)]{lejan:hal-03655583}
Y.~Le~Jan.
\newblock {Markov Chains and Random Fields}.
\newblock Aug. 2022.
\newblock URL \url{https://hal.science/hal-03655583}.
\newblock working paper or preprint.

\bibitem[Levin and Peres(2017)]{LePe2017}
D.~Levin and Y.~Peres.
\newblock \emph{{Markov Chains and Mixing Times}}, volume 107.
\newblock American Mathematical Soc., 2017.
\newblock URL \url{https://pages.uoregon.edu/dlevin/MARKOV/markovmixing.pdf}.

\bibitem[Lieb and Loss(1993)]{LiLo93}
E.~Lieb and M.~Loss.
\newblock {Fluxes, Laplacians, and Kasteleyn’s theorem}.
\newblock In \emph{Statistical Mechanics: Selecta of Elliott H. Lieb}, pages
  457--483. Springer, 1993.
\newblock URL \url{https://arxiv.org/abs/cond-mat/9209031}.

\bibitem[Lyons and Peres(2017)]{lyons_peres_2017}
R.~Lyons and Y.~Peres.
\newblock \emph{Probability on Trees and Networks}.
\newblock Cambridge Series in Statistical and Probabilistic Mathematics.
  Cambridge University Press, 2017.
\newblock URL \url{https://rdlyons.pages.iu.edu/prbtree/}.

\bibitem[Marchal(1999)]{Marchal99}
P.~Marchal.
\newblock {Loop-Erased Random Walks, Spanning Trees and Hamiltonian Cycles}.
\newblock \emph{Electron. Commun. Probab.}, 5:\penalty0 no. 4, 39--50, 1999.
\newblock URL \url{http://ecp.ejpecp.org/article/view/1016}.

\bibitem[Marchal(2001)]{Marchal_heaps}
P.~Marchal.
\newblock {Random Walks and Heaps of Cycles}.
\newblock Algorithms Seminar, April 23, 2001. Summary by Cyril Banderier, 2001.
\newblock URL \url{https://algo.inria.fr/seminars/sem00-01/marchal.html}.
\newblock Accessed: 2024-03-20.

\bibitem[Pemantle(1991)]{Pemantle91}
R.~Pemantle.
\newblock {Choosing a Spanning Tree for the Integer Lattice Uniformly}.
\newblock \emph{The Annals of Probability}, 19\penalty0 (4):\penalty0
  1559--1574, 1991.
\newblock ISSN 00911798.
\newblock URL \url{http://www.jstor.org/stable/2244527}.

\bibitem[Pilavci et~al.(2020)Pilavci, Amblard, Barthelmé, and
  Tremblay]{PABT2020}
Y.~Pilavci, P.-O. Amblard, S.~Barthelmé, and N.~Tremblay.
\newblock Smoothing graph signals via random spanning forests.
\newblock In \emph{ICASSP 2020 - 2020 IEEE International Conference on
  Acoustics, Speech and Signal Processing (ICASSP)}, pages 5630--5634, 2020.
\newblock URL \url{https://ieeexplore.ieee.org/document/9054497}.

\bibitem[Pilavc{\i} et~al.(2021)Pilavc{\i}, Amblard, Barthelm{\'e}, and
  Tremblay]{pilavci2020}
Y.~Pilavc{\i}, P.-O. Amblard, S.~Barthelm{\'e}, and N.~Tremblay.
\newblock {Graph Tikhonov Regularization and Interpolation via Random Spanning
  Forests}.
\newblock \emph{IEEE transactions on Signal and Information Processing over
  Networks}, 7:\penalty0 359--374, 2021.
\newblock URL \url{https://arxiv.org/pdf/2011.10450.pdf}.

\bibitem[Pitman(2006)]{pitman2006combinatorial}
J.~Pitman.
\newblock \emph{{Combinatorial Stochastic Processes: Ecole d'Et{\'e} de
  Probabilit{\'e}s de Saint-Flour xxxii-2002}}.
\newblock Springer, 2006.
\newblock URL \url{https://doi.org/10.1007/b11601500}.

\bibitem[Pitman and Tang(2018)]{PiTa18}
J.~Pitman and W.~Tang.
\newblock {Tree Formulas, Mean First Passage Times and Kemeny's Constant of a
  Markov Chain}.
\newblock \emph{Bernoulli}, 24\penalty0 (3):\penalty0 1942 -- 1972, 2018.
\newblock URL \url{https://doi.org/10.3150/16-BEJ916}.

\bibitem[Ruelle(2021)]{ruelle2021sandpile}
P.~Ruelle.
\newblock {Sandpile Models in The Large}.
\newblock \emph{Frontiers in Physics}, 9:\penalty0 641966, 2021.
\newblock URL \url{https://doi.org/10.3389/fphy.2021.641966}.

\bibitem[Singer and Wu(2016)]{SingWu16}
A.~Singer and H.-T. Wu.
\newblock {Spectral Convergence of the Connection Laplacian from Random
  Samples}.
\newblock \emph{Information and Inference: A Journal of the IMA}, 6\penalty0
  (1):\penalty0 58--123, 12 2016.
\newblock URL \url{https://doi.org/10.1093/imaiai/iaw016}.

\bibitem[Symanzik(1969)]{Symanzik69}
K.~Symanzik.
\newblock {Euclidean Quantum Field Theory}.
\newblock \emph{Rend. Scu. Int. Fis. Enrico Fermi}, 45:\penalty0 152--226,
  1969.

\bibitem[Sznitman(2012)]{Sznitman2012}
A.-S. Sznitman.
\newblock \emph{{Topics in Occupation Times and Gaussian Free Fields}}.
\newblock EMS Textbooks in Mathematics. European Mathematical Society, 2012.
\newblock URL \url{https://doi.org/10.1017/9781316672815}.

\bibitem[Viennot()]{ViennotCourse}
G.~X. Viennot.
\newblock {The Art of Bijective Combinatorics. Part II: Commutations and Heaps
  of Pieces. (with Interactions in Physics, Mathematics and Computer Science)}.
\newblock \url{https://www.viennot.org/abjc2-ch5.html}.
\newblock Online; accessed 27 February 2024.

\bibitem[Viennot(2006)]{viennot2006heaps}
G.~X. Viennot.
\newblock {Heaps of Pieces, I: Basic Definitions and Combinatorial Lemmas}.
\newblock In \emph{{Combinatoire {\'e}num{\'e}rative: Proceedings of the
  “Colloque de combinatoire {\'e}num{\'e}rative”, held at Universit{\'e} du
  Qu{\'e}bec {\`a} Montr{\'e}al, May 28--June 1, 1985}}, pages 321--350.
  Springer, 2006.
\newblock URL \url{https://doi.org/10.1007/BFb0072524}.

\bibitem[Wilson(1996)]{Wilson96}
D.~Wilson.
\newblock {Generating Random Spanning Trees More Quickly than the Cover Time}.
\newblock In \emph{Proceedings of the Twenty-Eighth Annual ACM Symposium on
  Theory of Computing}, page 296–303. Association for Computing Machinery,
  1996.
\newblock URL \url{https://doi.org/10.1145/237814.237880}.

\end{thebibliography}

\end{document}